\newtheorem{theorem}{Theorem}
\newtheorem{definition}[theorem]{Definition}
\newtheorem{lemma}[theorem]{Lemma}
\newtheorem{problem}[theorem]{Problem}
\newtheorem{proposition}[theorem]{Proposition}
\newtheorem{remark}[theorem]{Remark}
\newcommand{\qipeb}{QIP$_{\operatorname{EB}}(2)$}
\newenvironment{proof}[1][Proof]{\noindent\textbf{#1.} }{\ \rule{0.5em}{0.5em}}
\begin{document}

\title{\texorpdfstring{Schr\"odinger}{Schrodinger} as a Quantum Programmer:\newline
Estimating Entanglement via Steering}

\author{Aby Philip}
\orcid{0000-0002-4608-7522}
\author{Soorya Rethinasamy}
\orcid{0000-0002-8849-3748}
\affiliation{School of Applied and Engineering Physics,
Cornell University, Ithaca, New York 14850, USA}

\author{Vincent Russo}
\affiliation{Unitary Fund}

\author{Mark M. Wilde}
\affiliation{School of Electrical and Computer Engineering,
Cornell University, Ithaca, New York 14850, USA}
\affiliation{School of Applied and Engineering Physics,
Cornell University, Ithaca, New York 14850, USA}
\orcid{0000-0002-3916-4462}

\begin{abstract}
    Quantifying entanglement is an important task by which the resourcefulness of a quantum state can be measured. Here, we develop a quantum algorithm that tests for and quantifies the separability of a general bipartite state by using the quantum steering effect, the latter initially discovered by Schr\"odinger. Our separability test consists of a distributed quantum computation involving two parties: a computationally limited client, who prepares a purification of the state of interest, and a computationally unbounded server, who tries to steer the reduced systems to a probabilistic ensemble of pure product states. To design a practical algorithm, we replace the role of the server with a combination of parameterized unitary circuits and classical optimization techniques to perform the necessary computation. The result is a variational quantum steering algorithm (VQSA), a modified separability test that is implementable on quantum computers that are available today. We then simulate our VQSA on noisy quantum simulators and find favorable convergence properties on the examples tested. We also develop semidefinite programs, executable on classical computers, that benchmark the results obtained from our VQSA. Thus, our findings provide a meaningful connection between steering, entanglement, quantum algorithms, and quantum computational complexity theory. They also demonstrate the value of a parameterized mid-circuit measurement in a VQSA.
\end{abstract}

\maketitle
\tableofcontents

\section{Introduction}
    Entanglement is a unique feature of quantum mechanics, initially brought to light by Einstein, Podolsky, and Rosen~\cite{Einstein1935}. Many years later, the modern definition of entanglement was given~\cite{W89}, which we recall now.
    A bipartite quantum state $\sigma_{AB}$ of two spatially separated systems $A$ and $B$ is separable (unentangled) if it can be written as a probabilistic mixture of product states~\cite{W89}:
    \begin{equation}
    \label{eqn:sep-state-informal}
    \sigma_{AB}= \sum_{x \in \mathcal{X}} p(x)\, \psi^{x}_{A} \otimes \phi^{x}_{B}    ,
    \end{equation}
     where $\{p (x) \}_{x\in \mathcal{X}}$ is a probability distribution and $\psi^{x}_{A}$ and $\phi^{x}_{B} $ are pure states. The idea here is that the correlations between~$A$ and $B$ can be fully attributed to a classical, inaccessible random variable with probability distribution $\{p (x)\}_{x\in \mathcal{X}}$.
      
     The definition above is straightforward to write down, but it is a different matter to formulate an algorithm to decide if a general state is separable; in fact, it has been proven to be computationally difficult in a variety of frameworks~\cite{G03, Gharibian2010, HMW13, HMW14, GHMW15}. Intuitively, deciding the answer requires performing a search over all possible probabilistic decompositions of the state, and there are too many possibilities to consider. Regardless, determining whether a general state $\rho_{AB}$ is separable or entangled, known as the separability problem, is a fundamental problem of interest relevant to various fields of physics, including condensed matter~\cite{RevModPhys.80.517,cramer2011measuring, laflorencie2016quantum},  quantum gravity~\cite{Takayanagi_2012,bose2017spin,marletto2017gravitationally, Qi2018, Swingle18}, quantum optics~\cite{RevModPhys.92.035005}, and quantum key distribution~\cite{Ekert91, PhysRevLett.113.140501}. In quantum information science, entanglement is the core resource in several basic quantum information processing tasks~\cite{Ekert91, bennett1992communication, bennett1993teleporting}, making the separability problem essential in this field as well.

    Part of the challenge in using entangled states for various tasks is that they are hard to produce and maintain faithfully on any physical platform. The utility of entangled states drops off dramatically the further they are from being perfectly or maximally entangled. Therefore, assessing the quality of entangled states produced becomes an important task, thus motivating the problem of quantifying entanglement~\cite{BDSW96, VPRK97, VP98, Horodecki2009}, in addition to deciding whether entanglement is present.

    To check whether a state is entangled and to quantify its entanglement content experimentally, a rudimentary approach employs state tomography to reconstruct the density matrix and check whether the matrix represents a state that is entangled~\cite{Home2006, Steffen2006}. However, the computational complexity of this method scales exponentially with the number of qubits, thus prohibiting its use on larger states of interest. With the rapid development of quantum computers of increasing size, it is already infeasible to perform tomography to estimate the density matrices describing the states of these computers. It is even more daunting to address the separability problem using various well-known one-sided entanglement tests~\cite{Peres1996, Horodecki1996, W89a, Doherty2004}. This leaves us to seek out alternative methods for addressing the separability problem, and one forward-thinking direction is to employ a quantum computer to do so~\cite{HMW13, HMW14, GHMW15, LRW21}. 

    An approach for addressing the separability problem, which we employ here, involves the quantum steering effect, originally discovered by Schr\"odinger~\cite{Schrodinger1935, Schroedinger1935}. The idea of steering is that if two distant systems are entangled, distinct probabilistic ensembles of states can be prepared on one system by performing distinct measurements on the other system. To describe this phenomenon more precisely, we can employ some elementary notions from quantum mechanics. Let $\psi_{CD}$ be a pure state of two distant quantum systems~$C$ and~$D$, and let $\rho_C = \operatorname{Tr}_D[\psi_{CD}]$ be the reduced state of the system~$C$. Then by performing a measurement on the system $D$, it is possible to realize a probabilistic ensemble $\{(p (z),\psi^z_C)\}_z$ of pure states on the system $C$ that satisfies $\rho_C = \sum_z p (z)\psi^z_C$. Moreover, for each possible probabilistic decomposition of $\rho_C$, a measurement acting on~$D$ can realize this decomposition. Steering has been a topic of interest in recent years, with applications to quantum key distribution~\cite{CS17, UCNG20}, quantum optics~\cite{PhysRevLett.128.200401, PhysRevA.106.042414}, and the foundations of quantum mechanics~\cite{wittmann2012loophole, PhysRevA.91.012112}.

    As suggested above, we can make a non-trivial link between the separability problem and steering, which offers a quantum mechanical method for approaching the former. To see it, recall that
    a purification of the separable state $\sigma_{AB}$ in~\eqref{eqn:sep-state-informal} 
      is a pure state $\varphi_{RAB} $ that satisfies $\operatorname{Tr}_R[\varphi_{RAB}] = \sigma_{AB}$, and consider that one such  choice of the state vector $|\varphi\rangle_{RAB}$ in this case is as follows:
    \begin{equation}
        |\varphi\rangle_{RAB}=\sum_{x\in \mathcal{X}} \sqrt{p (x)}\,|x\rangle_R\otimes |\psi^{x}\rangle_{A} \otimes |\phi^{x}\rangle_{B},
        \label{eq:purif-sep-state}
    \end{equation}
    where $\{ |x\rangle_R\}_{x\in \mathcal{X}}$ is an orthonormal basis. Purifications are not unique, but all other purifications of $\sigma_{AB}$ are related to the one in~\eqref{eq:purif-sep-state} by the action of a unitary operation on the reference system $R$~\cite{NC00}.
    By inspecting~\eqref{eq:purif-sep-state}, we see that the systems $A$ and $B$ can be steered into the probabilistic ensemble $\{(p(x),\psi^{x}_{A} \otimes \phi^{x}_{B})\}_{x\in \mathcal{X}}$ of product states by performing the projective measurement $\{|x\rangle\!\langle x|_R\}_{x\in \mathcal{X}}$ on the reference system~$R$ of $\varphi_{RAB}$. This leads to an idea for testing separability in the general case. If a purification of a general state $\rho_{AB}$ is available and the state $\rho_{AB}$ is indeed separable, then one can a) try to find the unitary that realizes the purification in~\eqref{eq:purif-sep-state} and b) perform the measurement $\{|x\rangle\!\langle x|_R\}_{x\in \mathcal{X}}$ on the reference system~$R$. After receiving the outcome~$x$,  one can finally test whether the reduced state is a product state.
    
    As we will see in more detail later, the basic idea outlined above is at the heart of our method to test whether a state is separable. Additionally, this approach leads to a quantum algorithm and complexity-theoretic statements for quantifying the amount of entanglement in a state. We thus provide a meaningful connection between steering, entanglement, quantum algorithms, and quantum computational complexity theory, which has not been observed hitherto. 

In this paper, we expand on the abovementioned idea to develop various 
separability tests using the quantum steering effect. Our separability test for mixed states consists of a distributed quantum computation involving two parties: a computationally unbounded server, called a prover, which can, in principle, perform any quantum computation imaginable, and a computationally limited client, called a verifier, which can perform time-efficient quantum computations (see Figure~\ref{fig:Max_Sep_Fidelity_QIP_EB}). We prove Theorems~\ref{theorem:qip-eb_msf} and~\ref{theorem:global_cost_func}, which state that the acceptance probabilities of our algorithms, in the ideal case, are directly related to a bonafide entanglement measure, the fidelity of separability. We also employ concepts from quantum computational complexity theory~\cite{watrous2009complexity, VW15} to understand how difficult this test is to perform. Our second contribution results from a modification of our separability test. In an attempt to design a practical algorithm, we replace the prover with a combination of parameterized unitary circuits and classical optimization techniques to perform the necessary computation.  This results in a variational quantum steering algorithm (VQSA) that approximates the aforementioned separability test (see Figure~\ref{fig:Max_Sep_Fidelity_VQA}). The concept of quantum steering is again at the heart of our VQSA, just like the test for separability that it approximates. Interestingly, we prove that the acceptance probability of both tests is related to an entanglement measure called fidelity of separability~\cite{VPRK97, VP98}. We also generalize our separability test and VQSA to the multipartite setting using appropriate definitions of multipartite separability. 
    
    Next, we report the results of simulations of the VQSA on a quantum simulator and find that they show favorable convergence properties. In light of the limited scale and error tolerance of near-term quantum computers, we develop semidefinite programs (SDPs) to approximate the fidelity of separability using positive-partial-transpose (PPT) conditions~\cite{Peres1996, Horodecki1996} and $k$-extendibility~\cite{W89a, Doherty2004} to benchmark the results obtained from our VQSA. As variational quantum algorithms (VQAs), in general, are prone to encountering barren plateaus~\cite{McClean2018}, we also explore how we can mitigate this issue for our algorithms by making use of the ideas presented in~\cite{Cerezo2021a}.

    Our approach is distinct from recent work on quantum algorithms for estimating entanglement.
    For example, VQAs have been used to address this problem by estimating the Hilbert--Schmidt distance~\cite{Consiglio2022}, by creating a zero-sum game using parameterized unitary circuits~\cite{Yin2022}, by employing symmetric extendibility tests~\cite{LRW21}, by estimating logarithmic negativity~\cite{arxiv.2012.14311}, and using the positive map criterion~\cite{arxiv.2012.14311}. VQAs have also been used to estimate the geometric measure of entanglement of multiqubit pure states~\cite{PhysRevApplied.18.024048}.
    The work of~\cite{AM22} is the closest related to ours, but the test used there requires two copies of the state of interest and controlled swap operations for each run of the algorithm, while our VQSA does not require either.
    In contrast, we introduce a paradigm for VQAs involving parameterized mid-circuit measurements, which is the core of our method for estimating entanglement, and we suspect that this approach will be helpful in future work for a wide variety of VQAs. Furthermore, as we show in Theorems~\ref{theorem:qip-eb_msf} and~\ref{theorem:global_cost_func}, the acceptance probabilities of our algorithms, in the ideal case, are directly related to a bonafide entanglement measure, the fidelity of separability.

\section{Results} 

\subsection{Quantum Interactive Proof for Fidelity of Separability}

\label{sec:qip-FoS}
    
    We first introduce our test for the separability of mixed states. Recall that a bipartite state is separable or unentangled if it can be written in the form given in~\eqref{eqn:sep-state-informal}, where  $\left\vert\mathcal{X}\right\vert\leq\text{rank}(\sigma_{AB})^{2}$~\cite{W89,watrous_2018}. 

    \begin{figure}
        \includegraphics[width=\columnwidth]{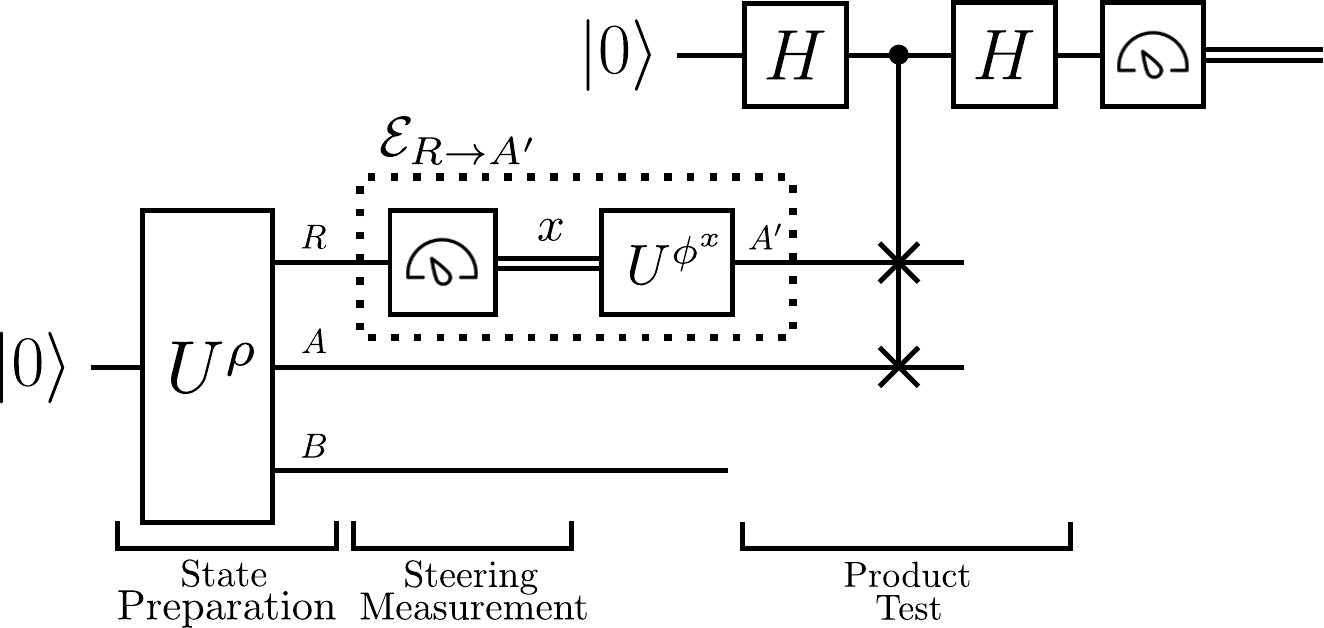}
        \caption{Test for separability of mixed states. 
        The verifier uses a unitary circuit $U^\rho$ to produce the state $\psi_{RAB}$, which is a purification of $\rho_{AB}$. The prover (indicated by the dotted box) applies an entanglement-breaking channel $\mathcal{E}_{R\rightarrow A^{\prime}}$ on $R$ by measuring the rank-one POVM $\{\mu^{x}_{R}\}_{x}$ and then, depending on the outcome $x$, prepares a pure state from the set $\{\phi^{x}_{A^{\prime}}\}_{x}$. The final state is sent to the verifier, who performs a swap test. Theorem~\ref{theorem:qip-eb_msf} states that the maximum acceptance probability of this interactive proof is equal to $\frac{1}{2}(1 + F_{s}(\rho_{AB}))$, i.e., a simple function of the fidelity of separability.}
        \label{fig:Max_Sep_Fidelity_QIP_EB}
    \end{figure}
    
    Our separability test for mixed states consists of a distributed quantum computation involving a prover and a verifier.
    The computation (depicted in Figure~\ref{fig:Max_Sep_Fidelity_QIP_EB}) begins with the verifier preparing a purification $\psi_{RAB}$\ of~$\rho_{AB}$. The verifier sends the system $R$ to a quantum prover, whom, in our model, we restrict to performing entanglement-breaking channels. The prover thus performs an entanglement-breaking channel on the reference system $R$ and sends a system $A^{\prime}$ to the verifier. An entanglement-breaking channel $\mathcal{E}_{R\rightarrow A^{\prime}}$ can always be written as a measure-and-prepare channel~\cite{HSR03}, as follows:
    \begin{equation}\label{eqn:ent_breaking}
        \mathcal{E}_{R\rightarrow A^{\prime}}(\cdot)=\sum_{x\in \mathcal{X}}\operatorname{Tr}\!\left[\mu^{x}_{R}(\cdot)\right]\phi^{x}_{A^{\prime}},
    \end{equation}
    where $\{\mu^{x}_{R}\}_{x\in \mathcal{X}}$ is a rank-one positive operator-valued measure (POVM) and $\{\phi^{x}_{A^{\prime}}\}_{x\in \mathcal{X}}$ is a set of pure states. (Due to the above measure-and-prepare decomposition of an entanglement-breaking channel, we can alternatively think of the prover as being split into two provers, a first who is allowed to perform a general quantum operation, followed by the communication of classical data to a second prover, who then is allowed to perform a general operation before communicating quantum data to the verifier. However, we proceed with the single-prover terminology in what follows.) By performing the measurement portion of the entanglement-breaking channel, the prover has, in essence, steered the verifier's systems $A$ and $B$ to a certain probabilistic ensemble of pure states. After steering the verifier's system, the prover sends system $A^\prime$ to the verifier using the preparation portion of the entanglement-breaking channel. The verifier finally performs a swap test on system $A$ and $A^\prime$ and accepts if and only if the measurement outcome of the swap test is zero. The standard model in quantum computational complexity theory~\cite{watrous2009complexity, VW15} is that the prover is always trying to get the verifier to accept the computation: in this scenario, the prover steers the verifier's systems $A$ and $B$ to an ensemble that has maximum overlap with a product-state ensemble and then sends an appropriate state to pass the swap test with the highest probability possible.
    
    The maximum acceptance probability of the distributed quantum computation detailed above is equal to
    \begin{equation}
    \label{eqn:qip-eb_accept_prob}
        \max_{\mathcal{E}\in\operatorname{EB}_{R\to A^\prime}}\operatorname{Tr}\!\left[\left(\Pi_{A^{\prime}A}^{\operatorname{sym}}\otimes I_{RB}\right)\mathcal{E}_{R\rightarrow A^{\prime}}\left(\psi_{RAB}\right)\right],
    \end{equation}
    where $\Pi_{A^{\prime}A}^{\operatorname{sym}}$ is the projector onto the symmetric subspace of the $A^{\prime}$ and $A$ systems, and  $\operatorname{EB}_{R\to A^\prime}$ denotes the set of all entanglement-breaking channels with input system $R$ and output system $A^\prime$.  We state in Theorem~\ref{theorem:qip-eb_msf} below that the maximum acceptance probability in~\eqref{eqn:qip-eb_accept_prob} can be expressed as a simple function of the fidelity of separability of $\rho_{AB}$, the latter defined as~\cite{VPRK97, VP98}
    \begin{equation}
    \label{eq:max-sep-fid-def}
        F_{s}(\rho_{AB}) \coloneqq \max_{\sigma_{AB}\in \operatorname{SEP}(A:B)} F(\rho_{AB},\sigma_{AB}) ,
    \end{equation}
    where $\operatorname{SEP}\!\left(A\!:\!B\right)$ denotes the set of separable states shared between Alice and Bob and $F(\rho,\sigma) \coloneqq \left\|\sqrt{\rho}\sqrt{\sigma}\right\|_1^2$ is the fidelity of the states $\rho$ and $\sigma$~\cite{Uhlmann1976}. The fidelity of separability is also known as the maximum separable fidelity~\cite{HMW13, HMW14, GHMW15}. With this definition, we state the first key theoretical result of our paper: 
        
    \begin{theorem}
    \label{theorem:qip-eb_msf}
        For a pure state $\psi_{RAB}$, the following equality holds:
        \begin{multline}
        \label{eqn:qip-eb_msf}
            \max_{\mathcal{E}\in\operatorname{EB}_{R\to A^\prime}}\operatorname{Tr}\!\left[\left(\Pi_{A^{\prime}A}^{\operatorname{sym}}\otimes I_{RB}\right)\mathcal{E}_{R\rightarrow A^{\prime}}\left(\psi_{RAB}\right)\right]\\
            =\frac{1 + F_{s}(\rho_{AB})}{2},
        \end{multline}
        where $F_{s}(\rho_{AB})$ is the fidelity of separability of the state $\rho_{AB} = \operatorname{Tr}_{R}[\psi_{RAB}]$.
    \end{theorem}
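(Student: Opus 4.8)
The plan is to reduce the stated maximization to a pure statement about the fidelity of separability, and then to establish the resulting identity by two matching inequalities, each coming from Uhlmann's theorem~\cite{Uhlmann1976} combined with a Cauchy--Schwarz estimate. First I would insert the measure-and-prepare form~\eqref{eqn:ent_breaking} of the entanglement-breaking channel and use the swap-test identity $\operatorname{Tr}[\Pi^{\operatorname{sym}}_{A'A}(\omega_{A'}\otimes\tau_{A})]=\tfrac12(1+\operatorname{Tr}[\omega\tau])$. Applying $\mathcal{E}_{R\to A'}$ to $\psi_{RAB}$ produces $\sum_{x}p(x)\,\phi^{x}_{A'}\otimes\rho^{x}_{AB}$, where $p(x)$ and $\rho^{x}_{AB}$ are the probability and post-measurement state obtained by measuring $\{\mu^{x}_{R}\}$ on $R$; tracing out $B$ and invoking the swap-test identity shows the acceptance probability equals $\tfrac12(1+G)$ with $G:=\sum_{x}p(x)\,\langle\phi^{x}|\rho^{x}_{A}|\phi^{x}\rangle$ and $\rho^{x}_{A}:=\operatorname{Tr}_{B}\rho^{x}_{AB}$. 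It then remains to prove $\max_{\mathcal{E}\in\operatorname{EB}}G=F_{s}(\rho_{AB})$. A structural observation I would record is that, since $\{\mu^{x}_{R}\}$ is rank-one and $\psi_{RAB}$ is pure, each $\rho^{x}_{AB}$ is itself a pure state $\xi^{x}_{AB}$, and by the Schr\"odinger--Hughston--Jozsa--Wootters steering theorem the ensembles $\{(p(x),\xi^{x}_{AB})\}$ realizable this way are precisely the pure-state decompositions of $\rho_{AB}$.

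For the inequality $\max_{\mathcal{E}}G\le F_{s}(\rho_{AB})$, I would fix an arbitrary strategy and set $|b^{x}\rangle_{B}:=(\langle\phi^{x}|_{A}\otimes I_{B})|\tilde\xi^{x}\rangle_{AB}$, where $|\tilde\xi^{x}\rangle:=\sqrt{p(x)}\,|\xi^{x}\rangle$, so that $G=\sum_{x}\langle b^{x}|b^{x}\rangle$. I would then build the normalized separable state $\hat\sigma:=\tfrac1G\sum_{x}\phi^{x}_{A}\otimes|b^{x}\rangle\!\langle b^{x}|_{B}$, purify both $\rho_{AB}$ and $\hat\sigma$ on a common reference indexed by $x$ (the former via the purification induced by the measurement, namely $|\Theta\rangle:=\sum_{x}|x\rangle|\tilde\xi^{x}\rangle$), and compute that the inner product of these two purifications equals $\sqrt{G}$. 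The easy direction of Uhlmann's theorem then gives $F_{s}(\rho_{AB})\ge F(\rho_{AB},\hat\sigma)\ge G$, and taking the maximum over strategies yields the bound.

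For the reverse inequality $F_{s}(\rho_{AB})\le\max_{\mathcal{E}}G$, I would start from an optimal separable state $\sigma^{\star}=\sum_{x}q(x)\,\pi^{x}_{AB}$ with pure product components $\pi^{x}_{AB}=|\pi^{x}\rangle\!\langle\pi^{x}|$, and use Uhlmann's theorem to obtain a purification of $\sigma^{\star}$ achieving the fidelity, whose reference, measured in the natural basis, steers $\sigma^{\star}$ to $\{\pi^{x}\}$. Measuring that same basis on the purification of $\rho_{AB}$ produces a pure-state decomposition $\{(p(x),\xi^{x})\}$ of $\rho_{AB}$ (realizable by the prover through the steering theorem), and writing out the Uhlmann overlap gives $\sqrt{F_{s}}=\big|\sum_{x}\sqrt{q(x)p(x)}\,\langle\xi^{x}|\pi^{x}\rangle\big|$. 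Bounding each $|\langle\xi^{x}|\pi^{x}\rangle|\le\sqrt{\Lambda(\xi^{x})}$, where $\Lambda(\xi)$ is the largest squared overlap of $\xi$ with a pure product state, and applying Cauchy--Schwarz yields $F_{s}\le\sum_{x}p(x)\Lambda(\xi^{x})\le\max_{\mathcal{E}}G$. Combining the two inequalities proves the identity, since $\max_{\mathcal{E}}G$ is attained by choosing each $\phi^{x}$ to be a dominant eigenvector of $\rho^{x}_{A}$.

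I expect the main obstacle to be the bookkeeping that makes the two optimizations genuinely coincide: one must verify that the prover's rank-one entanglement-breaking strategies, the pure-state decompositions of $\rho_{AB}$ obtained by steering, and the purifications of separable states entering Uhlmann's theorem all describe the same underlying object. The subtlety is quantitative as well as structural: the naive choice $\sigma=\sum_{x}p(x)\pi^{x}$ only gives the weaker estimate $F(\rho_{AB},\sigma)\ge\big(\sum_{x}p(x)\sqrt{\Lambda(\xi^{x})}\big)^{2}$, so the delicate point is to pick the reference system and the Cauchy--Schwarz grouping so that the \emph{average} of $\Lambda$, rather than the square of its average, appears; reconciling a possible mismatch between $\dim R$ and the rank of the optimal $\sigma^{\star}$ through the steering theorem is the accompanying technical step.
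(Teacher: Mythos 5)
Your proposal is correct, and it uses the same essential ingredients as the paper --- the swap-test reduction of the acceptance probability to $\tfrac{1}{2}(1+G)$ with $G=\sum_x p(x)\langle\phi^x|\psi^x_A|\phi^x\rangle$, the Schr\"odinger--HJW correspondence between rank-one POVMs on $R$ and pure-state decompositions of $\rho_{AB}$, Uhlmann's theorem, and a triangle-plus-Cauchy--Schwarz estimate --- but it organizes them differently in one meaningful place. The paper routes the argument through a standalone convex-roof formula $F_s(\rho_{AB})=\max\sum_x p(x)F_s(\psi^x_{AB})$ (its Appendix B, following Streltsov et al.) together with $F_s(\psi_{AB})=\|\psi_A\|_\infty$ (its Appendix C); the achievability half of that formula is proved by \emph{saturating} the triangle and Cauchy--Schwarz inequalities, which requires choosing the ensemble weights proportionally to $q(x)\left\vert\langle\varphi^x|\psi^x\rangle_A|\phi^x\rangle_B\right\vert^2$ and tuning relative phases. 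Your reverse inequality $F_s\le\max_{\mathcal{E}}G$ is essentially the same computation as the paper's upper-bound direction (Uhlmann overlap expanded in a steering basis, then triangle inequality, Cauchy--Schwarz, and $|\langle\xi^x|\pi^x\rangle|^2\le\Lambda(\xi^x)=\|\xi^x_A\|_\infty$). Where you genuinely depart is the forward inequality $\max_{\mathcal{E}}G\le F_s$: instead of a saturation argument, you build from the prover's strategy an explicit separable state $\hat\sigma=\tfrac{1}{G}\sum_x\phi^x_A\otimes|b^x\rangle\!\langle b^x|_B$ and exhibit a pair of purifications whose overlap is exactly $\sqrt{G}$, so the easy direction of Uhlmann's theorem immediately gives $F_s(\rho_{AB})\ge F(\rho_{AB},\hat\sigma)\ge G$. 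This buys a cleaner, saturation-free achievability step (no phase tuning, no special choice of weights), at the cost of not isolating the convex-roof formula as a reusable lemma, which the paper exploits again for Theorem~\ref{theorem:global_cost_func} and the multipartite generalization. Your flagged technical points are resolved the way you anticipate: the HJW theorem permits rank-one POVMs with more outcomes than $\dim R$, which is exactly what reconciles the purifying dimension with the up-to-$(|A||B|)^2$ product terms of the optimal $\sigma^\star$, and your construction of $\hat\sigma$ is precisely what avoids the weaker bound $\bigl(\sum_x p(x)\sqrt{\Lambda(\xi^x)}\bigr)^2$ that a naive choice of separable state would give.
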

    
    See the first part of Section~\ref{sec:methods} for a brief overview of the proof and Appendix~\ref{appendix:proof_swap-test-eb-channel} for a detailed proof. Appendices~\ref{appendix:proof_alt_streltsov} and \ref{appendix:proof_max-sep-fid-inf-norm} recall some auxiliary results that support the proof in Appendix~\ref{appendix:proof_swap-test-eb-channel}. With this theorem, we have established a separability test for mixed states.
        

\subsection{Variational Quantum Steering Algorithm for Fidelity of Separability}
    
    We want to point out two important aspects of our separability test from Section~\ref{sec:qip-FoS}. First, note that the swap test at the end of the computation essentially leads to a measure of overlap between the state of the verifier's system and the state provided by the prover. The other important point is that, in the real world, no computationally unbounded quantum prover is available to provide the ideal states required for the tests. 

    Taking both these points into consideration, we modify the computational scenario in Figure~\ref{fig:Max_Sep_Fidelity_QIP_EB} to a)  measure the necessary overlaps directly and b) make use of quantum variational techniques~\cite{Cerezo2021} (parameterized unitary circuits and classical optimization of parameters) to approximate the actions of a computationally unbounded prover. The resulting procedure also tests and quantifies the separability of a given state by estimating its fidelity of separability. This procedure is a different quantum variational technique called a variational quantum steering algorithm (VQSA). As can be seen in Figure~\ref{fig:Max_Sep_Fidelity_VQA}, quantum steering is at the core of the VQSA via the use of a parameterized mid-circuit measurement. 

    \begin{figure}
        \includegraphics[width=\columnwidth]{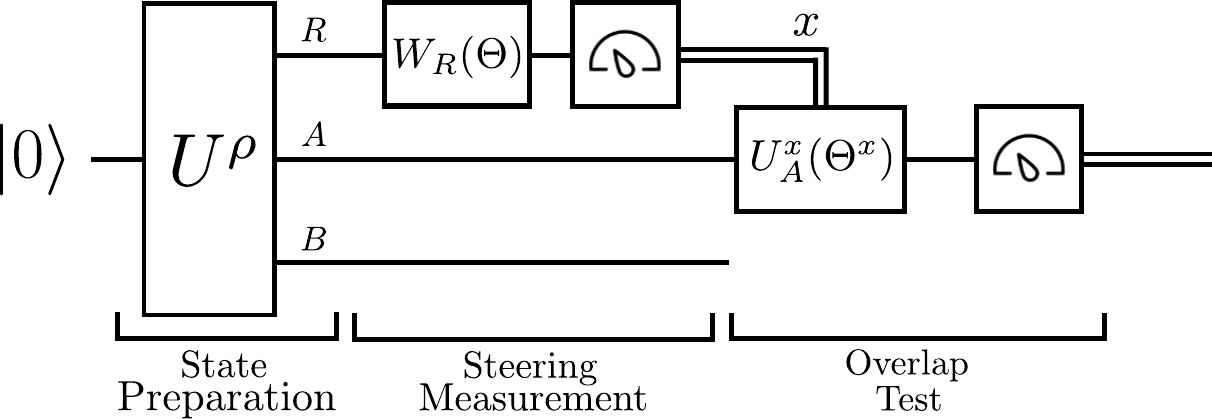}
        \caption{Quantum part of the VQSA to estimate the fidelity of separability $F_s(\rho_{AB})$. 
        The unitary circuit $U^\rho$ prepares the state $\psi_{RAB}$, which is a purification of $\rho_{AB}$. The parameterized circuit $W_R(\Theta)$ acts on $R$ to evolve $\psi_{RAB}$ to another purification of $\rho_{AB}$. The following measurement, labeled ``steering measurement,'' steers the systems $AB$ to be in a pure state~$\psi_{AB}^{x}$ if the measurement outcome~$x$ occurs. Conditioned on the outcome~$x$, the final parameterized circuit~$U^{x}_{A}(\Theta^x)$ and the subsequent measurement accepts with a maximum probability of $F_s(\rho_{AB})$.} 
        \label{fig:Max_Sep_Fidelity_VQA}
    \end{figure}

    Our VQSA is structured as follows. Let $\rho_{AB}$ denote the state for which we want to estimate the fidelity of separability, and let $\psi_{RAB}$ be a purification of it, which results from the action of the unitary operator $U^\rho$ on the all-zeros pure state $|0\rangle\!\langle 0|$. 
    Once we have $\psi_{RAB}$, we can attempt to access all possible pure-state decompositions $\{(p(x),\psi_{AB}^{x})\}_{x\in \mathcal{X}}$ of $\rho_{AB}$ by acting on system~$R$ with unitary operations. We use the first parameterized unitary $W_R(\Theta)$.  To ensure that we have a sufficient number of measurement outcomes (to cover the possible case when $|\mathcal{X}| = \text{rank}(\rho_{AB})^2$), we can prepare some ancilla qubits in the all-zeros state of a system~$R'$ and act with $W$ on $R$ and $R'$. However, without loss of generality, these extra qubits can be grouped as part of an overall reference system, relabeled as $R$. 
    
    After the action of $W_R(\Theta)$, the reference system is measured in the standard basis, and based on the outcome $x$, the post-measurement state of the system $AB$ is a pure state $\psi^x_{AB}$. We then estimate the maximum eigenvalue of the reduced state~$\psi^x_{A}$: this can be accomplished by performing a parameterized unitary $U^{x}_{A}(\Theta^x)$, based on the outcome $x$, on the reduced state~$\psi_A^x$, measuring all qubits of $A$ in the computational basis, and accepting if the all-zeros outcome occurs.
    
    Using a hybrid quantum-classical optimization loop, we can maximize the acceptance probability to estimate the value of the fidelity of separability. The quantum part of this VQSA is summarized in Figure~\ref{fig:Max_Sep_Fidelity_VQA}.

    \begin{theorem}
    \label{theorem:global_cost_func}
        If the parameterized unitary circuits involved in the quantum part of the VQSA, summarized in Figure~\ref{fig:Max_Sep_Fidelity_VQA}, can express all possible unitary operators of their respective systems, then the maximum acceptance probability of the quantum circuit is equal to $F_s(\rho_{AB})$.
    \end{theorem}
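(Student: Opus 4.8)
The plan is to compute the acceptance probability of the circuit in Figure~\ref{fig:Max_Sep_Fidelity_VQA} as an explicit function of the circuit parameters, optimize it in two decoupled stages, and then recognize the resulting expression as the very quantity that Theorem~\ref{theorem:qip-eb_msf} already equates (up to the swap-test factor) with $F_s(\rho_{AB})$.

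First I would fix the parameters and trace through the circuit. After $U^\rho$ and $W_R(\Theta)$, the global state is the purification $\big(W_R(\Theta)\otimes I_{AB}\big)\psi_{RAB}\big(W_R(\Theta)\otimes I_{AB}\big)^\dagger$ of $\rho_{AB}$. Measuring $R$ in the standard basis yields outcome $x$ with probability $p(x)=\operatorname{Tr}\!\big[(|x\rangle\!\langle x|_R\otimes I_{AB})W_R(\Theta)\psi_{RAB}W_R(\Theta)^\dagger\big]$, and since $\psi_{RAB}$ is pure and $R$ is projected onto a rank-one state, the post-measurement state $\psi^x_{AB}$ on $AB$ is pure. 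Conditioned on $x$, applying $U^x_A(\Theta^x)$ to $A$, measuring all of $A$ in the computational basis, and accepting on the all-zeros outcome succeeds with conditional probability $\langle 0|_A U^x_A(\Theta^x)\,\psi^x_A\,U^x_A(\Theta^x)^\dagger|0\rangle_A$, where $\psi^x_A=\operatorname{Tr}_B[\psi^x_{AB}]$ (the $B$ system is traced out by the measurement on $A$). Hence the total acceptance probability is $\sum_x p(x)\,\langle 0|_A U^x_A(\Theta^x)\psi^x_A U^x_A(\Theta^x)^\dagger|0\rangle_A$.

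Next I would optimize in two stages. Because the conditional unitary $U^x_A(\Theta^x)$ may depend on the outcome $x$ and, by assumption, realizes any unitary on $A$, the maximization over $\{U^x_A\}$ splits across outcomes; for each $x$, $\max_{U}\langle 0|U\psi^x_A U^\dagger|0\rangle=\lambda_{\max}(\psi^x_A)=\|\psi^x_A\|_\infty$, since $U^\dagger|0\rangle$ ranges over all unit vectors. This leaves $\sum_x p(x)\|\psi^x_A\|_\infty$. I would then optimize over $W_R(\Theta)$: by the expressibility assumption together with the Schr\"odinger--HJW theorem relating purifications to ensembles, as $W_R(\Theta)$ ranges over all unitaries on $R$ (enlarged by the ancilla $R'$ so that $\dim R\geq\operatorname{rank}(\rho_{AB})^2$), the induced ensembles $\{(p(x),\psi^x_{AB})\}$ range over all pure-state decompositions of $\rho_{AB}$. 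The maximum acceptance probability is therefore $\max_{\{p(x),\psi^x_{AB}\}}\sum_x p(x)\|\psi^x_A\|_\infty$, the maximum being over pure-state decompositions of $\rho_{AB}$.

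Finally, I would identify this decomposition-optimized quantity with $F_s(\rho_{AB})$. This is precisely the expression that appears, after stripping off the $\tfrac12(1+\cdot)$ contributed by the swap test, in the proof of Theorem~\ref{theorem:qip-eb_msf}: an optimal entanglement-breaking channel steers $AB$ into a pure-state ensemble and prepares on $A^{\prime}$ the top eigenvector of $\psi^x_A$, so the swap-test overlap term equals $\sum_x p(x)\|\psi^x_A\|_\infty$, which Theorem~\ref{theorem:qip-eb_msf} certifies to be $F_s(\rho_{AB})$ (equivalently, I would cite the characterization in Appendix~\ref{appendix:proof_max-sep-fid-inf-norm} that $F_s(\rho_{AB})=\max_{\{p(x),\psi^x_{AB}\}}\sum_x p(x)\|\operatorname{Tr}_B[\psi^x_{AB}]\|_\infty$). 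The main obstacle is not any single inequality but the careful bookkeeping of the parameterized \emph{mid-circuit} measurement: one must verify that the overall acceptance probability genuinely factorizes as $\sum_x p(x)\times(\text{conditional acceptance})$, that the outcome-dependent optimizations over $\{U^x_A\}$ decouple across $x$, and that ranging over $W_R(\Theta)$ exhausts all pure-state decompositions, after which the link to $F_s(\rho_{AB})$ is inherited from Theorem~\ref{theorem:qip-eb_msf}.
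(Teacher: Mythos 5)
Your proposal is correct and follows essentially the same route as the paper's own proof: track the state through the circuit to obtain the acceptance probability $\sum_x p(x)\,\langle 0|U^x_A \psi^x_A (U^x_A)^\dagger|0\rangle$, decouple the optimization so that the conditional unitaries yield $\left\Vert \psi^x_A\right\Vert_\infty$ for each outcome and the unitary on $R$ ranges over all pure-state decompositions of $\rho_{AB}$, and then invoke the characterization $F_s(\rho_{AB}) = \max_{\{(p(x),\psi^x_{AB})\}}\sum_x p(x)\left\Vert \psi^x_A\right\Vert_\infty$ from Proposition~\ref{prop:max-sep-fid-inf-norm}. Your explicit appeal to the Schr\"odinger--HJW theorem (and the dimension count on $R$) to justify that varying $W_R(\Theta)$ exhausts all decompositions makes rigorous a step the paper states without elaboration, but the argument is the same.
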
 
    
    See Appendix~\ref{appendix:step_by_step} for a detailed proof.

\subsection{Benchmarking Semidefinite Programs and Examples}

   \begin{figure}
        \includegraphics[width=\columnwidth]{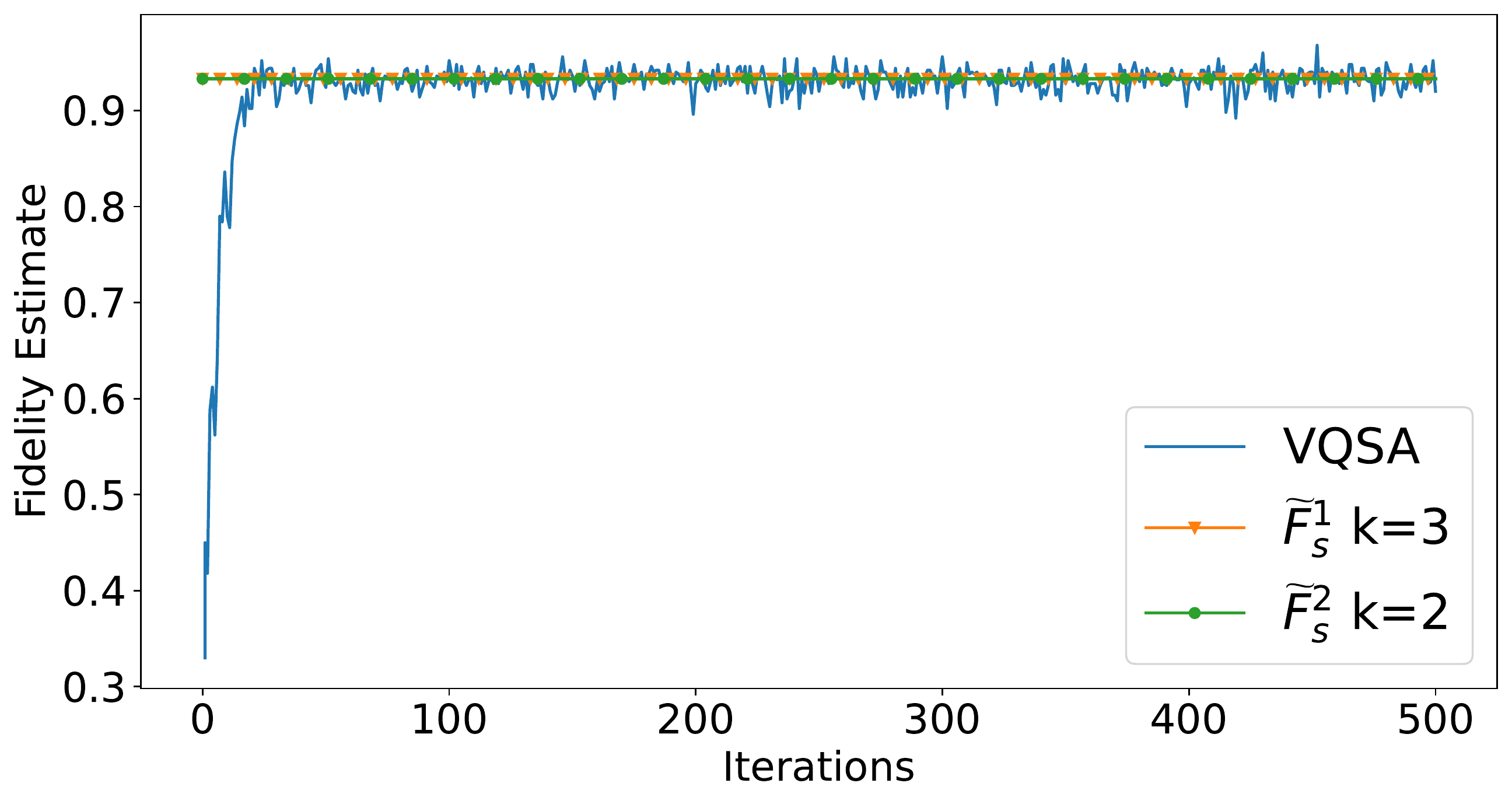}
        \caption{Fidelity of separability calculated for a ($3/4$,$1/4$) classical mixture of $|\Phi^+\rangle$ and $|\Phi^-\rangle$ using our VQSA (blue line). The algorithm converges to 0.93, which agrees with the value obtained using the benchmarks $\widetilde{F}_s^1$ and $\widetilde{F}_s^2$.}
        \label{fig:Max_Sep_Fidelity_Bell}
    \end{figure}

    Since our algorithms will be running on near-term quantum computers with limited scale and error tolerance, we develop semidefinite programs (SDPs) to benchmark the results from our VQSA because the ideal outcomes can be estimated classically for small numbers of qubits. Our benchmarks $\widetilde{F}_{s}^{1}(\rho_{AB}, k)$ and $\widetilde{F}_{s}^{2}(\rho_{AB}, k)$ are based on the positive partial transpose (PPT) and $k$-extendibility hierarchy. See details in Appendices~\ref{appendix:ppt-k-state-sdp} and \ref{appendix:proof_swap-test-ppt-k-channel-sdp}.

    We now present an example simulation of our VQSA to demonstrate that it can estimate the fidelity of separability. For our first example, we take the state of interest $\rho_{AB}$ to be a ($3/4$,$1/4$) probabilistic mixture of two maximally entangled states, $|\Phi^+\rangle = \sqrt{1/2}(|00\rangle+|11\rangle)$ and $|\Phi^-\rangle = \sqrt{1/2}(|00\rangle-|11\rangle)$, so that
    \begin{equation}
        \rho_{AB} = \frac{3}{4} |\Phi^+\rangle\!\langle \Phi^+| + \frac{1}{4} |\Phi^-\rangle\!\langle \Phi^-|.
    \end{equation}
    Systems $R$, $A$, and $B$ of the purification of $\rho_{AB}$ contain one qubit each. See Figure~\ref{fig:Max_Sep_Fidelity_Bell} for the results. We use the benchmarks and VQSA to estimate the fidelity of separability as $\approx$ 0.93. We evaluate these benchmarks for different levels of the $k$-extendibility hierarchy. See Appendix~\ref{appendix:simulations} for more examples and Appendix~\ref{appendix:software} for details about the code we developed.

   \begin{figure}
        \includegraphics[width=\columnwidth]{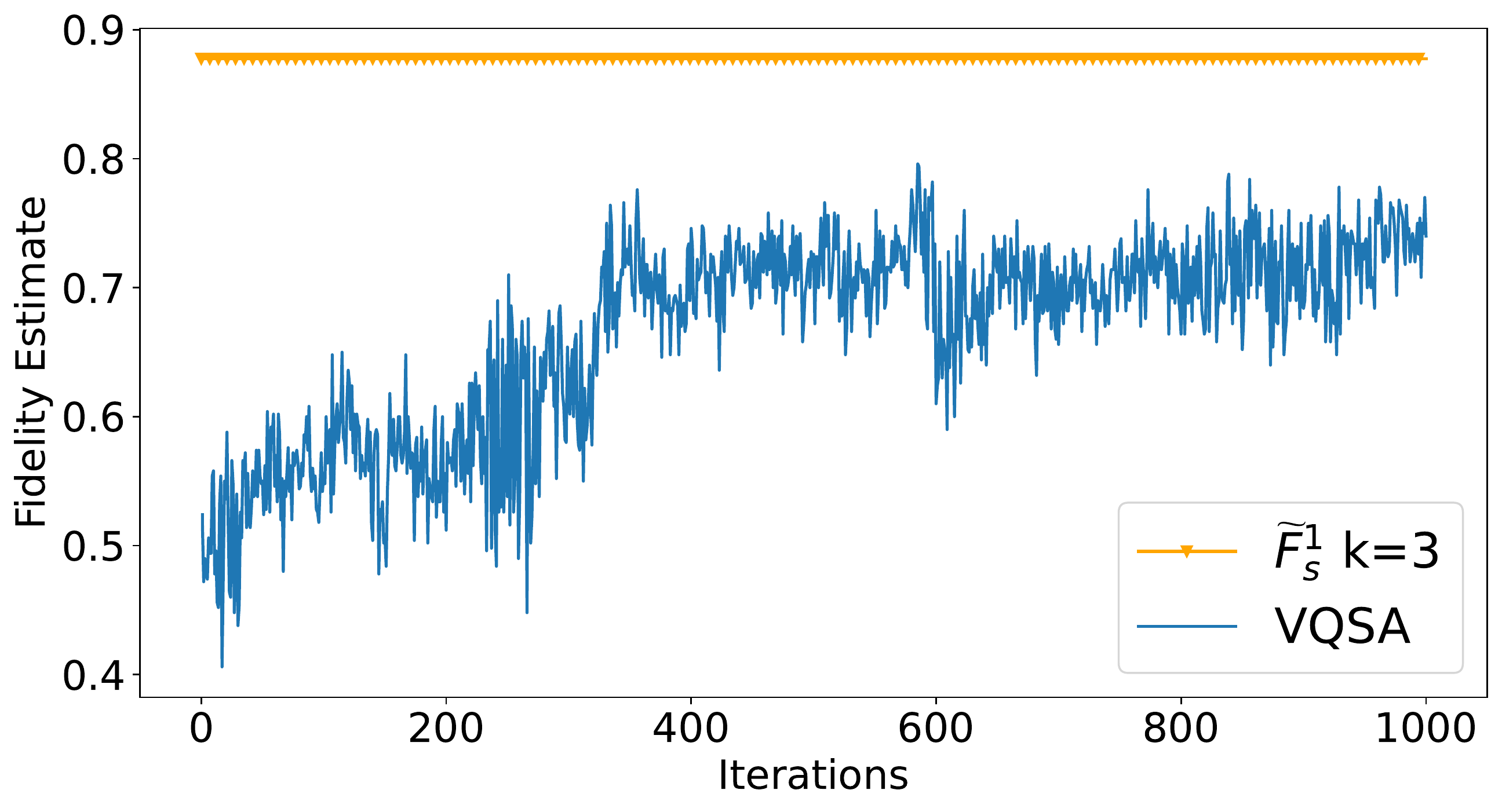}
        \caption{Fidelity of separability calculated for the state $\tilde{\rho}_{AB}$ as specified in~\eqref{eqn:rho-tilde} using our VQSA (blue line) and $\widetilde{F}_s^1$ (orange line).}
        \label{fig:Max_Sep_Fidelity_Depol}
    \end{figure}
    
    As a second example, we consider a state consisting of four qubits. Let us consider the four-qubit state $|\psi \rangle$ defined as follows:
    \begin{equation}
        \frac{1}{\sqrt{2}}\left(|0\rangle_{A_1} |0\rangle_{A_2} |0\rangle_{B_1} |0\rangle_{B_2} +|1\rangle_{A_1} |1\rangle_{A_2} |1\rangle_{B_1} |1\rangle_{B_2}\right),
    \end{equation} 
    where $A$ consists of two qubits $A_1$ and $A_2$ and $B$ consists of two qubits $B_1$ and $B_2$. We then pass $A_1$ and $A_2$ through a qubit depolarizing channel defined as $\mathcal{D}_p(\rho)\coloneqq (1-p)\rho+p\mathbb{I}/2$ where $p=0.7$. So, the final state under consideration can be written as
    \begin{equation}\label{eqn:rho-tilde}
        \tilde{\rho}_{AB}\coloneqq (\mathcal{D}_{p, A_1}\otimes\mathcal{D}_{p,A_2}\otimes\mathbb{I}_{B_1}\otimes\mathbb{I}_{B_2})\left(|\psi \rangle\!\langle\psi|\right).
    \end{equation}
    We can then use our VQSA to estimate the fidelity of separability for $\tilde{\rho}_{AB}$ and compare the result against the previous SDP benchmarks. See Figure~\ref{fig:Max_Sep_Fidelity_Depol} for the results.

\subsection{Generalization to Multipartite Fidelity of Separability}

    We also generalize our VQSA to measure the fidelity of separability of multipartite states
    in the following fashion.
   
    A multipartite state $\rho_{A_1\cdots A_M}\in \mathcal{D}(\mathcal{H}_{A_1\cdots A_M})\equiv \mathcal{D}(\mathcal{H}_{A_1}\otimes\cdots\otimes\mathcal{H}_{A_M})$ is  separable if it can be written as
        \begin{equation}
            \rho_{A_1\cdots A_M}=\sum_{x\in\mathcal{X}}p(x)\psi^{x,1}_{A_1}\otimes\cdots\otimes\psi^{x,M}_{A_M}
        \end{equation}
        where $\psi_{A_i}^{x,i}$ is a pure state for every $x\in\mathcal{X}$ and $i\in\{1,\ldots,M\}$.
    Let $M\text{-SEP}$ denote the set of all $\rho_{A_1\cdots A_M}\in \mathcal{D}(\mathcal{H}_{A_1\cdots A_M})$ such that $\rho_{A_1\cdots A_M}$ is separable. 
    
    For the multipartite case of the distributed quantum computation, the verifier prepares a purification $\psi^{\rho}_{RA_1\cdots A_M}$\ of $\rho_{A_1\cdots A_M}$. The prover applies a multipartite entanglement-breaking channel on $R$, which can be written as: 
    \begin{multline}
    \label{eqn:ent-break-mult-1}
        \mathcal{E}_{R\rightarrow A^{\prime}_1\cdots A^{\prime}_{M-1}}(\cdot)\\
        =\sum_{x\in \mathcal{X}}\operatorname{Tr}[\mu^{x}_{R}(\cdot)]\left(\phi^{x,1}_{A^{\prime}_1}\otimes\cdots\otimes\phi^{x,M-1}_{A^{\prime}_{M-1}}\right),
    \end{multline}
     where $\{\mu^{x}_{R}\}_{x}$ is a rank-one POVM and $\{\phi^{x,i}_{A^{\prime}_i}\}_{x,i}$ is a set of pure states. The prover sends systems $(A^{M-1})^{\prime}\equiv  A^{\prime}_1\cdots A^{\prime}_{M-1}$ to the verifier. (Here again, we can think of the prover as actually being split into $M$ provers, a first who performs the measurement $\{\mu^{x}_{R}\}_{x}$ and communicates the outcome $x$ to $M-1$ other provers, the $i$th of whom prepares the state $\phi^{x,i}_{A^{\prime}_i}$ and sends it to the verifier, for all $i \in \{1,\ldots, M-1\}$.) Finally, the verifier performs a collective swap test on these systems and the systems~$A_1\cdots A_M$, as depicted in Figure~\ref{fig:Max_Sep_Fidelity_QIP_EB_Multi-1}. The acceptance probability of this distributed quantum computation is given by
    \begin{equation}
        \max_{\mathcal{E}\in\operatorname{EB}_{M-1}}\operatorname{Tr}[\Pi_{(A^{M-1})^{\prime}A^{M-1}}^{\operatorname{sym}}\mathcal{E}_{R\rightarrow (A^{M-1})^{\prime}}(\psi_{RA^{M-1}})],
    \end{equation}
    where $\Pi_{(A^{M-1})^{\prime}A^{M-1}}^{\operatorname{sym}}$ is the projection onto the symmetric subspace of systems $(A^{M-1})^{\prime}$ and $A^{M-1}$ and $\operatorname{EB}_{M-1}$ denotes the set of multipartite entanglement-breaking channels defined in~\eqref{eqn:ent-break-mult-1}.
    This leads to the following theorem, which generalizes Theorem~\ref{theorem:qip-eb_msf} to the multipartite case:
    \begin{theorem}\label{theorem:qip-eb_msf_multi}
        For a pure state $\psi_{RA^{M}} \equiv \psi_{RA_1\cdots A_{M}}$, the following equality holds:%
        \begin{multline}
            \label{eq:mult-part-fid-sep-test-acc-prob-1}
            \max_{\mathcal{E}\in\operatorname{EB}_{M-1}} \operatorname{Tr}[ \Pi_{(A^{M-1})^{\prime}(A^{M-1})}^{\operatorname{sym}} \mathcal{E}_{R\rightarrow A^{\prime}_1\cdots A^{\prime}_{M-1}}(\psi_{RA^{M}})]
            \\
            =\frac{1}{2}\left(1 + F_s(\rho_{A_1\cdots A_M})\right),
        \end{multline}
        where the multipartite fidelity of separability is defined as
        \begin{multline}
            F_s(\rho_{A_1\cdots A_M})\coloneqq \\ \max_{\sigma_{A_1\cdots A_M}\in M-\operatorname{SEP}}F(\rho_{A_1\cdots A_M},\sigma_{A_1\cdots A_M}).
        \end{multline}
    \end{theorem}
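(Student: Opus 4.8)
The plan is to follow the architecture of the bipartite proof of Theorem~\ref{theorem:qip-eb_msf} (Appendix~\ref{appendix:proof_swap-test-eb-channel}) and to isolate the two places where passing from two parties to $M$ parties requires genuinely new work. First I would insert the measure-and-prepare form~\eqref{eqn:ent-break-mult-1} of the multipartite entanglement-breaking channel into the objective. Because $\psi_{RA^M}$ is pure and each POVM element $\mu^x_R$ is rank one, the state of $A_1\cdots A_M$ conditioned on the outcome $x$ is itself pure; writing it as $\psi^x_{A_1\cdots A_M}$ with probability $p(x)$ (the steering step), the channel output becomes $\sum_x p(x)\,\psi^x_{A_1\cdots A_M}\otimes\phi^{x,1}_{A'_1}\otimes\cdots\otimes\phi^{x,M-1}_{A'_{M-1}}$, and the maximization is now over the steering decomposition $\{(p(x),\psi^x_{A_1\cdots A_M})\}_x$ together with the prepared product states.

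Next I would expand the collective-swap-test projector as $\Pi^{\operatorname{sym}}=\tfrac12(I+\mathrm{F})$, where $\mathrm{F}$ is the operator exchanging the whole block $(A^{M-1})'$ with the whole block $A^{M-1}$. The identity term contributes $\tfrac12$. Applying the swap identity $\operatorname{Tr}[\mathrm{F}(X\otimes Y)]=\operatorname{Tr}[XY]$ to the other term collapses it to
\begin{equation}
    \tfrac12\sum_x p(x)\,\langle\phi^{x,1}\cdots\phi^{x,M-1}|\,\psi^x_{A_1\cdots A_{M-1}}\,|\phi^{x,1}\cdots\phi^{x,M-1}\rangle,
\end{equation}
with $\psi^x_{A_1\cdots A_{M-1}}=\operatorname{Tr}_{A_M}[\psi^x_{A_1\cdots A_M}]$. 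The optimization over the prepared states then factorizes across $x$. The first new ingredient is to recognize that maximizing this per-outcome overlap over the $M-1$ factors $\phi^{x,i}$, with $A_M$ traced out, equals the \emph{full} $M$-partite pure-state fidelity of separability $F_s(\psi^x_{A_1\cdots A_M})$: writing the partial inner product $(\langle\phi^{x,1}|\otimes\cdots\otimes\langle\phi^{x,M-1}|)|\psi^x\rangle$ as an unnormalized vector on $A_M$, its squared norm is attained by aligning a further factor $\phi^{x,M}$ with that vector, so tracing out $A_M$ is equivalent to optimizing freely over its product factor as well. Since for a pure state the closest separable state may be taken pure and fully product, this yields exactly $F_s(\psi^x_{A_1\cdots A_M})$, and the acceptance probability reduces to $\tfrac12\bigl(1+\max_{\{(p(x),\psi^x)\}}\sum_x p(x)\,F_s(\psi^x_{A_1\cdots A_M})\bigr)$.

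The remaining and main obstacle is the concave-roof identity
\begin{equation}
    \max_{\{(p(x),\psi^x_{A_1\cdots A_M})\}}\sum_x p(x)\,F_s(\psi^x_{A_1\cdots A_M})=F_s(\rho_{A_1\cdots A_M}),
\end{equation}
the multipartite analogue of the bipartite statement established in Appendices~\ref{appendix:proof_alt_streltsov} and~\ref{appendix:proof_max-sep-fid-inf-norm}. I would prove the two inequalities separately. For ``$\geq$'' I would take the $M$-separable state $\sigma$ achieving $F_s(\rho)=F(\rho,\sigma)$, fix a purification of $\sigma$ built from its product-state decomposition, and measure the reference of the matching Uhlmann purification of $\rho$ to induce a decomposition $\{(\tilde p_x,\tilde\psi^x)\}$; bounding each induced overlap by $F_s(\tilde\psi^x)$ and applying Cauchy--Schwarz gives $F_s(\rho)\le\sum_x\tilde p_x F_s(\tilde\psi^x)$. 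For ``$\le$'' I would start from an arbitrary decomposition $\rho=\sum_x p(x)\psi^x$, let $\omega^x=\phi^{x,1}\otimes\cdots\otimes\phi^{x,M}$ be the optimal fully product state for $\psi^x$, and form the \emph{re-weighted} separable state $\sigma=\sum_x r_x\,\omega^x$ with $r_x\propto p(x)F_s(\psi^x)$; the Uhlmann lower bound on $F(\rho,\sigma)$ together with this choice of weights (which is exactly what defeats the naive Jensen gap) shows $\sum_x p(x)F_s(\psi^x)\le F(\rho,\sigma)\le F_s(\rho)$. I expect the substance of the argument to lie entirely in this last identity: the steering and swap-test manipulations are formally identical to the bipartite case, and the only multipartite-specific care is that ``product across $A{:}B$'' must be replaced throughout by ``product across $A_1{:}\cdots{:}A_M$,'' after which the purification/Cauchy--Schwarz reasoning of the bipartite appendices carries over.
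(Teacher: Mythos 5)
Your proposal is correct, and its circuit-analysis core coincides with the paper's proof in Appendix~\ref{appendix:multipartite}: the paper likewise inserts the measure-and-prepare form of the multipartite entanglement-breaking channel to induce a pure-state steering decomposition $\{(p(x),\psi^x_{A_1\cdots A_M})\}_x$, expands $\Pi^{\operatorname{sym}}=\tfrac{1}{2}(I+F)$ with $F=\bigotimes_{i=1}^{M-1}F_{A_i'A_i}$, and then establishes exactly your ``first new ingredient'' --- that optimizing the overlap over $M-1$ prepared product factors with $A_M$ traced out equals the full $M$-partite pure-state quantity $F_s(\psi^x_{A_1\cdots A_M})$ --- via the same variational characterization of the Euclidean norm (Eqs.~\eqref{eqn:mult-fid-sep-intermediate}--\eqref{eqn:mult-fid-sep-pure}). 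The one place you diverge is the concave-roof identity: the paper does not prove it, but instead invokes it as a known result (Theorem~\ref{theorem:Multi-Streltsov}, cited from~\cite{Streltsov2010}), whereas you supply a proof by generalizing the bipartite argument of Appendix~\ref{appendix:proof_alt_streltsov}. Your two-sided argument is sound: the direction $F_s(\rho)\le\max_{\{(p(x),\psi^x)\}}\sum_x p(x)F_s(\psi^x)$ follows from Uhlmann plus triangle and Cauchy--Schwarz applied to the product-state purification of the optimal separable state, and the converse follows from the reweighted mixture $\sigma=\sum_x r_x\,\omega^x$ with $r_x\propto p(x)F_s(\psi^x)$, whose purification overlap with $\sum_x\sqrt{p(x)}|x\rangle|\psi^x\rangle$ (after absorbing phases into the $\omega^x$) gives exactly $\sum_x p(x)F_s(\psi^x)$; nothing in either step uses bipartiteness, only that the $\omega^x$ are product across all $M$ parties and that $|\langle w^x|\varphi^x\rangle|^2\le F_s(\varphi^x)$ by definition. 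The net effect is that your proof is self-contained where the paper's is not, at the cost of roughly a page of extra argument that the paper delegates to the literature.
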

    
    \begin{figure}
        \includegraphics[width=\columnwidth]{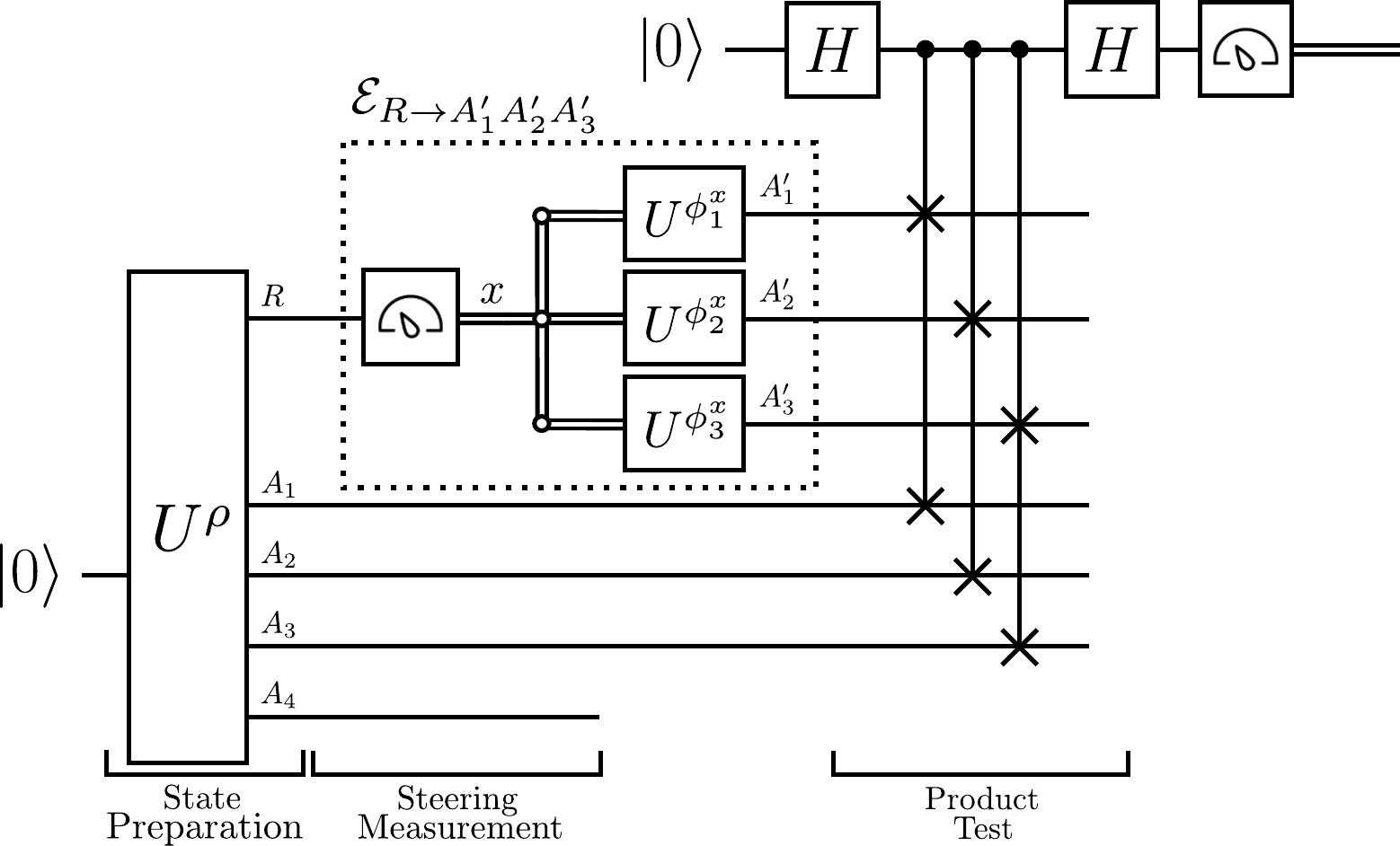}
        \caption{Test for separability of multipartite mixed states. The verifier uses the unitary circuit $U^\rho$ to prepare the state $\psi_{RA_1 A_2 A_3 A_4}$, which is a purification of $\rho_{A_1 A_2 A_3 A_4}$. The prover (indicated by the dotted box) applies an entanglement-breaking channel $\mathcal{E}_{R\rightarrow A^{\prime}_1 A^{\prime}_2 A^{\prime}_3}$ on $R$ by measuring the rank-one POVM $\{\mu^{x}_{R}\}_{x\in \mathcal{X}}$ and then, depending on the outcome $x$, prepares a state from the set $\{\phi^{x,1}_{A^{\prime}_1}\otimes\phi^{x,2}_{A^{\prime}_2}\otimes\phi^{x,3}_{A^{\prime}_3}\}_{x\in \mathcal{X}}$. The final state is sent to the verifier, who performs a collective swap test. Theorem~\ref{theorem:qip-eb_msf_multi} states that the maximum acceptance probability of this interactive proof is equal to $\frac{1}{2}(1 + F_{s}(\rho_{A_1 A_2 A_3 A_4}))$, i.e., a simple function of the fidelity of separability.}
        \label{fig:Max_Sep_Fidelity_QIP_EB_Multi-1}
    \end{figure}
     
    See Appendix~\ref{appendix:multipartite} for a proof. We can then use the generalized test of separability of mixed states to develop a VQSA for the multipartite case. See Figure~\ref{fig:Max_Sep_Fidelity_VQSA_Multi-1}. This involves replacing the collective swap test in Figure~\ref{fig:Max_Sep_Fidelity_QIP_EB_Multi-1} with an overlap measurement, similar to how we got Figure~\ref{fig:Max_Sep_Fidelity_VQA} from Figure~\ref{fig:Max_Sep_Fidelity_QIP_EB}.

    \begin{figure}
        \includegraphics[width=\columnwidth]{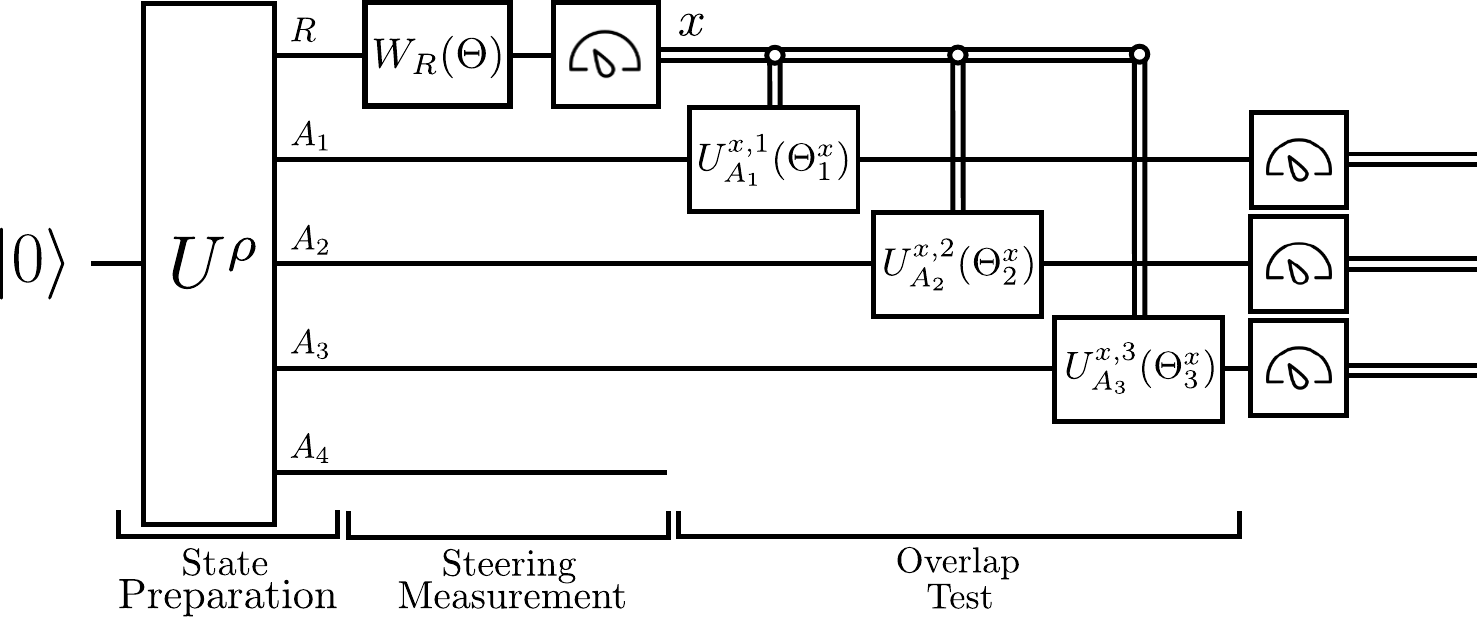}
        \caption{VQSA to estimate the multipartite fidelity of separability $F_s(\rho_{A_1A_2A_3A_4})$. 
        The unitary circuit $U^\rho$ prepares the state $\psi_{RA_1A_2A_3A_4}$, which is a purification of $\rho_{A_1A_2A_3A_4}$. The parameterized circuit $W_R(\Theta)$ acts on $R$ to evolve the state to another purification of $\rho_{A_1A_2A_3A_4}$. The following measurement, labeled ``steering measurement,'' steers the remaining systems to be in a state $\psi_{A_1A_2A_3A_4}^{x}$ if the measurement outcome $x$ occurs. Conditioned on the outcome $x$, the final parameterized circuits $U^{x,1}_{A_1}(\Theta^x_1)$, $U^{x,2}_{A_2}(\Theta^x_2)$, and $U^{x,3}_{A_3}(\Theta^x_3)$ are applied, and the subsequent measurement accepts with a maximum probability of $F_s(\rho_{A_1A_2A_3A_4})$.} 
        \label{fig:Max_Sep_Fidelity_VQSA_Multi-1}
    \end{figure}

\subsection{Quantum Computational Complexity Considerations}

    Our final result is regarding the computational complexity of estimating the fidelity of separability $F_s(\rho_{AB})$. The complexity-theoretic approach allows us to classify the separability problem based on its computational difficulty. Analyses of this form can be effectively conducted within the framework of quantum computational complexity theory~\cite{watrous2009complexity, VW15}.

     In the paradigm of complexity theory~\cite{arora_barak_2009}, a complexity class is a set of computational problems that require similar resources to solve. If a complexity class~$A$ is contained within another class $B$, then some problems in~$B$ could require more computational resources than problems in $A$. To effectively characterize the difficulty of a class of problems, we pick a problem that is representative of the class or complete for the class. A problem $h$ is considered complete for a complexity class $A$ if $h$ is contained in the class and the ability to solve problem $h$ can be extended efficiently to solve every other problem in~$A$. 

    To tackle the question posed about the computational complexity of estimating the fidelity of separability, we define \qipeb\ to be the complexity class containing problems that can be solved using a prover restricted to applying only entanglement-breaking channels, which processes a quantum message received from the verifier and sends back a quantum message to the verifier. Thus, estimating the fidelity of separability of a given state then falls within \qipeb, as seen from Figure~\ref{fig:Max_Sep_Fidelity_QIP_EB}. To fully characterize this novel complexity class, we provide a complete problem for it. We establish that, given quantum circuits to generate a channel $\mathcal{N}_{A\rightarrow B}$ and a state $\rho_{B}$, estimating the following quantity is complete for \qipeb:%
        \begin{multline}
           \max_{\substack{\{  (p(x),\psi^{x})\}  _{x},\left\{  \varphi^{x}\right\}_{x},\\
           \rho_{B} = \sum_{x}p(x)\psi_{B}^{x}
           }}   \sum_{x}p(x)F(\psi_{B}^{x},\mathcal{N}_{A\rightarrow B}(\varphi_{A}^{x}))  ,
        \end{multline}
        where $\{  (p(x),\psi^{x})\}  _{x}$ is a pure-state ensemble and $\left\{  \varphi^{x}\right\}_{x}$ is a set of pure states.
    See Appendix~\ref{appendix:qipeb} for details and an interpretation of this quantity. 

    By placing the problem of estimating the fidelity of separability in the class \qipeb, we establish results that link quantum steering and the separability problem to quantum computational complexity theory. Furthermore, we show that the complexity class \qipeb\ 
    contains QAM~\cite{marriott2004quantum} and QSZK~\cite{watrous2006zero}. It also follows, as a direct generalization of the hardness results from~\cite{HMW13, HMW14}, that the problem of estimating the fidelity of separability is hard for QSZK and NP. All of the aforementioned complexity classes are considered to be, in the worst case, out of reach of the capabilities of efficient quantum computers. See Appendix~\ref{appendix:complexity_placements} for proofs and Figure~\ref{fig:placement} for a detailed diagram. However, following the approach of~\cite{210808406}, we can try to solve some instances of problems in these classes using parameterized circuits and VQAs. 

    \begin{figure}
    \centering
        \includegraphics[width=0.8\columnwidth]{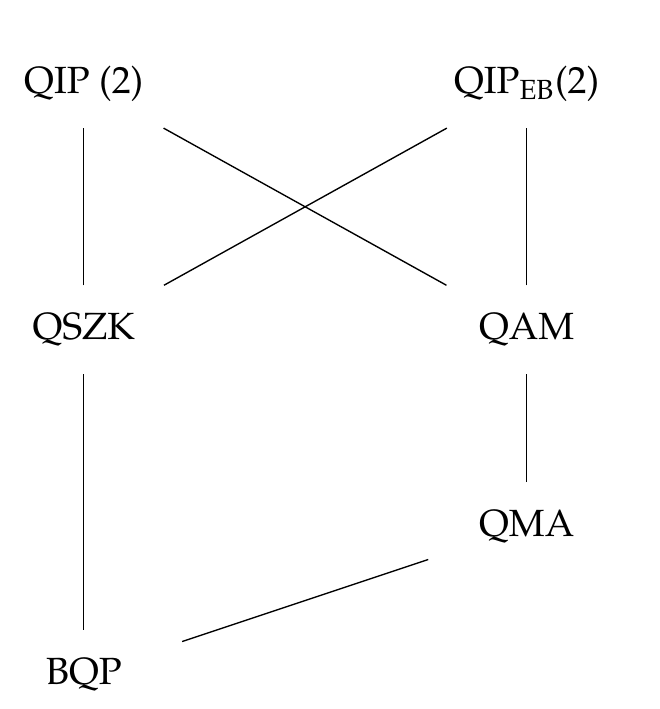}
        \caption{Placement of \qipeb\ relative to other known complexity classes. The complexity classes are organized such that if a class is connected to a class above it, the complexity class placed lower is a subset of the class above. For example, \qipeb\ is a superset of both QSZK and QAM.}
        \label{fig:placement}
    \end{figure}
    
\section{Conclusion and Discussion}  

    In this paper, we detailed a distributed quantum computation to test the separability of a quantum state that, at its core, uses quantum steering. This test demonstrated a link between quantum steering and the separability problem. The acceptance probability of this distributed quantum computation is directly related to an entanglement measure known as the fidelity of separability. 
    Using the test's structure, we also showed computational complexity-theoretic results and established a link between quantum steering, quantum algorithms, and quantum computational complexity. 
    By replacing the prover with a parameterized circuit, we modified this distributed quantum computation to develop our variational quantum steering algorithm (VQSA), a novel kind of variational quantum algorithm that uses quantum steering to address the problem of estimating the fidelity of separability. This algorithm allows for the direct estimation of the fidelity of separability without the need for state tomography and subsequent approximate tests on separability. Our algorithm is not unitary due to the mid-circuit measurement 
    on system $R$ and the consequent conditional operation applied on system $A$. This is an important distinction from most VQAs, which do not use a parameterized mid-circuit measurement. We also discuss multipartite generalizations of both our separability test and VQSA. Finally, we simulated our VQSA  using the noisy Qiskit Aer simulator \cite{Qiskit}, which showed favorable convergence trends and was compared against two classical SDP benchmarks. 

    Our VQSA has applications beyond entanglement quantification on a single quantum computer. We can also think of our VQSA as a distributed variational quantum algorithm for measuring the entanglement of a bipartite state. See~\cite{zhao2021practical,DBKC23,arxiv.2208.00450} for previous instances of distributed VQAs. Indeed, our algorithm can be executed over a quantum network in which each node has quantum and classical computers capable of performing VQAs. The initial part of the algorithm distributes $R$ to Rob, $A$ to Alice, and $B$ to Bob, who are all in distant locations. Then, Rob performs the parameterized measurement and sends the outcome over a classical channel to Alice, who performs another parameterized measurement. They can repeat this process to assess the quality of the entanglement between Alice and Bob. This interpretation is even more interesting regarding quantum networks for the multipartite case, in which the classical data gets broadcast from Rob to all the other nodes except the last one.
    
    VQSAs can tackle other problems involving quantum steering, like maximizing the pure-state decompositions of quantum states. This technique may also be helpful in estimating other entanglement measures that involve optimizing over the set of separable states. By applying the insights of \cite[Appendix~A]{Streltsov2010} and our approach here, it is clear that VQSAs will also help estimate maximal fidelities associated with other resource theories, such as the resource theory of coherence~\cite{BCP14}. More broadly, we suspect that the paradigm of parameterized mid-circuit measurements and distributed variational quantum algorithms will help address other computational problems of interest in quantum information science and physics, given recent advances in experimental implementations~\cite{Cramer2016, Egan2021, Acharya2023,graham2023midcircuit}.

    From here, we consider it an important open question in quantum computational complexity theory to place a non-trivial upper bound on the class \qipeb. As indicated in Remark~\ref{rem:de-finetti-qip}, an approach using the known quantum de Finetti theorem from \cite[Theorem~II.7']{christandl2007one} does not appear to be helpful for this task.

\section{Methods}
\label{sec:methods}

    This section briefly overviews the techniques used to prove Theorem~\ref{theorem:qip-eb_msf} (one of our main results), a brief description of our SDP benchmarks, and essential details about our simulations.
    
    To gain intuition about the separability test for mixed states, let us formulate a simple test for the separability of pure states. From~\eqref{eqn:sep-state-informal}, we can see that a pure bipartite state~$\varphi_{AB}$ is separable  if it can be written in product form, as
    \begin{equation}
     \label{eq:pure-sep-state}
     \varphi_{AB}=\psi_{A} \otimes \phi_{B},
    \end{equation}
    where $\psi_{A}$ and $ \phi_{B}$ are pure states.
    The test we developed below is important because it will reappear as part of the test for separability in the general case, along with quantum steering. Additionally, our approach slightly differs from the standard approach for testing entanglement of pure states, which employs two copies of the state in a swap test~\cite{Brennen03, HM10, GHMW15}. Instead, our approach requires only a single copy of the state.

    Our pure-state separability test consists of a distributed quantum computation involving a prover and a verifier (see Figure~\ref{fig:swaptest_pure}). 
    The computation starts with the verifier preparing the pure state $\psi_{AB}$. The prover sends the verifier the pure state $\phi_{A^\prime}$ in register $A^\prime$. (We note that the prover can send a mixed state; however, the maximum acceptance probability of the test is achieved by a pure state. Hence, without loss of generality, the prover should send a pure state.) 
    The verifier then performs the standard swap test~\cite{BBD+97, BCWW01} on $A$ and $A^\prime$ and accepts if the measurement outcome is zero. In the standard model of quantum computational complexity~\cite{watrous2009complexity, VW15}, the prover attempts to get the verifier to accept the swap test with as high a probability as possible. Thus, in this scenario, the prover selects  $\phi_{A^\prime}$ to maximize the overlap between the reduced stated $\psi_{A} \coloneqq \operatorname{Tr}_B[\psi_{AB}]$ and $\phi_{A^\prime}$. The maximum acceptance probability is then equal to
    \begin{align}
    & \max_{\phi}\operatorname{Tr}[(\Pi_{A^{\prime}A}^{\operatorname{sym}}\otimes I_B)(\phi_{A^\prime} \otimes \psi_{AB})]\notag \\
    & =
    \frac{1}{2}\left(1+\max_{\phi}\operatorname{Tr}[F_{A'A}(\phi_{A'} \otimes \psi_A)]\right) \\
    & =
    \frac{1}{2}\left(1+\max_{\phi}\operatorname{Tr}[\phi_{A}  \psi_A]\right) 
     = \frac{1}{2}\left(1+\left\|\psi_{A}\right\|_{\infty}\right),    
    \label{eq:initial-swap-test}
    \end{align}
    where $F_{A'A}$ is the unitary swap operator acting on systems $A'$ and $A$, the projector $\Pi_{A^{\prime}A}^{\operatorname{sym}} \coloneqq  \frac{1}{2} \left( I_{A'A} + F_{A'A}\right)$ projects onto the symmetric subspace of $A^{\prime}$ and $A$, and $\left\|\psi_{A}\right\|_{\infty}$ is the spectral norm of the reduced state $\psi_{A}$ (equal to its largest eigenvalue). Since $\left\|\psi_{A}\right\|_{\infty} = 1$ if and only if $\psi_{A}$ is a pure state and this occurs if and only if $\psi_{AB}$ is a product state, it follows that the maximal acceptance probability is equal to one if and only if~$\psi_{AB}$ is a product state.
    \begin{figure}
        \centering
        \includegraphics[width=0.7\columnwidth]{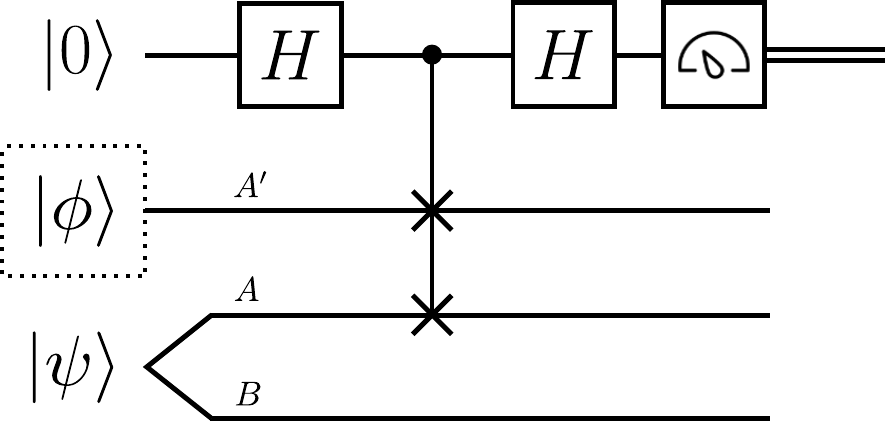}
        \caption{Pure-state separability test: The verifier has the pure state $\psi_{AB}$ of interest. The prover (indicated by the dotted box) sends the verifier a pure state $\phi_{A^\prime}$, who then performs the standard swap test on systems~$A'$ and~$A$. 
        As mentioned in~\eqref{eq:initial-swap-test}, the acceptance probability is equal to $ \frac{1}{2}(1+\left\|\psi_{A}\right\|_{\infty})$.} 
        \label{fig:swaptest_pure}
    \end{figure}

     Now we outline the proof of Theorem~\ref{theorem:qip-eb_msf}, which relies on two important facts. The first is that the fidelity of separability can be written in terms of a convex roof as follows \cite[Theorem~1]{Streltsov2010}:
        \begin{equation}
        \label{eq:convex-decomp-max-sep-fid}
            F_{s}(\rho_{AB})=  \max_{\substack{\{(p(x),\psi_{AB}^{x})\}_{x},\\\rho_{AB}=\sum_{x}p(x)\psi_{AB}^{x}}}\sum_{x}p(x)F_{s}(\psi_{AB}^{x}),
        \end{equation}
        where $\{p(x)\}_x$ is a probability distribution and each $\psi_{AB}^{x}$ is a pure state.
        See also \cite[Lemma~1]{Regula2018}.
        The second fact is that, for a pure bipartite state, $F_{s}(\psi_{AB})$ can be rewritten as \cite[Section~6.2]{Streltsov2010}
        \begin{equation}
            F_{s}(\psi_{AB})= \left\Vert \psi_{A}\right\Vert _{\infty}.
            \label{eq:max-sep-fid-inf-norm}%
        \end{equation}      
    Along with these facts, we also note that the optimization over all entanglement-breaking channels in~\eqref{eqn:qip-eb_accept_prob} is the same as optimizing over all pure-state decompositions of $\rho_{AB}$ and the rest of the proof follows. For completeness, we provide proofs of~\eqref{eq:convex-decomp-max-sep-fid} and~\eqref{eq:max-sep-fid-inf-norm} in Appendices~\ref{appendix:proof_alt_streltsov} and \ref{appendix:proof_max-sep-fid-inf-norm}, respectively. It follows from~\eqref{eqn:sep-state-informal} and~\eqref{eq:max-sep-fid-inf-norm}  that $\sum_{x}p(x)\left\Vert \psi_{A}^{x}\right\Vert _{\infty} = 1$ for a separable state, which is the maximum possible value of $F_{s}(\rho_{AB})$. Hence, the distributed quantum computation in Figure~\ref{fig:Max_Sep_Fidelity_QIP_EB} tests and quantifies the separability of a state by estimating its fidelity of separability. 
    Finally, note that the computation in Figure~\ref{fig:Max_Sep_Fidelity_QIP_EB} can be reduced to that in Figure~\ref{fig:swaptest_pure} if the purifying system $R$ is trivial, implying that the verifier only prepares a pure state on systems $A$ and $B$ in this case. 

    \textit{Benchmarking via semidefinite programs}---Here we briefly explain  the derivation of the SDP benchmarks $\widetilde{F}_{s}^{1}(\rho_{AB}, k)$ and $\widetilde{F}_{s}^{2}(\rho_{AB}, k)$.
    
    First, let us recall that the fidelity between two quantum states has an SDP formulation~\cite{Wat13}. Since there is no semidefinite constraint that directly corresponds to optimizing over the set of separable states~\cite{Fawzi2021}, we can approximate the fidelity of separability of a state by maximizing its fidelity with positive partial transpose (PPT) states~\cite{Peres1996, Horodecki1996}  and $k$-extendible states~\cite{W89a, Doherty2004}. Further noting that the PPT and $k$-extendibility constraints are semidefinite constraints, we obtain our first benchmark $\widetilde{F}_s^1(\rho_{AB}, k)$, defined in Appendix~\ref{appendix:ppt-k-state-sdp}, and which is proven there to satisfy the following bounds:
    \begin{multline}
    \label{eq:first-sdp-bounds}
        F_{s}(\rho_{AB}) \leq
        \widetilde{F}_{s}^{1}(\rho_{AB}, k)
    \\
    \leq 1- \left[\sqrt{1-F_{s}(\rho_{AB})} -2\sqrt{\frac{\left\vert B\right\vert ^{2}}{k}\left(  1-\frac{\left\vert B\right\vert ^{2}}{k}\right)  } \right]^2,
    \end{multline}
    where $\left\vert B\right\vert$ is the dimension of system $B$. By inspection of the above inequalities, observe that 
    \begin{equation}
    \lim_{k\to \infty} \widetilde{F}_{s}^{1}(\rho_{AB}, k) = F_{s}(\rho_{AB}).    
    \end{equation}
      
    The second benchmark can be obtained using~\eqref{eqn:qip-eb_accept_prob}. Just like PPT and $k$-extendible states were used to approximate separable states for the first benchmark, we use PPT channels~\cite{Rai99, Rai01} and $k$-extendible channels~\cite{PBHS13, Kaur2018, KDWW21, BBFS18} to approximate entanglement-breaking channels, leading to our second benchmark $\widetilde{F}_s^2(\rho_{AB}, k)$. We show that $\widetilde{F}_s^2(\rho_{AB}, k)$ is an SDP and approximates the fidelity of separability in the following fashion: 
        \begin{equation}
        \label{eqn:swap-test-ppt-k-channel-sdp}
         F_{s}(\rho_{AB}) \leq \widetilde{F}_s^2(\rho_{AB}, k) \leq 
            F_{s}(\rho_{AB})+\frac{4  \left\vert A\right\vert^{3} \left\vert B\right\vert}{k}.
        \end{equation}
    where $\left\vert A\right\vert$ and $\left\vert B\right\vert$ is the dimension of systems $A$ and $B$, respectively. See Appendix~\ref{appendix:proof_swap-test-ppt-k-channel-sdp} for a proof. Again, observe that
    \begin{equation}
    \lim_{k\to \infty} \widetilde{F}_{s}^{2}(\rho_{AB}, k) = F_{s}(\rho_{AB}).    
    \end{equation}

    \textit{Simulations and Reward Functions}---For our simulations, we use the Qiskit Aer simulator and Qiskit's Simultaneous Perturbation Stochastic Approximation (SPSA) optimizer to perform the classical optimization. The jitters in the fidelity values between iterations of the VQSA can be attributed to the shot noise in estimating the acceptance probability using the Qiskit Aer simulator, as well as the fact that the SPSA optimizer we have used to perform the classical optimization is itself a stochastic algorithm. We provide more examples in Appendix~\ref{appendix:simulations}.
    
    An essential issue with variational quantum techniques, such as VQAs, is the emergence of barren plateaus or vanishing gradients as the number of qubits increases~\cite{McClean2018}. However, recent results have shown that this problem can be mitigated by switching from a global reward function to a local reward function~\cite{Cerezo2021a}. In our case, a global reward function is one in which we measure all the qubits that constitute system~$A$, as done in the approach discussed in  Theorem~\ref{theorem:global_cost_func}. An example of a local reward function involves selecting a qubit in the system $A$ at random to measure in the computational basis and recording the outcome, accepting if the result equals zero. Our proposed local reward function can be used to obtain upper and lower bounds on our initial global reward function, following the approach of \cite[Appendix C]{Khatri2019quantumassisted} and discussed for completeness in Appendix~\ref{appendix:local_cost}. Local functions have also been used recently to avoid barren plateaus in VQAs when estimating the geometric measure of entanglement for pure states~\cite{zambrano2023avoiding}. We provide simulations of the local reward function in Appendix~\ref{appendix:simulations}, indicating that the local reward function can also be used to estimate the fidelity of separability of a given state.

    \section*{Acknowledgments}
        We are especially grateful to Gus Gutoski for providing the main idea of the quantum interactive proof detailed in Figure~\ref{fig:Max_Sep_Fidelity_QIP_EB}, back in September 2013. We also thank Paul Alsing, Eric Chitambar, Zo\"e Holmes, and Wilfred Salmon for insightful discussions, and Ludovico Lami, Bartosz Regula, and Alexander Streltsov for the same, as well as pointing us to~\cite{Regula2018}. AP, SR, and MMW acknowledge support from the National Science Foundation under Grant No.~1907615 and from AFRL under agreement no.~FA8750-23-2-0031.

        This material is based on research
sponsored by Air Force Research Laboratory under agreement number
FA8750-23-2-0031. The U.S.~Government is authorized to reproduce and
distribute reprints for Governmental purposes, notwithstanding any copyright
notation thereon. The views and conclusions contained herein are those of the
authors and should not be interpreted as necessarily representing the official
policies or endorsements, either expressed or implied, of Air Force Research
Laboratory or the U.S.~Government.

\bibliographystyle{quantum}
\bibliography{Ref}

\onecolumn
\appendix

\large 

\section{Proof of Theorem~\ref{theorem:qip-eb_msf}}
\label{appendix:proof_swap-test-eb-channel}
In this appendix, we prove Theorem~\ref{theorem:qip-eb_msf}, showing that the acceptance probability of the first test of separability for mixed states is equal to $\frac{1}{2}\left(1+ F_{s}(\rho_{AB})\right)$.
\bigskip 

    \begin{proof}[Proof of Theorem~\ref{theorem:qip-eb_msf}]
    Recall that an entanglement-breaking channel can be rewritten as~\cite{HSR03}
    \begin{equation}
        \mathcal{E}_{R\rightarrow A^{\prime}}(\cdot)=\sum_{x}\operatorname{Tr}[\mu^{x}_{R}(\cdot)]\phi^{x}_{A^{\prime}},
        \label{eq:eb-proof-1}
    \end{equation}
    where $\{\mu^{x}_{R}\}_{x}$ is a rank-one POVM and $\{\phi^{x}_{A^{\prime}}\}_{x}$ is a set of pure states. Then we find, for fixed~$\mathcal{E}_{R\rightarrow A^{\prime}}$, that%
    \begin{align}
         \operatorname{Tr}[\Pi_{A^{\prime}A}^{\operatorname{sym}}\mathcal{E}_{R\rightarrow A^{\prime}}(\psi_{RAB})] 
        &  =\frac{1}{2}\operatorname{Tr}[(I_{A^{\prime}A}+F_{A^{\prime}A})\mathcal{E}_{R\rightarrow A^{\prime}}(\psi_{RAB})]\\
        &=\frac{1}{2}\left(  1+\operatorname{Tr}[F_{A^{\prime}A}\mathcal{E}_{R\rightarrow A^{\prime}}(\psi_{RAB})]\right).
    \end{align}
    So let us work with the expression $\operatorname{Tr}[F_{A^{\prime}A}\mathcal{E}_{R\rightarrow A^{\prime}}(\psi_{RAB})]$. Consider that%
    \begin{align}
          \operatorname{Tr}[F_{A^{\prime}A}\mathcal{E}_{R\rightarrow A^{\prime}}(\psi_{RAB})]
        &  =\operatorname{Tr}\!\left[  F_{A^{\prime}A}\sum_{x}\operatorname{Tr}_{R}[\mu^{x}_{R}\psi_{RAB}]\otimes\phi^{x}_{A^{\prime}}\right]  \\
        &  =\operatorname{Tr}\!\left[  F_{A^{\prime}A}\sum_{x}p(x)\psi_{AB}^{x}\otimes\phi^{x}_{A^{\prime}}\right]  \\
        &  =\operatorname{Tr}\!\left[  F_{A^{\prime}A}\sum_{x}p(x)\psi_{A}^{x}\otimes\phi^{x}_{A^{\prime}}\right]  \\
        &  =\sum_{x}p(x)\langle\phi^{x}|_{A}\psi_{A}^{x}|\phi^{x}\rangle_{A},
    \end{align}
    where
    \begin{align}
        p(x) &  \coloneqq \operatorname{Tr}[\mu^{x}_{R}\psi_{RAB}], \label{eq:eb-proof-2} \\
        \psi_{AB}^{x} &  \coloneqq \frac{1}{p(x)}\operatorname{Tr}_{R}[\mu^{x}_{R}\psi_{RAB}].
        \label{eq:eb-proof-3}
    \end{align}
    Thus, the acceptance probability for a fixed entanglement-breaking channel is given by
    \begin{equation}
        \operatorname{Tr}[\Pi_{A^{\prime}A}^{\operatorname{sym}}\mathcal{E}_{R\rightarrow A^{\prime}}(\psi_{RAB})]=\frac{1}{2}\left(  1+\sum_{x}p(x)\langle\phi^{x}|_{A}\psi_{A}^{x}|\phi^{x}\rangle_{A}\right)  .
    \end{equation}
    After optimizing over every element of $\operatorname{EB}_{R\to A^\prime}$, which denotes the set of all entanglement-breaking channels with input system $R$ and output system $A^\prime$, and realizing that optimizing over measurements in $\mathcal{E}_{R\rightarrow A^{\prime}}$ induces a pure-state decomposition of $\rho_{AB}$ and optimizing over preparation channels in $\mathcal{E}_{R\rightarrow A^{\prime}}$ gives the spectral norm of $\psi _{A}^{x}$, we arrive at the claimed formula for the acceptance probability, when combined with the development in Appendices~\ref{appendix:proof_alt_streltsov} and \ref{appendix:proof_max-sep-fid-inf-norm}:%
    \begin{equation}\label{eqn:accept_prob_supp}
            \max_{\mathcal{E}\in\operatorname{EB}_{R\to A^\prime}}\operatorname{Tr}[(\Pi_{A^{\prime}A}^{\operatorname{sym}}\otimes I_{RB})\mathcal{E}_{R\rightarrow A^{\prime}}(\psi_{RAB})]=\frac{1 + F_{s}(\rho_{AB})}{2}.
        \end{equation}
    This concludes the proof.
    \end{proof}

\section{Alternative 
Proof of Equation~\eqref{eq:convex-decomp-max-sep-fid}}
\label{appendix:proof_alt_streltsov}

    This appendix provides an alternative proof for Theorem 1 in~\cite{Streltsov2010}. This proof relies on Uhlmann's theorem~\cite{Uhlmann1976}, the triangle inequality, and the Cauchy--Schwarz inequality. See also \cite[Lemma~1]{Regula2018}.
    
    \begin{theorem}[\cite{Streltsov2010}]\label{theorem:Streltsov}
        For a state $\rho_{AB}$, the following formula holds 
        \begin{equation}
            F_{s}(\rho_{AB})= \max_{\left\{(p(x),\psi_{AB}^{x})\right\}  _{x}}\left\{\sum_{x}p(x)F_{s}(\psi_{AB}^{x}):\rho_{AB}=\sum_{x}p(x)\psi_{AB}^{x}\right\},
        \end{equation}
        where the pure-state ensemble $\{(p(x),\psi_{AB}^{x})\}  _{x}$ satisfies $\sum_{x}p(x)\psi_{AB}^{x}=\rho_{AB}$, all $\psi_{AB}^{x}$ are pure, and
        \begin{equation}
        F_{s}(\psi_{AB}) = \max_{|\phi\rangle_{A} , |\varphi\rangle_{B}} |\langle\psi|_{AB} |\phi\rangle_{A}\otimes|\varphi\rangle_{B}|^2. 
        \end{equation}    
    \end{theorem}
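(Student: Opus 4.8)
The plan is to prove the two inequalities $F_{s}(\rho_{AB})\le \max_{\{(p(x),\psi^{x})\}}\sum_{x}p(x)F_{s}(\psi^{x}_{AB})$ and $F_{s}(\rho_{AB})\ge\sum_{x}p(x)F_{s}(\psi^{x}_{AB})$ for every pure-state decomposition, both anchored on Uhlmann's theorem. The common setup is that, for a separable state $\sigma_{AB}=\sum_{x}q(x)\,\phi^{x}_{A}\otimes\varphi^{x}_{B}$, the vector $|\sigma\rangle_{RAB}=\sum_{x}\sqrt{q(x)}\,|x\rangle_{R}|\phi^{x}\rangle_{A}|\varphi^{x}\rangle_{B}$ is a purification, and that expanding a purification of $\rho_{AB}$ in an orthonormal reference basis $\{|x\rangle_{R}\}$ realizes a pure-state decomposition $\{(p(x),\psi^{x}_{AB})\}_{x}$ of $\rho_{AB}$. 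I will also use the pure-state identity $F_{s}(\psi_{AB})=\max_{\phi_{A},\varphi_{B}}|\langle\psi|\phi_{A}\otimes\varphi_{B}\rangle|^{2}$ quoted in the statement, which holds because $F(\psi,\sigma)=\langle\psi|\sigma|\psi\rangle$ is linear in $\sigma$ and hence maximized over $\operatorname{SEP}(A\!:\!B)$ at an extreme point, namely a pure product state.

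For the upper bound, I would start from $F_{s}(\rho)=\max_{\sigma\in\operatorname{SEP}}F(\rho,\sigma)$, fix an optimal $\sigma$, and write $F(\rho,\sigma)$ via Uhlmann as the squared overlap $|\langle\rho|\sigma\rangle|^{2}$ of purifications on a common reference $R$. Expanding the optimal purification of $\rho$ in the reference basis $\{|x\rangle\}$ used by $|\sigma\rangle$ induces a decomposition $\{(p(x),\psi^{x})\}$ of $\rho$, and the overlap becomes $\sum_{x}\sqrt{q(x)p(x)}\,\langle\psi^{x}|(\phi^{x}_{A}\otimes\varphi^{x}_{B})$. Applying the triangle inequality $|\sum_{x}z_{x}|\le\sum_{x}|z_{x}|$ and then the Cauchy--Schwarz inequality (pairing $\sqrt{q(x)}$ against $\sqrt{p(x)}\,|\langle\psi^{x}|\phi^{x}\varphi^{x}\rangle|$, and using $\sum_{x}q(x)=1$) yields $F(\rho,\sigma)\le\sum_{x}p(x)|\langle\psi^{x}|\phi^{x}\varphi^{x}\rangle|^{2}\le\sum_{x}p(x)F_{s}(\psi^{x})$, which is at most the maximum over all decompositions. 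This gives the $\le$ direction.

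For the lower bound, the point is that a single purification overlap naturally produces the square-of-average $\big(\sum_{x}p(x)\sqrt{F_{s}(\psi^{x})}\big)^{2}$, which is strictly smaller than the target $\sum_{x}p(x)F_{s}(\psi^{x})$; the obstacle is to recover the larger quantity. My plan is to fix an arbitrary decomposition $\rho=\sum_{x}p(x)\psi^{x}$, take an optimal product state $\phi^{x}_{A}\otimes\varphi^{x}_{B}$ for each $\psi^{x}$ with phases chosen so that $\langle\psi^{x}|\phi^{x}\varphi^{x}\rangle=\sqrt{F_{s}(\psi^{x})}\ge 0$, and then build the separable state $\sigma=\sum_{x}q(x)\,\phi^{x}_{A}\otimes\varphi^{x}_{B}$ with the \emph{reweighted} distribution $q(x)\propto p(x)F_{s}(\psi^{x})$ (normalizable, since $F_{s}(\psi^{x})=\|\psi^{x}_{A}\|_{\infty}\ge 1/|A|>0$ by~\eqref{eq:max-sep-fid-inf-norm}). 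Purifying $\rho$ and $\sigma$ against the common basis $\{|x\rangle\}$ and computing the overlap, the reweighting makes each $x$ contribute amplitude $p(x)F_{s}(\psi^{x})/\sqrt{N}$ with $N=\sum_{y}p(y)F_{s}(\psi^{y})$, so that $|\langle\rho|\sigma\rangle|^{2}=N$ exactly. Uhlmann's theorem then gives $F_{s}(\rho)\ge F(\rho,\sigma)\ge|\langle\rho|\sigma\rangle|^{2}=\sum_{x}p(x)F_{s}(\psi^{x})$, completing the $\ge$ direction and hence the equality.

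I expect the reweighting step in the lower bound to be the crux: the naive choice $q(x)=p(x)$ delivers only $\big(\sum_{x}p(x)\sqrt{F_{s}(\psi^{x})}\big)^{2}$, and recognizing that tilting the separable state's weights toward the more-separable branches of the ensemble is exactly what converts a coherent square-of-sum into the desired incoherent sum-of-squares is the nonobvious idea. The upper bound, by contrast, should be routine once the measurement-induced decomposition is set up, with the triangle and Cauchy--Schwarz inequalities doing the work.
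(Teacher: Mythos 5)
Your proof is correct and takes essentially the same route as the paper's: Uhlmann's theorem with purifications expanded in a common reference basis, triangle plus Cauchy--Schwarz for the upper bound, and a reweighted separable ensemble for achievability --- your choice $q(x)\propto p(x)F_{s}(\psi^{x}_{AB})$ is exactly the paper's equality condition in the Cauchy--Schwarz step, where it picks the separable weights proportional to $q(x)\left\vert \langle\varphi^{x}|_{AB}|\psi^{x}\rangle_{A}|\phi^{x}\rangle_{B}\right\vert ^{2}$ and fixes phases to saturate the triangle inequality. The only difference is organizational: the paper merges both directions into a single optimization chain, whereas you split them into two explicit inequalities.
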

    
    \begin{proof}
    Since the definition in~\eqref{eq:max-sep-fid-def} requires an optimization over all separable states, we take $\left\vert \mathcal{X}\right\vert =\left(\left\vert A\right\vert \left\vert B\right\vert\right)^{2}$. The separable state in~\eqref{eqn:sep-state-informal} is purified by%
    \begin{equation}
    |\psi^{\sigma}\rangle_{RAB}=\sum_{x\in\mathcal{X}}\sqrt{p(x)}|x\rangle_{R}|\psi^{x}\rangle_{A}|\phi^{x}\rangle_{B}.\label{eq:sep-purify}%
    \end{equation}
    Now consider a generic purification $|\psi^{\rho}\rangle_{R^{\prime}AB}$\ of $\rho_{AB}$. Recall that the dimension of the purifying system $R^{\prime}$ satisfies rank$(\rho_{AB})\leq\left\vert R^{\prime}\right\vert $ and so we can simply set $\left\vert R^{\prime}\right\vert =\left\vert A\right\vert\left\vert B\right\vert $. Taking $R^{\prime\prime}$ to be a system of dimension $\left\vert A\right\vert \left\vert B\right\vert $, we then have that%
    \begin{equation}
        |\psi^{\rho}\rangle_{R^{\prime}AB}|0\rangle_{R^{\prime\prime}}%
    \end{equation}
    purifies $\rho_{AB}$. Applying Uhlmann's theorem~\cite{Uhlmann1976}, the maximum separable root fidelity can be written as%
    \begin{multline}
        \max_{\sigma_{AB}\in\operatorname{SEP}(A:B)}\sqrt{F}(\rho_{AB},\sigma_{AB})
        = \\
        \max_{\substack{U,\\\left\{\left(  p(x),\psi^{x}{}_{A},\phi_{B}^{x}\right)  \right\}_{x}}} \left\vert \left(  \sum_{x^{\prime}} \sqrt{p(x^{\prime})}\langle x^{\prime}|_{R} \langle \psi^{x^{\prime}}|_{A} \langle \phi^{x^{\prime}} |_{B}\right)  \left(  U_{R^{\prime}R^{\prime\prime}\rightarrow R}\otimes I_{AB}\right) |\psi^{\rho}\rangle_{R^{\prime}AB}|0\rangle_{R^{\prime\prime}}\right\vert ,
    \end{multline}
    where the maximization is over every unitary $U_{R^{\prime}R^{\prime\prime}\rightarrow R}$. Expanding $U_{R^{\prime}R^{\prime\prime}\rightarrow R}|\psi^{\rho}\rangle_{R^{\prime}AB}|0\rangle_{R^{\prime\prime}}$ in terms of the standard basis $|x\rangle$ as%
    \begin{equation}
        U_{R^{\prime}R^{\prime\prime}\rightarrow R}|\psi^{\rho}\rangle_{R^{\prime}AB}|0\rangle_{R^{\prime\prime}}=\sum_{x\in\mathcal{X}}\sqrt{q(x)}|x\rangle_{R}|\varphi^{x}\rangle_{AB},
    \end{equation}
    we note that $U$ followed by a measurement in the standard basis induces a convex decomposition of $\rho_{AB}$ in terms of the ensemble $\{(q(x),\varphi_{AB}^{x})\}_{x}$. We can write the root fidelity as 
    \begin{align}
        \max_{\sigma_{AB}\in\operatorname{SEP}(A:B)}\sqrt{F}(\rho_{AB},\sigma_{AB})\nonumber
        &=\max_{\substack{\left\{  \left(  p(x),\psi^{x}{}_{A},\phi_{B}^{x}\right)\right\}  _{x},\\\{(q(x),\varphi_{AB}^{x})\}_{x}}} \left\vert \sum_{x,x^{\prime}}\-\sqrt{q(x)p(x^{\prime})}\langle x|x^{\prime}\rangle_{R}\langle\varphi^{x}|_{AB}|\psi^{x^{\prime}}\rangle_{A}|\phi^{x^{\prime}}\rangle_{B}\right\vert \\
        &=\max_{\substack{\left\{  \left(  p(x),\psi^{x}{}_{A},\phi_{B}^{x}\right)\right\}  _{x},\\\{(q(x),\varphi_{AB}^{x})\}_{x}}}\left\vert \sum_{x}\sqrt{q(x)p(x)}\langle\varphi^{x}|_{AB}|\psi^{x}\rangle_{A}|\phi^{x}\rangle_{B}\right\vert \\
        &=\max_{\substack{\left\{  \left(  p(x),\psi^{x}{}_{A},\phi_{B}^{x}\right)\right\}  _{x},\\\{(q(x),\varphi_{AB}^{x})\}_{x}}}\left\vert \sum_{x}\sqrt{q(x)p(x)}\langle\varphi^{x}|_{AB}|\psi^{x}\rangle_{A}|\phi^{x}\rangle_{B}\right\vert .
    \end{align}
    Next, for fixed $\left\{  \left(  p(x),\psi^{x}{}_{A},\phi_{B}^{x}\right)\right\}  _{x}$ and $\{(q(x),\varphi_{AB}^{x})\}_{x}$, we bound the objective function in the optimization above as follows:%
    \begin{align}
        \left\vert \sum_{x}\sqrt{q(x)p(x)}\langle\varphi^{x}|_{AB}|\psi^{x}\rangle_{A}|\phi^{x}\rangle_{B}\right\vert &\leq\sum_{x}\sqrt{p(x)q(x)}\left\vert \langle\varphi^{x}|_{AB}|\psi
    ^{x}\rangle_{A}|\phi^{x}\rangle_{B}\right\vert \\
    &\leq\sqrt{\sum_{x}p(x)}\sqrt{\sum_{x}q(x)\left\vert \langle\varphi
    ^{x}|_{AB}|\psi^{x}\rangle_{A}|\phi^{x}\rangle_{B}\right\vert ^{2}}\\
    &=\sqrt{\sum_{x}q(x)\left\vert \langle\varphi^{x}|_{AB}|\psi^{x}\rangle
    _{A}|\phi^{x}\rangle_{B}\right\vert ^{2}}.
    \end{align}
    The first inequality follows from the triangle inequality, and the second from an application of Cauchy--Schwarz.\ We see that equality is achieved in the second inequality by choosing
    \begin{equation}
        p(x)=\frac{q(x)\left\vert \langle\varphi^{x}|_{AB}|\psi^{x}\rangle_{A}|\phi^{x}\rangle_{B}\right\vert ^{2}}{\sum_{x}q(x)\left\vert \langle\varphi^{x}|_{AB}|\psi^{x}\rangle_{A}|\phi^{x}\rangle_{B}\right\vert ^{2}}.
    \end{equation}
    We can achieve equality in the first inequality by tuning a global phase for the state $|\psi^{x}\rangle_{A}$, which amounts to a relative phase in~\eqref{eq:sep-purify}. Putting everything together, we conclude that
    \begin{equation}
        \max_{\sigma_{AB}\in\operatorname{SEP}}F(\rho_{AB},\sigma_{AB})=\max_{\{(q(x),\varphi_{AB}^{x})\}_{x}}\sum_{x}q(x)\max_{\left(  |\psi^{x}\rangle_{A}|\phi^{x}\rangle_{B}\right)  _{x},}\left\vert \langle\varphi^{x}|_{AB}|\psi^{x}\rangle_{A}|\phi^{x}\rangle_{B}\right\vert ^{2},
    \end{equation}
    which is equivalent to the desired equality in~\eqref{eq:convex-decomp-max-sep-fid}.
\end{proof}

\section{Proof of Equation~\eqref{eq:max-sep-fid-inf-norm}}
\label{appendix:proof_max-sep-fid-inf-norm}
    
In this appendix, we show that the fidelity of separability of a bipartite state can be written in terms of the spectral norm, which was also observed in \cite[Section~6.2]{Streltsov2010}.
    \begin{proposition}
    \label{prop:max-sep-fid-inf-norm}
        For a bipartite state $\rho_{AB}$, the following equality holds
        \begin{equation}
            F_{s}(\rho_{AB})= \max_{\left\{(p(x),\psi_{AB}^{x})\right\}  _{x}}\left\{\sum_{x}p(x)\left\Vert \psi_{A}^{x}\right\Vert _{\infty}:\rho_{AB}=\sum_{x}p(x)\psi_{AB}^{x}\right\}.
        \end{equation}
    \end{proposition}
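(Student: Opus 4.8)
The plan is to deduce this proposition by combining two ingredients that are already available: the convex-roof representation of the fidelity of separability, established as Theorem~\ref{theorem:Streltsov} (equivalently equation~\eqref{eq:convex-decomp-max-sep-fid}) in Appendix~\ref{appendix:proof_alt_streltsov}, and the single-copy identity~\eqref{eq:max-sep-fid-inf-norm}, which asserts that $F_s(\psi_{AB}) = \Vert\psi_A\Vert_\infty$ for any pure bipartite state $\psi_{AB}$. Granting the convex-roof formula
\[
F_s(\rho_{AB}) = \max_{\{(p(x),\psi_{AB}^x)\}_x,\ \rho_{AB}=\sum_x p(x)\psi_{AB}^x}\ \sum_x p(x)\, F_s(\psi_{AB}^x),
\]
the claimed identity follows at once by substituting $F_s(\psi_{AB}^x) = \Vert\psi_A^x\Vert_\infty$ into each term of the sum. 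Thus the entire content of the proposition reduces to proving the pure-state identity~\eqref{eq:max-sep-fid-inf-norm}, and that is where I would direct all the effort.

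To prove $F_s(\psi_{AB}) = \Vert\psi_A\Vert_\infty$, I would take as the starting point the pure-state formula recorded in Theorem~\ref{theorem:Streltsov}, namely $F_s(\psi_{AB}) = \max_{|\phi\rangle_A,|\varphi\rangle_B}|\langle\psi|_{AB}(|\phi\rangle_A\otimes|\varphi\rangle_B)|^2$. (This reduction to pure product states is itself a consequence of the linearity of $F(\psi_{AB},\sigma_{AB}) = \langle\psi|\sigma_{AB}|\psi\rangle$ in $\sigma_{AB}$ together with the fact that the extreme points of $\operatorname{SEP}(A\!:\!B)$ are pure product states, so the maximum is attained at such a point.) I would then introduce the Schmidt decomposition $|\psi\rangle_{AB} = \sum_i\sqrt{\lambda_i}\,|i\rangle_A\otimes|i\rangle_B$, where the Schmidt coefficients $\lambda_i$ are precisely the eigenvalues of $\psi_A$, so that $\Vert\psi_A\Vert_\infty = \lambda_{\max}$.

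The main computation is then to evaluate the maximal overlap. Writing $a_i = \langle i|\phi\rangle_A$ and $b_i = \langle i|\varphi\rangle_B$, the overlap is $\sum_i\sqrt{\lambda_i}\,a_i b_i$, and I would bound its modulus by the triangle inequality followed by Cauchy--Schwarz to obtain $|\sum_i\sqrt{\lambda_i}\,a_i b_i| \le \sqrt{\lambda_{\max}}\sum_i|a_i|\,|b_i| \le \sqrt{\lambda_{\max}}$, the last step using $\sum_i|a_i|^2 = \sum_i|b_i|^2 = 1$. Squaring gives $F_s(\psi_{AB})\le\lambda_{\max}=\Vert\psi_A\Vert_\infty$, and choosing $|\phi\rangle_A$ and $|\varphi\rangle_B$ both equal to the top Schmidt vector saturates every inequality, establishing equality. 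I do not expect a serious obstacle here: the argument is a short chain of standard inequalities. The only points demanding a little care are the reduction to pure product states (noting that any component of $|\phi\rangle_A$ outside the support of $\psi_A$ can only decrease the overlap, so restricting to the Schmidt basis is without loss of generality) and the verification that the Cauchy--Schwarz bound is genuinely tight at the top Schmidt vector, so that $\lambda_{\max}$ is achieved rather than merely an upper bound.
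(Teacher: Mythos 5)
Your proposal is correct, and its overall architecture coincides with the paper's: both reduce Proposition~\ref{prop:max-sep-fid-inf-norm} to the convex-roof formula \eqref{eq:convex-decomp-max-sep-fid} (Theorem~\ref{theorem:Streltsov}) together with the pure-state identity \eqref{eq:max-sep-fid-inf-norm}, and then substitute the latter into the former. The only divergence is in how the pure-state identity $F_s(\psi_{AB})=\Vert\psi_A\Vert_\infty$ is established. The paper proceeds by a chain of exact equalities with nothing to saturate: maximizing the overlap over $|\varphi\rangle_B$ is recognized as the variational characterization of the Euclidean norm, giving $\max_{|\phi\rangle_A}\Vert\langle\phi|_A\otimes I_B|\psi\rangle_{AB}\Vert_2^2=\max_{|\phi\rangle_A}\operatorname{Tr}[|\phi\rangle\!\langle\phi|_A\,\psi_A]$, and the remaining maximization over $|\phi\rangle_A$ is the variational characterization of the spectral norm, yielding $\Vert\psi_A\Vert_\infty$ with no Schmidt decomposition and no tightness check. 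Your route instead expands in the Schmidt basis, bounds the overlap above by $\sqrt{\lambda_{\max}}$ via the triangle and Cauchy--Schwarz inequalities, and exhibits the top Schmidt pair as an optimizer; this is equally valid (and your remark that components of $|\phi\rangle_A$ outside the support of $\psi_A$ only weaken the bound disposes of the one loose end), with the small benefit of making the optimal product state explicit. Incidentally, your bound-and-saturate technique is exactly the one the paper deploys elsewhere---in Appendix~\ref{appendix:proof_alt_streltsov}, to prove the convex-roof formula you are granting---so the two arguments differ only in which tool is used where: the paper's version of the pure-state step is marginally more economical, while yours is more concrete.
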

    
    \begin{proof}
    Consider that the following holds for a pure bipartite state $\psi_{AB}$:
    \begin{align}
        F_{s}(\psi_{AB}) &  =\max_{|\phi\rangle_{A},|\varphi\rangle_{B}}\left\vert\langle\psi|_{AB}|\phi\rangle_{A}\otimes|\varphi\rangle_{B}\right\vert ^{2}\\
        &=\max_{|\phi\rangle_{A},|\varphi\rangle_{B}}\left\vert \langle\phi|_{A}\otimes\langle\varphi|_{B}|\psi\rangle_{AB}\right\vert ^{2}\\
        &=\max_{|\phi\rangle_{A}}\left\Vert \langle\phi|_{A}\otimes I_{B}|\psi\rangle_{AB}\right\Vert _{2}^{2}\\
        &=\max_{|\phi\rangle_{A}}\operatorname{Tr}[(|\phi\rangle\!\langle\phi|_{A}\otimes I_{B})\psi_{AB}]\\
        &=\max_{|\phi\rangle_{A}}\operatorname{Tr}[|\phi\rangle\!\langle\phi|_{A}\psi_{A}]\\
        &=\left\Vert \psi_{A}\right\Vert _{\infty}.
    \end{align}
    The first two equalities follow from the definition and a rewriting. The third equality follows from the variational characterization of the Euclidean norm of a vector. The fourth equality follows because
    \begin{align}
         \left\Vert \langle\phi|_{A}\otimes I_{B}|\psi\rangle_{AB}\right\Vert _{2}^{2} 
        & =\left(\langle\psi|_{AB}|\phi\rangle_{A}\otimes I_{B}\right)\left(\langle\phi|_{A}\otimes I_{B}|\psi\rangle_{AB}\right)  \\
        &=\langle\psi|_{AB}|\phi\rangle\!\langle\phi|_{A}\otimes I_{B}|\psi\rangle_{AB}\\
        & =\operatorname{Tr}[(|\phi\rangle\!\langle\phi|_{A}\otimes I_{B})\psi_{AB}].
    \end{align}
    The next step follows by taking a partial trace and the final equality from the variational characterization of the spectral norm. So this implies the desired equality after applying~\eqref{eq:convex-decomp-max-sep-fid}.
    \end{proof}
    
\section{Proof of Theorem~\ref{theorem:global_cost_func}}
\label{appendix:step_by_step}
    In this appendix, we show that the acceptance probability of our VQSA is indeed equal to $F_s(\rho_{AB})$ if the parameterized unitary circuits can express all possible unitary operators of their respective systems. For this, let us track the state of the VQSA at the points indicated in Figure~\ref{fig:Max_Sep_Fidelity_costfun_proof}. 
    
    \begin{itemize}
        \item At Step $(1)$, the unitary $U^\rho$ prepares the pure state $\psi_{RAB}$. This is a specific initial purification of $\rho_{AB}$.
        
        \item At Step $(2)$, we apply the parameterized unitary circuit $W_R(\Theta)$ to $\psi_{RAB}$. Expanding $W_R(\Theta)|\psi^{\rho}\rangle_{RAB}$ in terms of the standard basis $\{|x\rangle\}_x$ leads to
            \begin{equation}
                W_R(\Theta)|\psi\rangle_{RAB}=\sum_{x\in\mathcal{X}}\sqrt{q(x)}|x\rangle_{R}|\varphi^{x}\rangle_{AB}.
            \end{equation}
            
        \item At Step $(3)$, the measurement outcome $x$ occurs with probability $q(x)$, and the state vector of registers $A$ and $B$ becomes  $|\varphi^{x}\rangle_{AB}$.
        
        \item At Step $(4)$, depending on the measurement outcome $x$, we apply the parameterized unitary circuit $U^x_A(\Theta^x)$ to register $A$. The state vector is now $U^x_A(\Theta^x)|\varphi^{x}\rangle_{AB}$.
        
        \item At Step $(5)$, we trace over $B$ and measure $A$ in the standard basis. We accept when we get the all-zeros outcome. The acceptance probability is then equal to
        \begin{equation}
            \sum_{x\in\mathcal{X}}q(x)\, \langle0|\,U^x_A(\Theta^x)\varphi^{x}_{A}\left(U^x_A\right)^\dagger|0\rangle
            =\sum_{x\in\mathcal{X}}q(x)\,\langle\phi^x|_A\varphi^{x}_{A}|\phi^x\rangle_A,
        \end{equation}
        where we have defined $|\phi^x\rangle_A \coloneqq \left(U^x_A\right)^\dagger|0\rangle$.
        
        \item Maximizing the acceptance probability corresponds to maximization over the parameters of $W_R(\Theta)$ and~$U^x_A(\Theta^x)$. 
        
        \item Maximization over the parameters of $W_R$ is a maximization over all possible pure-state decompositions of~$\rho_{AB}$. 
        
        \item Maximization over the parameters of $U^x_A(\Theta^x)$ is a maximization of $\langle\phi^x|\varphi^{x}_{A}|\phi^x\rangle$, which yields the value~$\left\Vert \varphi_{A}^{x}\right\Vert _{\infty}$.
        
        \item The maximum acceptance probability is equal to
        \begin{equation}
             \max_{\left\{(p(x),\psi_{AB}^{x})\right\}  _{x}}\left\{\sum_{x}p(x)\left\Vert \varphi_{A}^{x}\right\Vert _{\infty}:\rho_{AB}=\sum_{x}p(x)\psi_{AB}^{x}\right\} ,
        \end{equation}
        which is in turn equal to $F_s(\rho_{AB})$, by Proposition~\ref{prop:max-sep-fid-inf-norm}.
    \end{itemize} 
    
    This proves that the maximum acceptance probability equals $F_s(\rho_{AB})$ if the parameterized unitary circuits  express all possible unitary operators acting on their respective systems. However, we note that any ansatz employed for the parameterized unitary circuits has limited expressibility. As such, the maximum acceptance probability obtained via the VQSA, in principle, will also be closer to the actual value of $F_s(\rho_{AB})$ if we use a more expressive ansatz. 

    \begin{figure}
        \centering
        \includegraphics[width=0.6\columnwidth]{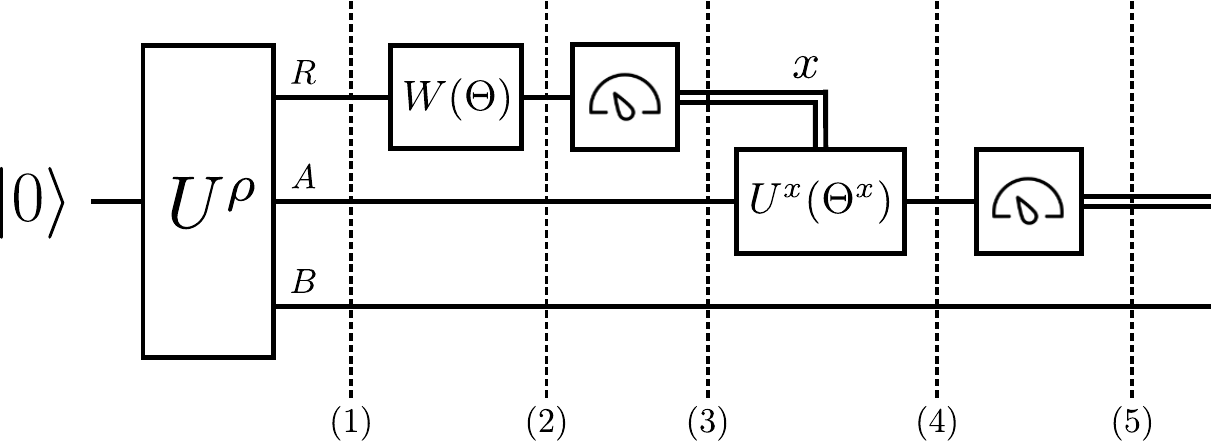}
        \caption{VQSA to estimate the fidelity of separability $F_s(\rho_{AB})$. 
        The unitary circuit $U^\rho$ produces the state $\psi_{RAB}$, which is a purification of $\rho_{AB}$. The parameterized circuit $W_R(\Theta)$ acts on $R$ to evolve $\psi_{RAB}$ to another pure-state decomposition of $\rho_{AB}$. The following measurement steers the system $AB$ to be in a pure state $\psi_{AB}^{x}$ if the measurement outcome $x$ occurs. Conditioned on the outcome $x$, the final parameterized circuit $U^{x}_{A}(\Theta^x)$ and the subsequent measurement estimates~$\left\|\psi_{A}^{x}\right\|_{\infty}$.}
        \label{fig:Max_Sep_Fidelity_costfun_proof}
    \end{figure}

\section{First Benchmarking SDP \texorpdfstring{$\widetilde{F}_s^1$}{Lg} and Proof of Equation~\eqref{eq:first-sdp-bounds}}

\label{appendix:ppt-k-state-sdp}

     This appendix details the derivation of our first benchmarking SDP $\widetilde{F}_s^1$, based on the SDP for fidelity~\cite{Wat13}. Let $\rho_{AB}$ and $\sigma_{AB}$ be bipartite states. The SDP for the root fidelity $\sqrt{F}(\rho_{AB},\sigma_{AB})$, which makes use of Uhlmann's theorem~\cite{Uhlmann1976}, is as follows:
    \begin{equation}
    \label{eqn:fidelity_sdps}
        \sqrt{F}(\rho_{AB},\sigma_{AB}) = \max_{\substack{X_{AB} \in\mathcal{L}(\mathcal{H}_{AB})}}
        \left\{\begin{array}
                [c]{c}%
                \operatorname{Re}[\operatorname{Tr}[X_{AB}]]:
                \begin{bmatrix}
                \rho_{AB} & X_{AB}\\
                X_{AB}^{\dag} & \sigma_{AB}%
                \end{bmatrix}
                \geq0 
        \end{array}\right\},
    \end{equation}
    where $\mathcal{L}(\mathcal{H}_{AB})$ is the set of all linear operators acting on the Hilbert space $\mathcal{H}_{AB}$.

    We would ideally like to include a maximization over the set of all separable states, but it is well known to be computationally challenging to optimize over this set~\cite{G03, Gharibian2010}. Furthermore, it is not generally possible to characterize the set of separable states using semi-definite constraints~\cite{Fawzi2021}. Instead, we approximate the set by constraining $\sigma_{AB}$ to have a positive partial transpose (PPT)~\cite{Peres1996, Horodecki1996} and be $k$-extendible~\cite{W89a, Doherty2004}, since all separable states satisfy these constraints. Let $\widetilde{F}_s^1(\rho_{AB})$ denote the resulting quantity, the square root of which is defined as follows:
    \begin{equation}
        \sqrt{\widetilde{F}_s^1}(\rho_{AB}, k) \coloneqq 
        \max_{\substack{X_{AB} \in\mathcal{L}(\mathcal{H}_{AB}),\\\sigma_{AB^{k}}\geq0}}
        \left\{\begin{array}
                [c]{c}
                \operatorname{Re}[\operatorname{Tr}[X_{AB}]]:\\%
                \begin{bmatrix}
                \rho_{AB} & X_{AB}\\
                X_{AB}^{\dag} & \sigma_{AB_{1}}%
                \end{bmatrix}
                \geq0,\\
                \operatorname{Tr}[\sigma_{AB^{k}}]=1,\\
                \sigma_{AB^{k}}=\mathcal{P}_{B^{k}}(\sigma_{AB^{k}}),\\ 
                T_{B_{1\cdots j}}(\sigma_{AB_{1\cdots j}})\geq 0 \quad \forall j\in \{1,\ldots, k\}
        \end{array}\right\} ,
        \label{eq:benchmark-sdp-1}
    \end{equation}
    where $B^{k}\equiv B_1 \cdots B_k$, the notation $T_R$ denotes the partial transpose map acting on system $R$, and $\mathcal{P}_{B^{k}}$ denotes the channel that performs a uniformly random permutation of systems $B_1$ through~$B_k$.
    
    We now prove the inequalities in~\eqref{eq:first-sdp-bounds}.
    Due to the containment discussed above, note that
    \begin{equation}
        F_s(\rho_{AB}) \leq \widetilde{F}_s^1(\rho_{AB}, k).
    \end{equation}
    An opposite bound on $\widetilde{F}_s^1(\rho_{AB}, k)$ in terms of $F_s(\rho_{AB})$ is as follows:
    \begin{equation}
\sqrt{1-F_{s}(\rho_{AB})}\leq
\sqrt{1-\widetilde{F}_{s}^{1}(\rho_{AB}, k)}
+2\sqrt{\frac{\left\vert B\right\vert ^{2}}{k}\left(  1-\frac{\left\vert
B\right\vert ^{2}}{k}\right)  },
\label{eq:sdp-benchmark-1-good-approx}
\end{equation}
which can be rewritten as in~\eqref{eq:first-sdp-bounds}:
\begin{equation}
    \widetilde{F}_{s}^{1}(\rho_{AB}, k)
    \leq 1- \left[\sqrt{1-F_{s}(\rho_{AB})} -2\sqrt{\frac{\left\vert B\right\vert ^{2}}{k}\left(  1-\frac{\left\vert
B\right\vert ^{2}}{k}\right)  } \right]^2.
\end{equation}
It is a consequence of \cite[Theorem~II.7']{christandl2007one}, the triangle inequality for sine distance~\cite{R06}, and  the Fuchs-van-de-Graaf inequalities~\cite{FvG99}.
    Indeed, consider that
\begin{align}
\widetilde{F}_{s}^{1}(\rho_{AB}, k)  & =\max_{\sigma_{AB}\in\text{EXT-PPT}_{k}%
}F(\rho_{AB},\sigma_{AB})
\label{eq:relaxed-benchmark-1a}\\
& \leq\max_{\sigma_{AB}\in\text{EXT}_{k}}F(\rho_{AB},\sigma_{AB}),
\label{eq:relaxed-benchmark-1}
\end{align}
where EXT-PPT$_{k}$ denotes the set of $\sigma_{AB}$ being optimized over in~\eqref{eq:benchmark-sdp-1} and
EXT$_{k}$ is the set of $k$-extendible states. Now recall that for all $\omega^k_{AB} \in \operatorname{EXT}_{k}$ (see \cite[Theorem~II.7']{christandl2007one} and also just above \cite[Theorem~II.2]{christandl2007one} for their norm convention)
\begin{equation}
\min_{\sigma_{AB}\in\operatorname{SEP}(A:B)}\frac{1}{2}\left\Vert \omega
_{AB}^{k}-\sigma_{AB}\right\Vert _{1}\leq\frac{2\left\vert B\right\vert ^{2}%
}{k},
\end{equation}
the sine distance obeys the triangle inequality~\cite{R06}:%
\begin{equation}
\sqrt{1-F(\omega,\tau)}\leq\sqrt{1-F(\omega,\xi)}+\sqrt{1-F(\xi,\tau)},
\end{equation}
and the Fuchs-van-de-Graaf inequality~\cite{FvG99}:%
\begin{equation}
1-\sqrt{F}(\omega,\tau)\leq\frac{1}{2}\left\Vert \omega-\tau\right\Vert _{1},
\end{equation}
where $\omega$, $\tau$, and $\xi$ are states. If $\frac{1}{2}\left\Vert
\omega-\tau\right\Vert _{1}\leq\varepsilon$, the latter implies that
\begin{equation}
1-\sqrt{F}(\omega,\tau)   \leq\varepsilon 
\ \Leftrightarrow \ \sqrt{1-F(\omega,\tau)}  \leq\sqrt{\varepsilon\left(
2-\varepsilon\right)  } .
\end{equation}
Letting $\sigma_{AB}^{k}$ be an optimal choice in~\eqref{eq:relaxed-benchmark-1} and $\sigma_{AB}'$ an
optimal choice for $\min_{\sigma_{AB}\in\operatorname{SEP}(A:B)}\frac{1}%
{2}\left\Vert \omega_{AB}^{k}-\sigma_{AB}\right\Vert _{1}$, this implies that%
\begin{align}
\min_{\sigma_{AB}\in\operatorname{SEP}(A:B)}\sqrt{1-F(\rho_{AB},\sigma
_{AB})}
& \leq\sqrt{1-F(\rho_{AB},\sigma_{AB}^{\prime})}\\
& \leq\sqrt{1-F(\rho_{AB},\sigma_{AB}^{k})}+\sqrt{1-F(\sigma_{AB}^{\prime
},\sigma_{AB}^{k})}\\
& \leq\sqrt{1-F(\rho_{AB},\sigma_{AB}^{k})}+2\sqrt{\frac{\left\vert
B\right\vert ^{2}}{k}\left(  1-\frac{\left\vert B\right\vert ^{2}}{k}\right)
}.
\end{align}
Rearranging and applying~\eqref{eq:relaxed-benchmark-1a}--\eqref{eq:relaxed-benchmark-1}, we arrive at the claimed inequality in~\eqref{eq:sdp-benchmark-1-good-approx}.

\section{Second Benchmarking SDP \texorpdfstring{$\widetilde{F}_s^2$}{Lg}
and Proof of Equation~\eqref{eqn:swap-test-ppt-k-channel-sdp}}
\label{appendix:proof_swap-test-ppt-k-channel-sdp}

    In this appendix, we detail the derivation of our second benchmark SDP $\widetilde{F}_s^2$, which is an SDP that approximates~\eqref{eqn:qip-eb_accept_prob} in the main text. Consider a version of the distributed quantum computation that led to~\eqref{eqn:qip-eb_accept_prob} where, instead of restricting the prover to only entanglement-breaking channels, we insist that the prover sends back $k$ systems labeled as $A_{1}\cdots A_{k}$. Then, the verifier randomly selects one of the $k$ systems and performs a swap test on the $A$ system of the state $\psi_{RAB}$. This random selection is conducted so that the prover output is effectively reduced to that of an approximate entanglement-breaking channel. Note that the resulting interactive proof is in QIP(2). More specifically, the acceptance probability of this interactive proof system is given by
    \begin{equation}\label{eqn:part_prep_channel}
       \max_{\mathcal{P}_{R\rightarrow A_{1}^{\prime}\cdots A_{k}^{\prime}}}\operatorname{Tr}[\Pi_{A^{\prime}A}^{\operatorname{sym}}\overline{\mathcal{P}}_{R\rightarrow A^{\prime}}(\psi_{RAB})],
    \end{equation}
    where
    \begin{equation}\label{eqn:whole_prep_channel}
        \overline{\mathcal{P}}_{R\rightarrow A^{\prime}}\coloneqq \frac{1}{k}\sum_{i=1}^{k}\operatorname{Tr}_{A_{1}^{k\prime}\backslash A_{i}}\circ\mathcal{P}_{R\rightarrow A_{1}^{\prime}\cdots A_{k}^{\prime}},
    \end{equation}
    and $\mathcal{P}$ is an arbitrary  channel. Observing that $\overline{\mathcal{P}}_{R\rightarrow A^{\prime}}$ is a $k$-extendible channel~\cite{PBHS13, Kaur2018, KDWW21, BBFS18}, it follows that%
    \begin{align}
    \label{eqn:part_prep_chan_k-ext}
        \max_{\mathcal{P}_{R\rightarrow A_{1}^{\prime}\cdots A_{k}^{\prime}}}\operatorname{Tr}[\Pi_{A^{\prime}A}^{\operatorname{sym}}\overline{\mathcal{P}}_{R\rightarrow A^{\prime}}(\psi_{RAB})] 
        &=\max_{\mathcal{E}_{R\rightarrow A^{\prime}}^{k}\in\text{EXT}_{k}}\operatorname{Tr}[\Pi_{A^{\prime}A}^{\operatorname{sym}}\mathcal{E}_{R\rightarrow A^{\prime}}^{k}(\psi_{RAB})],
    \end{align}
    where EXT$_{k}$ denotes the set of $k$-extendible channels. These are defined by $\mathcal{E}_{R\rightarrow A^{\prime}}^{k}(\rho_{SR})\in$ EXT$_{k}(S\!:\!A^{\prime})$ for every input state $\rho_{SR}$, where EXT$_{k}(S\!:\!A^{\prime})$ denotes the set of $k$-extendible states. Hence, the quantity in \eqref{eqn:part_prep_chan_k-ext} is an upper bound on~\eqref{eqn:qip-eb_accept_prob}, and it is given by the following SDP: 
    \begin{equation}
    \max_{\Gamma^{\mathcal{E}^{k}}_{RA^{\prime k}}\geq 0}
    \left\{\begin{array}
            [c]{c}%
            \operatorname{Tr}[\Pi_{A^{\prime}A}^{\operatorname{sym}} \operatorname{Tr}_{R}[T_R(\psi_{RAB})\Gamma^{\mathcal{E}^{k}}_{RA^{\prime}_1}]]:\\%
            \operatorname{Tr}_{A^{\prime k}}[\Gamma^{\mathcal{E}^{k}}_{RA^{\prime k}}]=I_R,\\
            \Gamma^{\mathcal{E}^{k}}_{RA^{\prime k}}=\mathcal{P}_{A^{\prime k}}(\Gamma^{\mathcal{E}^{k}}_{RA^{\prime k}})
    \end{array}\right\} ,
    \end{equation}
    where $\Gamma^{\mathcal{E}^{k}}_{RA^{\prime}}$ is the Choi operator of $\mathcal{E}^{k}$ and $\mathcal{P}_{A^{\prime k}}$ is the channel that randomly permutes the systems~$A^{\prime k}$. We can add further PPT constraints to this SDP, which is still an upper bound on~\eqref{eqn:qip-eb_accept_prob} and leads to our second benchmark $\widetilde{F}_s^2(\rho_{AB}, k)$:
    \begin{equation}
    \frac{1}{2}(1+\widetilde{F}_s^2(\rho_{AB}, k)) \coloneqq 
    \max_{\Gamma^{\mathcal{E}^{k}}_{RA^{\prime k}}\geq 0}
    \left\{\begin{array}
            [c]{c}%
            \operatorname{Tr}[\Pi_{A^{\prime}A}^{\operatorname{sym}} \operatorname{Tr}_{R}[T_R(\psi_{RAB})\Gamma^{\mathcal{E}^{k}}_{RA^{\prime}_1}]]:\\%
            \operatorname{Tr}_{A^{\prime k}}[\Gamma^{\mathcal{E}^{k}}_{RA^{\prime k}}]=I_R,\\
            \Gamma^{\mathcal{E}^{k}}_{RA^{\prime k}}=\mathcal{P}_{A^{\prime k}}(\Gamma^{\mathcal{E}^{k}}_{RA^{\prime k}}),\\ 
            T_{A^{\prime}_{1\cdots j}}(\Gamma^{\mathcal{E}^{k}_{RA^{\prime k}}}) \geq 0 \quad \forall j \in \{1,\ldots,  k\}
    \end{array}\right\} 
    \label{eq:2nd-bench-def-PPT-EXT}
    \end{equation}
    where the map $T_R$ is the partial transpose map acting on system~$R$.
    
    The following theorem indicates how $\widetilde{F}_s^2$ approximates $F_{s}(\rho_{AB})$.

    \begin{proposition}
    \label{prop:swap-test-ppt-k-channel-sdp}
    The following bound holds for a bipartite state~$\rho_{AB}$:
        \begin{equation}
            F_{s}(\rho_{AB}) \leq \widetilde{F}_s^2(\rho_{AB}, k) \leq 
            F_{s}(\rho_{AB})+\frac{4  \left\vert A\right\vert^{3} \left\vert B\right\vert}{k}.
        \end{equation}
    \end{proposition}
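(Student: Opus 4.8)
The plan is to establish the two inequalities separately: the lower bound by a feasibility/inclusion argument, and the upper bound by invoking a quantum de Finetti theorem at the level of the channel's Choi operator, exploiting the linearity of the swap test to obtain the claimed $O(1/k)$ scaling.

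For the lower bound $F_s(\rho_{AB})\le\widetilde{F}_s^2(\rho_{AB},k)$, I would observe that every entanglement-breaking channel is feasible for the SDP in~\eqref{eq:2nd-bench-def-PPT-EXT}. An EB channel has a separable Choi operator, which is automatically PPT and $k$-extendible, and it admits the permutation-symmetric $k$-output extension $\mathcal{E}^{\mathrm{ext}}(\cdot)=\sum_x\operatorname{Tr}[\mu^x_R(\cdot)](\phi^x)^{\otimes k}$. Hence $\operatorname{EB}_{R\to A^\prime}\subseteq\text{EXT-PPT}_k$, so the maximum of $\operatorname{Tr}[\Pi_{A^{\prime}A}^{\operatorname{sym}}\mathcal{E}(\psi_{RAB})]$ over the larger feasible set is at least its maximum over EB channels, which Theorem~\ref{theorem:qip-eb_msf} identifies with $\tfrac12(1+F_s(\rho_{AB}))$. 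Comparing with $\tfrac12(1+\widetilde{F}_s^2)=\max_{\text{EXT-PPT}_k}\operatorname{Tr}[\Pi_{A^{\prime}A}^{\operatorname{sym}}\mathcal{E}(\psi_{RAB})]$ yields $F_s\le\widetilde{F}_s^2$.

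For the upper bound, I would first discard the PPT constraints, a relaxation that can only increase the value, reducing to the maximum over $k$-extendible channels already recorded in~\eqref{eqn:part_prep_chan_k-ext}. Let $\mathcal{E}^k$ be an optimal $k$-extendible channel with Choi operator $\Gamma_{RA^\prime_1\cdots A^\prime_k}$ of its symmetric extension, so that the normalized single-output marginal $J\coloneqq\Gamma_{RA^\prime_1}/\left\vert R\right\vert$ is a state that is permutation-symmetric in the $A^\prime$ registers and satisfies $\operatorname{Tr}_{A^\prime_1}[J]=I_R/\left\vert R\right\vert$. Applying the quantum de Finetti theorem~\cite[Theorem~II.7']{christandl2007one} with the $A^\prime$ systems (each of dimension $\left\vert A\right\vert$) as the symmetric register, I obtain a state $\hat{J}_{RA^\prime_1}$ separable across the cut $R\!:\!A^\prime$ with $\tfrac12\left\Vert J-\hat{J}\right\Vert_1\le 2\left\vert A\right\vert^2/k$. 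Since a separable Choi operator corresponds to an entanglement-breaking channel, after correcting $\hat{J}$ to a bona fide Choi operator (rescaling so that its $R$-marginal is exactly $I_R/\left\vert R\right\vert$, which perturbs the trace distance only comparably), it defines an EB channel $\hat{\mathcal{E}}$.

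Finally, I would transfer this closeness to the acceptance probability. Because the swap-test acceptance probability is linear in the channel, the error does not suffer the square-root loss present in the first benchmark~\eqref{eq:first-sdp-bounds}: using $0\le\Pi_{A^{\prime}A}^{\operatorname{sym}}\le I$, the diamond-norm bound $\left\Vert\mathcal{E}^k-\hat{\mathcal{E}}\right\Vert_\diamond\le\left\vert R\right\vert\left\Vert J-\hat{J}\right\Vert_1$, and $\left\vert R\right\vert\le\left\vert A\right\vert\left\vert B\right\vert$, I would obtain
\begin{equation}
\left\vert\operatorname{Tr}[\Pi_{A^{\prime}A}^{\operatorname{sym}}(\mathcal{E}^k-\hat{\mathcal{E}})(\psi_{RAB})]\right\vert\le\tfrac12\left\Vert\mathcal{E}^k-\hat{\mathcal{E}}\right\Vert_\diamond\le\frac{2\left\vert A\right\vert^3\left\vert B\right\vert}{k}.
\end{equation}
Since $\hat{\mathcal{E}}$ is entanglement-breaking, Theorem~\ref{theorem:qip-eb_msf} bounds its acceptance probability by $\tfrac12(1+F_s)$, so combining the estimates gives $\tfrac12(1+\widetilde{F}_s^2)\le\tfrac12(1+F_s)+2\left\vert A\right\vert^3\left\vert B\right\vert/k$, i.e., the claimed $\widetilde{F}_s^2\le F_s+4\left\vert A\right\vert^3\left\vert B\right\vert/k$. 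I expect the main obstacle to be the channel-correction step: the de Finetti theorem only furnishes a nearby separable \emph{state}, and turning it into a legitimate entanglement-breaking channel while keeping the perturbation $O(\left\vert A\right\vert^2/k)$ and pinning down the constant $4$ requires care. A secondary point worth emphasizing is why linearity of the swap test is essential, since it is precisely what upgrades the $O(1/\sqrt{k})$ scaling that a fidelity-continuity argument (as in Appendix~\ref{appendix:ppt-k-state-sdp}) would produce into the $O(1/k)$ scaling claimed here.
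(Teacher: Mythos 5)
Your proposal is correct and follows essentially the same route as the paper: the lower bound via the inclusion $\operatorname{EB}_{R\to A^{\prime}}\subseteq\text{EXT-PPT}_{k}$ combined with Theorem~\ref{theorem:qip-eb_msf}, and the upper bound by applying \cite[Theorem~II.7']{christandl2007one} to the Choi operator of an optimal $k$-extendible channel, passing to a nearby entanglement-breaking channel in diamond norm, and using linearity of the acceptance probability to avoid any square-root loss. The channel-correction step you flag as the main obstacle is precisely the point the paper handles by citing \cite[Lemma~7]{Wallman_2014}, which converts the state-level de Finetti bound into the channel-level statement $\min_{\mathcal{E}\in\operatorname{EB}}\frac{1}{2}\left\Vert\mathcal{E}^{k}-\mathcal{E}\right\Vert_{\diamond}\leq\frac{2\left\vert R\right\vert\left\vert A^{\prime}\right\vert^{2}}{k}$, yielding the same final constant you obtain.
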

    \begin{proof}
    Since every entanglement-breaking channel is $k$-extendible, we trivially find that%
    \begin{align}
        \frac{1+F_{s}(\rho_{AB})}{2} &=\max_{\mathcal{E}\in\operatorname{EB}}\operatorname{Tr}[(\Pi_{A^{\prime}A}^{\operatorname{sym}}\otimes I_{RB})\mathcal{E}_{R\rightarrow A^{\prime}}(\psi_{RAB})]\\
        & \leq\max_{\mathcal{E}_{R\rightarrow A^{\prime}}^{k}\in\text{EXT-PPT}_{k}}\operatorname{Tr}[(\Pi_{A^{\prime}A}^{\operatorname{sym}}\otimes I_{RB})\mathcal{E}_{R\rightarrow A^{\prime}}^{k}(\psi_{RAB})]
        \\
        & = \frac{1+\widetilde{F}_s^2(\rho_{AB}, k)}{2},
    \end{align}
    where EXT-PPT$_{k}$ denotes the set of channels satisfying the constraints in~\eqref{eq:2nd-bench-def-PPT-EXT}. 
    Consider the following bound for a $k$-extendible state $\omega_{AB}^{k}$ \cite[Theorem~II.7'] {christandl2007one} (see also just above \cite[Theorem~II.2]{christandl2007one} for their norm convention):
    \begin{equation}
        \min_{\sigma_{AB}\in\operatorname{SEP}(A:B)}\frac{1}{2}\left\Vert\omega_{AB}^{k}-\sigma_{AB}\right\Vert _{1}\leq\frac{2\left\vert B\right\vert ^{2}}{k}.
    \end{equation}
    We can use it and the result of \cite[Lemma~7]{Wallman_2014} to conclude that
    \begin{equation}
        \min_{\mathcal{E}\in\operatorname{EB}}\frac{1}{2}\left\Vert \mathcal{E}^{k}-\mathcal{E}\right\Vert _{\diamond}\leq\frac{2\left\vert R\right\vert\left\vert A^{\prime}\right\vert ^{2}}{k}.
    \end{equation}
    Then consider that, for every fixed choice of $\mathcal{E}_{R\rightarrow A^{\prime}}^{k}\in$ EXT-PPT$_{k} $, there exists an entanglement-breaking channel $\mathcal{E}$ satisfying%
    \begin{equation}
        \frac{1}{2}\left\Vert \mathcal{E}^{k}-\mathcal{E}\right\Vert _{\diamond}\leq\frac{2\left\vert R\right\vert \left\vert A^{\prime}\right\vert ^{2}}{k}.
    \end{equation}
    Then we find that
    \begin{align}
        & \operatorname{Tr}[(\Pi_{A^{\prime}A}^{\operatorname{sym}}\otimes I_{RB})\mathcal{E}_{R\rightarrow A^{\prime}}^{k}(\psi_{RAB})]\notag\\
        &\leq\operatorname{Tr}[(\Pi_{A^{\prime}A}^{\operatorname{sym}}\otimes I_{RB})\mathcal{E}_{R\rightarrow A^{\prime}}(\psi_{RAB})]+\frac{2\left\vert R\right\vert \left\vert A^{\prime}\right\vert ^{2}}{k}\\
        &\leq\max_{\mathcal{E}\in\operatorname{EB}}\operatorname{Tr}[(\Pi_{A^{\prime}A}^{\operatorname{sym}}\otimes I_{RB})\mathcal{E}_{R\rightarrow A^{\prime}}(\psi_{RAB})]+\frac{2\left\vert R\right\vert \left\vert A^{\prime}\right\vert ^{2}}{k}\\
        &=\frac{1+F_{s}(\rho_{AB})}{2}+\frac{2\left\vert R\right\vert \left\vert A^{\prime}\right\vert ^{2}}{k}.
    \end{align}
    Since the inequality holds for every $\mathcal{E}_{R\rightarrow A^{\prime}}^{k}\in$ EXT-PPT$_{k}$, it follows that
    \begin{equation}
        \max_{\mathcal{E}_{R\rightarrow A^{\prime}}^{k}\in\text{EXT-PPT}_{k}}\operatorname{Tr}[(\Pi_{A^{\prime}A}^{\operatorname{sym}}\otimes I_{RB})\mathcal{E}_{R\rightarrow A^{\prime}}^{k}(\psi_{RAB})]
        \leq\frac{1+F_{s}(\rho_{AB})}{2}+\frac{2\left\vert R\right\vert \left\vert A^{\prime}\right\vert ^{2}}{k}.
    \end{equation}
    This concludes the proof after recalling that $|R| \leq |A| |B|$, observing that $|A|=|A'|$, and performing some simple algebra.
    \end{proof}

    \begin{remark}
    \label{rem:de-finetti-qip}
        Although the correction term in the upper bound in Proposition~\ref{prop:swap-test-ppt-k-channel-sdp} decreases with increasing $k$, it is clear that, for it to become arbitrarily small,  $k$ needs to be larger than $|A|^3|B|$, which is exponential in the number of qubits for the state $\rho_{AB}$. Thus, this approach does not lead to an efficient method for placing the fidelity of separability estimation problem in QIP(2) or even QIP.
    \end{remark}

\section{Further Simulations and Details}
\label{appendix:simulations}

    \begin{figure}
        \centering
        \subfigure[Fidelity of separability calculated for a random product state using the local reward function of the VQSA and benchmarked by $\widetilde{F}_s^1$.]{\includegraphics[width=0.48\columnwidth]{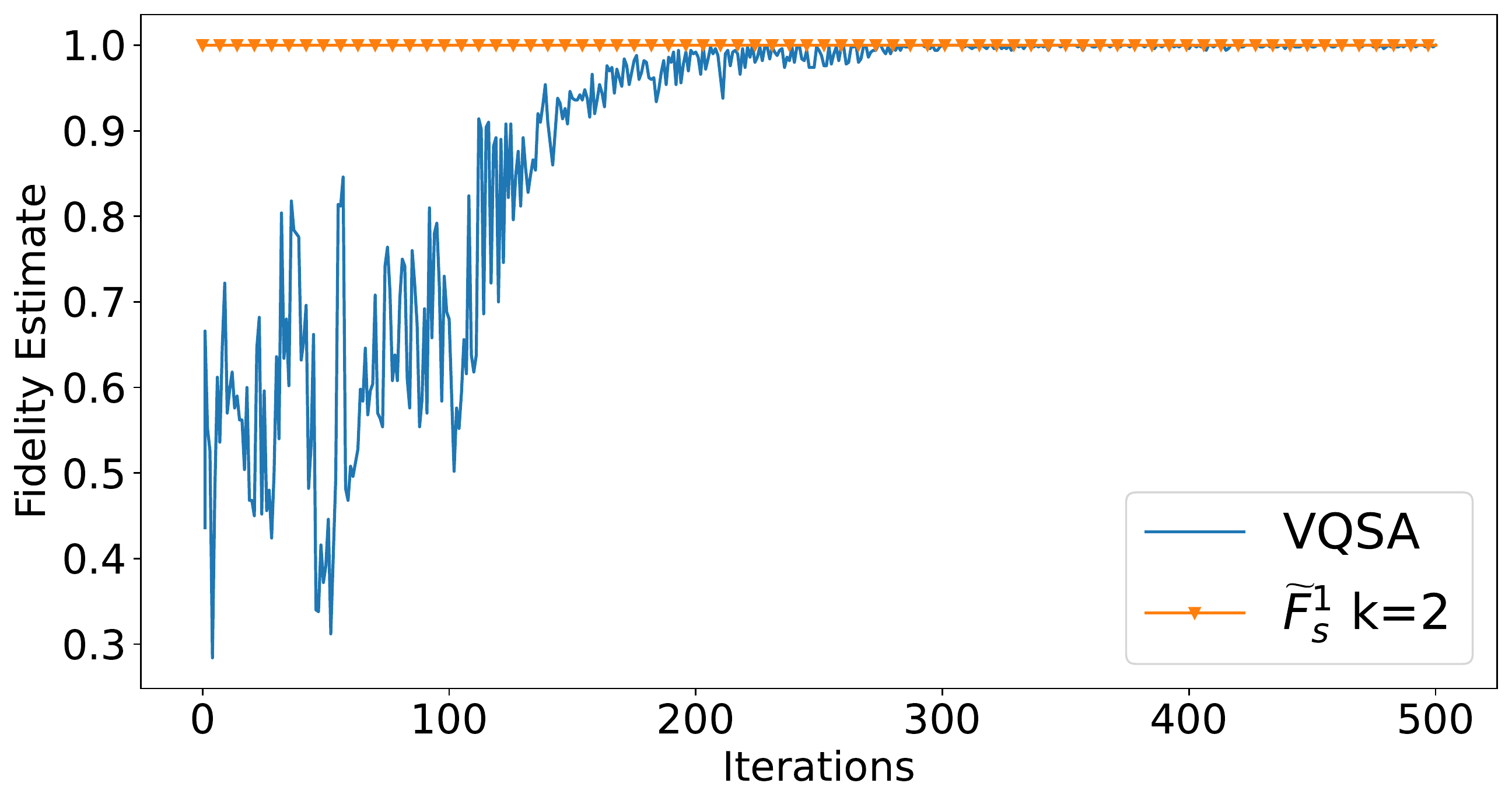}\label{fig:local_cost_function1}}
            \hspace*{\fill}
        \subfigure[Fidelity of separability calculated for a random entangled state using the local reward function of the VQSA and benchmarked by $\widetilde{F}_s^1$.]{\includegraphics[width=0.48\columnwidth]{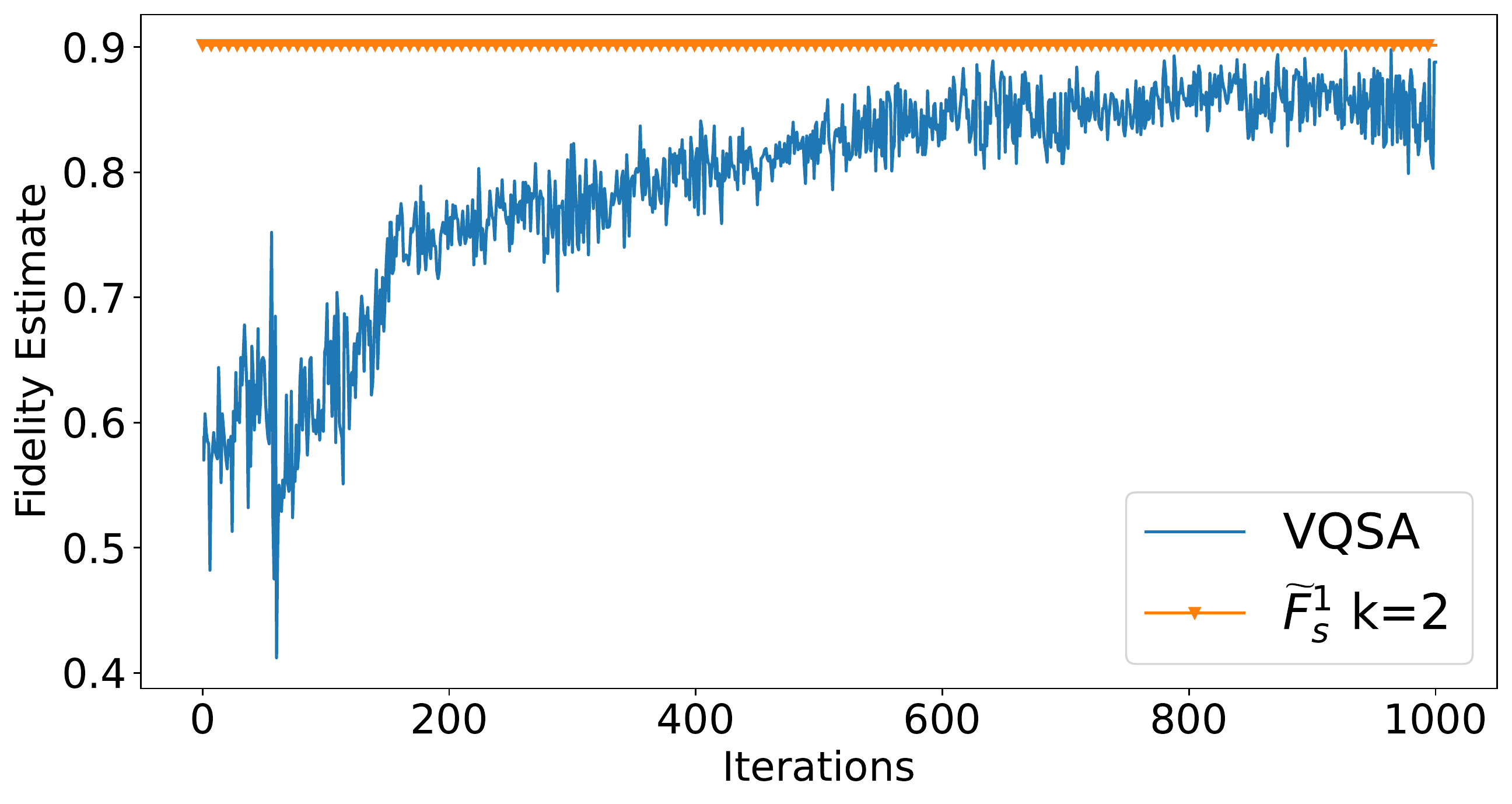}\label{fig:local_cost_function2}}
        \caption{Fidelity of separability estimated using the local reward function of the VQSA and benchmarked by~$\widetilde{F}_s^1$.} 
    \end{figure}
    
    The input states and parameterized unitaries were generated using the hardware efficient ansatz (HEA)~\cite{KMTTBCG17} for all the simulations in our work. The HEA consists of several layers, each composed of two parameters per qubit per layer, specifying rotations about the $x$- and $y$-axes. After each layer of rotations is a series of neighboring qubit CNOT gates. When using the HEA to generate the input states, we keep the rotation angles fixed, thus leading to a fixed input state. For the parameterized unitaries, the rotation angles are parameters and are optimized over.
    
    In Figure~\ref{fig:local_cost_function1}, we report simulation results after generating a random bipartite product state, with each partition containing two qubits. We remove all the CNOT gates from the HEA that generates the input state $\rho$ to guarantee a product state. We calculated the fidelity of separability using both the local reward function of the VQSA and the benchmark $\widetilde{F}^1_s$, the latter discussed in Appendix~\ref{appendix:ppt-k-state-sdp}. 
    
    In Figure~\ref{fig:local_cost_function2}, we do the same for a random bipartite state with the partitions $A$ and $B$ containing two qubits and one qubit, respectively, and three qubits in the reference system.  
    
    We generated all parameterized unitary circuits in the following fashion. We used the Qiskit Aer simulator and Qiskit's Simultaneous Perturbation Stochastic Approximation (SPSA) optimizer to perform the classical optimization. All other details can be found in Table~\ref{tab:details_about_fig}. The local reward function of the VQSA requires more classical processing (like picking a qubit at random to measure) and seems to require more iterations to reach the correct value. However, these downsides are outweighed by the fact that it is less susceptible to the emergence of barren plateaus. More details about the local cost function can be found in Appendix~\ref{appendix:local_cost}.
    
    \renewcommand{\arraystretch}{1.25}
    \begin{table*}
    \centering
    \begin{tabular}{|c|c|c|c|}
    \hline
    Figure & No. of Qubits & State $\rho_{AB}$ & Layer Count \\
    \hline\hline
    \multirow{1}{*}[-0.8em]{\ref{fig:Max_Sep_Fidelity_Bell}} & \multirow{1}{*}[-0.8em]{$R=2$, $A=1$, $B=1$} & \multirow{1}{*}[-0.8em]{$(3/4)|\Phi^+\rangle\!\langle\Phi^+|+(1/4)|\Phi^-\rangle\!\langle\Phi^-|$} & $W_R$ no.~of layers = 2\\
    & & & $U^x_A$ no.~of layers = 2\\
    \hline
    \multirow{1}{*}[-0.8em]{\ref{fig:Max_Sep_Fidelity_Depol}} & \multirow{1}{*}[-0.8em]{$R=4$, $A=2$, $B=2$} & $(\mathcal{D}_{p, A_1}\otimes\mathcal{D}_{p,A_2}\otimes\mathbb{I}_{B})\left(|\psi \rangle\!\langle\psi|\right)$ & $W_R$ no.~of layers = 4\\
    & & $|\psi \rangle=\frac{1}{\sqrt{2}}\left(|0\rangle_{A_1} |0\rangle_{A_2} |00\rangle_{B} +|1\rangle_{A_1} |1\rangle_{A_2} |11\rangle_{B}\right)$ & $U^x_A$ no.~of layers = 4\\
    \hline
    \multirow{1}{*}[-0.8em]{\ref{fig:local_cost_function1}} & \multirow{1}{*}[-0.8em]{$R=3$, $A=2$, $B=2$} & \multirow{1}{*}[-0.8em]{Random product state using HEA~\cite{KMTTBCG17}} & $W_R$ no.~of layers = 4\\
    & & & $U^x_A$ no.~of layers = 4\\
    \hline
    \multirow{1}{*}[-0.8em]{\ref{fig:local_cost_function2}} & \multirow{1}{*}[-0.8em]{$R=3$, $A=2$, $B=2$} & \multirow{1}{*}[-0.8em]{Random entangled state using HEA~\cite{KMTTBCG17}} &$W_R$ no.~of layers = 4\\
    & & & $U^x_A$ no.~of layers = 4\\
    \hline
    
    \end{tabular}
    \caption{Details of all VQSA simulations.}
    \label{tab:details_about_fig}
\end{table*}
\renewcommand{\arraystretch}{1.0}

\section{Software}
\label{appendix:software}
All of our Python source files are available with the arXiv posting of this paper. We performed all simulations using the noisy Qiskit Aer simulator. The Picos Python package~\cite{sagnol2012picos} was used to invoke the CVXOPT solver~\cite{vandenberghe2010cvxopt} for solving the SDPs, and the toqito Python package~\cite{russo2021toqito} was used for carrying out specific operations on the matrices representing quantum systems.      

\section{Multipartite Scenarios}
\label{appendix:multipartite}

    This appendix discusses a multipartite generalization of the separability tests for mixed states.

    \begin{definition}
        A state $\rho_{A_1\cdots A_M}\in \mathcal{D}(\mathcal{H}_{A_1\cdots A_M})=\mathcal{D}(\mathcal{H}_{A_1}\otimes\cdots\otimes\mathcal{H}_{A_M})$ is  separable if it can be written as
        \begin{equation}
            \rho_{A_1\cdots A_M}=\sum_{x\in\mathcal{X}}p(x)\psi^{x,1}_{A_1}\otimes\cdots\otimes\psi^{x,M}_{A_M} ,
        \end{equation}
        where $\{p(x)\}_{x\in\mathcal{X}}$ is a probability distribution and $\psi_{A_i}^{x,i}$ is a pure state for all $x\in\mathcal{X}$ and $i\in\{1,\ldots,M\}$.
    \end{definition}
    
    Let $M\text{-SEP}$ denote the set of all $\rho_{A_1\cdots A_M}\in \mathcal{D}(\mathcal{H}_{A_1\cdots A_M})$ such that $\rho_{A_1\cdots A_M}$ is separable. The following theorem is important for the rest of this analysis.
    
    \begin{theorem}[\cite{Streltsov2010}]\label{theorem:Multi-Streltsov}
        The following formula holds 
        \begin{equation}
        \label{eq:convex-decomp-multi-max-sep-fid}
            \max_{\sigma_{A_1\cdots A_M}\in M\text{-}\operatorname{SEP}}F(\rho_{A_1\cdots A_M},\sigma_{A_1\cdots A_M})
            =\max_{\{(q(x),\varphi_{A_1\cdots A_M}^{x})\}_{x}}\sum_{x}q(x) F_s(\varphi^{x}_{A_1\cdots A_M}),
        \end{equation}
        where the optimization is over every pure-state decomposition $\{(q(x),\varphi_{A_1\cdots A_M}^{x})\}_{x}$   of $\rho_{A_1\cdots A_M}$ (similar to those in Theorem~\ref{theorem:Streltsov}) and 
        \begin{equation}
        \label{eqn:multi_fid_sep}
            F_{s}(\varphi_{A_1\cdots A_M}^{x}) =
            \max_{\left\{|\phi^{x,i}\rangle_{A_i}\right\}_{i=1}^{M}}\left\vert\langle\varphi^{x}|_{A_1\cdots A_M}|\phi^{x,1}\rangle_{A_1}\otimes\cdots\otimes |\phi^{x,M}\rangle_{A_M}\right\vert^2.
        \end{equation}
    \end{theorem}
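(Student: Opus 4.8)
The plan is to mirror the bipartite argument of Appendix~\ref{appendix:proof_alt_streltsov} essentially verbatim, replacing the bipartite product structure $|\psi^{x}\rangle_{A}\otimes|\phi^{x}\rangle_{B}$ with the $M$-fold product $|\psi^{x,1}\rangle_{A_1}\otimes\cdots\otimes|\psi^{x,M}\rangle_{A_M}$ throughout. First I would take $|\mathcal{X}|=(|A_1|\cdots|A_M|)^2$ so that the optimization ranges over all $M$-separable states, and purify a generic $\sigma_{A_1\cdots A_M}\in M\text{-}\operatorname{SEP}$ as $\sum_{x}\sqrt{p(x)}\,|x\rangle_R\,|\psi^{x,1}\rangle_{A_1}\otimes\cdots\otimes|\psi^{x,M}\rangle_{A_M}$. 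Fixing a generic purification $|\psi^{\rho}\rangle_{R'A_1\cdots A_M}$ of $\rho_{A_1\cdots A_M}$ (padded with an ancilla $R''$ to match dimensions), Uhlmann's theorem~\cite{Uhlmann1976} lets me write the maximum separable root fidelity as a joint maximization over a reference unitary $U_{R'R''\to R}$ and over all product-state ensembles $\{(p(x),\psi^{x,1}_{A_1},\ldots,\psi^{x,M}_{A_M})\}_x$.

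Next I would expand $U_{R'R''\to R}|\psi^{\rho}\rangle_{R'A_1\cdots A_M}|0\rangle_{R''}=\sum_{x}\sqrt{q(x)}\,|x\rangle_R\,|\varphi^{x}\rangle_{A_1\cdots A_M}$, observing that $U$ followed by a standard-basis measurement on $R$ induces an arbitrary convex decomposition $\{(q(x),\varphi^{x}_{A_1\cdots A_M})\}_x$ of $\rho_{A_1\cdots A_M}$. Using orthonormality of $\{|x\rangle_R\}_x$, the root fidelity collapses from a double sum to the single sum $\left\vert\sum_{x}\sqrt{q(x)p(x)}\,\langle\varphi^{x}|_{A_1\cdots A_M}\,|\psi^{x,1}\rangle_{A_1}\otimes\cdots\otimes|\psi^{x,M}\rangle_{A_M}\right\vert$. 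I would then bound this with the triangle inequality and Cauchy--Schwarz exactly as in the bipartite case, obtaining the upper bound $\sqrt{\sum_x q(x)\left\vert\langle\varphi^{x}|\,|\psi^{x,1}\rangle\otimes\cdots\otimes|\psi^{x,M}\rangle\right\vert^2}$.

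Equality in the Cauchy--Schwarz step is achieved by the same reweighting $p(x)\propto q(x)\left\vert\langle\varphi^{x}|\,|\psi^{x,1}\rangle\otimes\cdots\otimes|\psi^{x,M}\rangle\right\vert^2$, and equality in the triangle inequality by tuning a global phase on each product state (equivalently, a relative phase in the purification of $\sigma$). After squaring, the inner maximization over the product states $\{|\phi^{x,i}\rangle_{A_i}\}_{i=1}^M$ is precisely the definition of $F_s(\varphi^{x}_{A_1\cdots A_M})$ in~\eqref{eqn:multi_fid_sep}, yielding the claimed formula~\eqref{eq:convex-decomp-multi-max-sep-fid}. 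I do not expect a genuine obstacle here: the argument is a direct lift of the bipartite proof, and the only points needing care are \emph{(i)} verifying that the phase-tuning step still attains equality when there are $M$ independent product factors rather than two, and \emph{(ii)} bookkeeping the enlarged index set $\mathcal{X}$ so that every $M$-separable $\sigma$ and every pure-state decomposition of $\rho$ is reachable.
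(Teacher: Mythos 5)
Your proposal is correct, but note that the paper itself never proves Theorem~\ref{theorem:Multi-Streltsov}: it is stated with a citation to~\cite{Streltsov2010} and then used as an ingredient in the proof of Theorem~\ref{theorem:qip-eb_msf_multi-app}. The only proof the paper supplies is of the bipartite analog, Theorem~\ref{theorem:Streltsov}, in Appendix~\ref{appendix:proof_alt_streltsov}, so your proposal fills a gap the paper delegates to the literature rather than competing with an in-paper argument. Your lift of that bipartite proof is sound, and the two points you flag as needing care are indeed unproblematic. For point \emph{(i)}: equality in the triangle inequality requires only one adjustable phase per ensemble index $x$, since the single complex number $c_x = \langle\varphi^{x}|_{A_1\cdots A_M}\,|\psi^{x,1}\rangle_{A_1}\otimes\cdots\otimes|\psi^{x,M}\rangle_{A_M}$ is what must be rotated onto the positive real axis; absorbing $e^{i\theta_x}$ into any single factor, say $|\psi^{x,1}\rangle_{A_1}$, does this without changing the density operator $\sigma_{A_1\cdots A_M}$ or the validity of the product ensemble, exactly as in the bipartite case. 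For point \emph{(ii)}: Carath\'eodory's theorem bounds the number of pure product states needed to decompose any $M$-partite separable state by the square of the total dimension, so taking $|\mathcal{X}|=(|A_1|\cdots|A_M|)^2$ and $|R|=|R'|\,|R''|$ makes the Uhlmann unitary $U_{R'R''\to R}$ well defined and, via the Schr\"odinger--HJW correspondence, makes every relevant pure-state decomposition of $\rho_{A_1\cdots A_M}$ reachable; decompositions with more elements are handled by enlarging $R$, which Uhlmann's theorem permits. Every remaining step (Uhlmann, the collapse of the double sum by orthonormality of $\{|x\rangle_R\}_x$, Cauchy--Schwarz with the reweighting $p(x)\propto q(x)|c_x|^2$, and the final identification of the inner maximization with~\eqref{eqn:multi_fid_sep}) is structure-independent and goes through verbatim.
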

    
    For the multipartite case of the distributed quantum computation, the verifier prepares a purification $\psi^{\rho}_{R A_1\cdots A_M}$\ of $\rho_{A_1\cdots A_M}$. The prover applies a multipartite entanglement-breaking channel on system~$R$, which can be written as
    \begin{equation}
    \label{eqn:ent-break-mult}
        \mathcal{E}_{R\rightarrow A^{\prime}_1\cdots A^{\prime}_{M-1}}(\cdot)=\sum_{x}\operatorname{Tr}[\mu^{x}_{R}(\cdot)]\left(\phi^{x,1}_{A^{\prime}_1}\otimes\cdots\otimes\phi^{x,M-1}_{A^{\prime}_{M-1}}\right),
    \end{equation}
     where $\{\mu^{x}_{R}\}_{x}$ is a rank-one POVM and $\{\phi^{x,i}_{A^{\prime}_i}\}_{x,i}$ is a set of pure states. The prover sends systems $(A^{M-1})^{\prime}\equiv  A^{\prime}_1\cdots A^{\prime}_{M-1}$ to the verifier. Now, the verifier performs a collective swap test of these systems with $A_1\cdots A_M$, as depicted in Figure~\ref{fig:Max_Sep_Fidelity_QIP_EB_Multi}. The acceptance probability of this distributed quantum computation is given by
    \begin{equation}
        \max_{\mathcal{E}\in\operatorname{EB}_{M-1}}\operatorname{Tr}[\Pi_{(A^{M-1})^{\prime}A^{M-1}}^{\operatorname{sym}}\mathcal{E}_{R\rightarrow (A^{M-1})^{\prime}}(\psi_{RA^{M-1}})],
    \end{equation}
    where $\mathcal{E}\in\operatorname{EB}_{M-1}$ denotes the set of entanglement-breaking channels defined in~\eqref{eqn:ent-break-mult}.
    This leads to the following theorem:
    \begin{theorem}\label{theorem:qip-eb_msf_multi-app}
        For a pure state $\psi_{RA^{M}} \equiv \psi_{RA_1\cdots A_{M}}$, the following equality holds:%
        \begin{equation}
            \label{eq:mult-part-fid-sep-test-acc-prob}
            \max_{\mathcal{E}\in\operatorname{EB}_{M-1}} \operatorname{Tr}[ \Pi_{(A^{M-1})^{\prime}(A^{M-1})}^{\operatorname{sym}} \mathcal{E}_{R\rightarrow A^{\prime}_1\cdots A^{\prime}_{M-1}}(\psi_{RA^{M}})]
            =\frac{1}{2}\left(1 +\max_{\sigma_{A_1\cdots A_M}\in M\text{-}\operatorname{SEP}}F(\rho_{A_1\cdots A_M},\sigma_{A_1\cdots A_M})\right).
        \end{equation}
    \end{theorem}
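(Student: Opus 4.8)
The plan is to follow the proof of Theorem~\ref{theorem:qip-eb_msf} closely, generalizing each step to $M$ parties and invoking the multipartite convex-roof identity of Theorem~\ref{theorem:Multi-Streltsov} in place of the bipartite one. First I would substitute the measure-and-prepare form~\eqref{eqn:ent-break-mult} of the multipartite entanglement-breaking channel into the acceptance probability and expand the collective symmetric projector as $\Pi^{\operatorname{sym}}_{(A^{M-1})'(A^{M-1})} = \frac{1}{2}(I + F)$, where $F = \bigotimes_{i=1}^{M-1} F_{A'_i A_i}$ is the tensor product of the individual swap operators acting on the swap-tested pairs. This reduces the acceptance probability for a fixed channel to $\frac{1}{2}(1 + \operatorname{Tr}[F\,\mathcal{E}(\psi)])$, so the remaining work is to evaluate and optimize $\operatorname{Tr}[F\,\mathcal{E}(\psi)]$.

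Next I would use the fact that the rank-one POVM $\{\mu^x_R\}_x$ steers $A_1\cdots A_M$ into a pure-state ensemble: setting $p(x) \coloneqq \operatorname{Tr}[\mu^x_R \psi_{RA_1\cdots A_M}]$ and $\psi^x_{A_1\cdots A_M} \coloneqq p(x)^{-1}\operatorname{Tr}_R[\mu^x_R \psi_{RA_1\cdots A_M}]$, each $\psi^x$ is pure, and as the measurement ranges over all rank-one POVMs the ensemble $\{(p(x),\psi^x)\}_x$ ranges over all pure-state decompositions of $\rho_{A_1\cdots A_M}$. Applying the partial-trace identity $\operatorname{Tr}[F_{A'A}(\phi_{A'}\otimes\sigma_A)] = \operatorname{Tr}[\phi_A\sigma_A]$ factor by factor, and noting that the untested system $A_M$ is simply traced out since no swap acts on it, the term collapses to
\begin{equation}
\operatorname{Tr}[F\,\mathcal{E}(\psi)] = \sum_x p(x)\,\langle\phi^{x,1},\ldots,\phi^{x,M-1}|\,\psi^x_{A_1\cdots A_{M-1}}\,|\phi^{x,1},\ldots,\phi^{x,M-1}\rangle,
\end{equation}
where $\psi^x_{A_1\cdots A_{M-1}} \coloneqq \operatorname{Tr}_{A_M}[\psi^x_{A_1\cdots A_M}]$.

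The main obstacle, and the step that genuinely differs from the bipartite argument, is to show that optimizing the prepared product states over only the $M-1$ swap-tested systems reproduces the full multipartite fidelity of separability $F_s(\psi^x_{A_1\cdots A_M})$ of~\eqref{eqn:multi_fid_sep}, whose definition involves an optimization over all $M$ systems. The key lemma to establish is the following: for fixed product vectors on $A_1\cdots A_{M-1}$, carrying out the residual maximization over $|\phi^{x,M}\rangle_{A_M}$ in~\eqref{eqn:multi_fid_sep} returns the squared norm of the vector $(\langle\phi^{x,1},\ldots,\phi^{x,M-1}|\otimes I_{A_M})|\psi^x\rangle$ by Cauchy--Schwarz, and this squared norm equals exactly $\langle\phi^{x,1},\ldots,\phi^{x,M-1}|\psi^x_{A_1\cdots A_{M-1}}|\phi^{x,1},\ldots,\phi^{x,M-1}\rangle$. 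Hence maximizing the displayed objective over the $M-1$ preparation states coincides with $F_s(\psi^x_{A_1\cdots A_M})$. This is the multipartite analogue of how a single swap test yields $\|\psi_A\|_\infty$ in the bipartite case, and it explains precisely why the test only needs $M-1$ output systems.

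Finally, I would assemble the pieces: optimizing the preparations yields $\sum_x p(x)\,F_s(\psi^x_{A_1\cdots A_M})$ for each fixed decomposition, optimizing the measurement ranges over all pure-state decompositions of $\rho_{A_1\cdots A_M}$, and invoking the multipartite convex-roof identity~\eqref{eq:convex-decomp-multi-max-sep-fid} of Theorem~\ref{theorem:Multi-Streltsov} converts this into $\max_{\sigma\in M\text{-}\operatorname{SEP}} F(\rho_{A_1\cdots A_M},\sigma)$. Substituting back into $\frac{1}{2}(1+\operatorname{Tr}[F\,\mathcal{E}(\psi)])$ gives the claimed identity~\eqref{eq:mult-part-fid-sep-test-acc-prob}.
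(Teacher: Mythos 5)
Your proposal is correct and follows essentially the same route as the paper's proof: the same measure-and-prepare expansion with $\Pi^{\operatorname{sym}} = \frac{1}{2}(I + F)$ and $F = \bigotimes_{i=1}^{M-1} F_{A'_i A_i}$, the same steering identification of rank-one POVMs with pure-state decompositions, and the same key step showing that the residual maximization over $|\phi^{x,M}\rangle_{A_M}$ (your Cauchy--Schwarz argument is exactly the paper's ``variational characterization of the Euclidean norm'') collapses $F_s(\psi^x_{A_1\cdots A_M})$ to the overlap $\operatorname{Tr}[(\phi^{x,1}_{A_1}\otimes\cdots\otimes\phi^{x,M-1}_{A_{M-1}})\psi^x_{A_1\cdots A_{M-1}}]$, after which the multipartite convex-roof identity of Theorem~\ref{theorem:Multi-Streltsov} finishes the argument. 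Your write-up also correctly identifies this collapse as the reason the test needs only $M-1$ output systems, which matches the paper's reasoning.
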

    
    \begin{figure}
        \centering
        \includegraphics[width=0.7\columnwidth]{figures/fig_QIP_EB_new3.pdf}
        \caption{Test for separability of multipartite mixed states. The verifier uses a unitary circuit $U^\rho$ to produce the state $\psi_{RA_1 A_2 A_3 A_4}$, which is a purification of $\rho_{A_1 A_2 A_3 A_4}$. The prover (indicated by the dotted box) applies an entanglement-breaking channel $\mathcal{E}_{R\rightarrow A^{\prime}_1 A^{\prime}_2 A^{\prime}_3}$ on $R$ by measuring the rank-one POVM $\{\mu^{x}_{R}\}_{x}$ and then, depending on the outcome $x$, prepares a state from the set $\{\phi^{x,1}_{A^{\prime}_1}\otimes\phi^{x,2}_{A^{\prime}_2}\otimes\phi^{x,3}_{A^{\prime}_3}\}_{x}$. The final state is sent to the verifier, who performs a collective swap test. Theorem~\ref{theorem:qip-eb_msf_multi-app} states that the maximum acceptance probability of this interactive proof is equal to $\frac{1}{2}(1 + F_{s}(\rho_{A_1 A_2 A_3 A_4}))$, i.e., a simple function of the multipartite fidelity of separability.}
        \label{fig:Max_Sep_Fidelity_QIP_EB_Multi}
    \end{figure}
    
    \begin{proof}
        The circuit diagram is given in Figure~\ref{fig:Max_Sep_Fidelity_QIP_EB_Multi}. The verifier prepares a purification $\psi^{\rho}_{RA_1\cdots A_M}$\ of $\rho_{A_1\cdots A_M}$. The prover applies a multipartite entanglement-breaking channel on $R$, which can be written as
    \begin{equation}
        \mathcal{E}_{R\rightarrow A^{\prime}_1\cdots A^{\prime}_{M-1}}(\cdot)=\sum_{x}\operatorname{Tr}[\mu^{x}_{R}(\cdot)]\left(\phi^{x,1}_{A^{\prime}_1}\otimes\cdots\otimes\phi^{x,M-1}_{A^{\prime}_{M-1}}\right),
    \end{equation}
     where $\{\mu^{x}_{R}\}_{x}$ is a rank-one POVM and $\{\phi^{x,i}_{A^{\prime}_i}\}_{x,i}$ is a set of pure states. The prover sends the systems $(A^{M-1})^{\prime}= A^{\prime}_1\cdots A^{\prime}_{M-1}$ to the verifier. Now, the verifier performs a collective swap test on $A_1\cdots A_M$, as depicted in the final part of the circuit diagram in Figure~\ref{fig:Max_Sep_Fidelity_QIP_EB_Multi}. The acceptance probability of this interactive proof system is thus given by
    \begin{equation}\label{eqn:qipeb-mult}
        \max_{\mathcal{E}\in\text{EB}}\operatorname{Tr}[\Pi_{(A_1\cdots A_{M-1})^{\prime}A_1\cdots A_{M-1}}^{\text{sym}}\mathcal{E}_{R\rightarrow (A_1\cdots A_{M-1})^{\prime}}(\psi_{RA_1\cdots A_{M}})],
    \end{equation}
    where
    \begin{equation}
    \Pi^{\text{sym}}_{(A_1\cdots A_{M-1})^{\prime}A_1\cdots A_{M-1}}:=\frac{1}{2}\left(I_{(A_1\cdots A_{M-1})^{\prime}A_1\cdots A_{M-1}}+F_{(A_1\cdots A_{M-1})^{\prime}A_1\cdots A_{M-1}}\right)
    \end{equation}
    is the projector onto the symmetric subspace of $A^{\prime}$ and $A$ and $F_{(A_1\cdots A_{M-1})^{\prime}A_1\cdots A_{M-1}}$ is a tensor product of individual swaps $F_{A'_i A_i}$, for $i\in\{1,\ldots,M-1\}$. That is,
    \begin{equation}
        F_{(A_1\cdots A_{M-1})^{\prime}A_1\cdots A_{M-1}} = \bigotimes_{i=1}^{M-1} F_{A'_i A_i}.
    \end{equation}
    Then we find, for fixed $\mathcal{E}_{R\rightarrow A^{\prime}_1\cdots A^{\prime}_{M-1}}$, that
    \begin{align}
        &\operatorname{Tr}[\Pi_{(A_1\cdots A_{M-1})^{\prime}(A_1\cdots A_{M-1})}^{\text{sym}}\mathcal{E}_{R\rightarrow A^{\prime}_1\cdots A^{\prime}_{M-1}}(\psi_{RA_1\cdots A_{M}})]\notag \\
        &=\frac{1}{2}\operatorname{Tr}[(I_{(A_1\cdots A_{M-1})^{\prime}(A_1\cdots A_{M-1})}+F_{(A_1\cdots A_{M-1})^{\prime}(A_1\cdots A_{M-1})})
        \mathcal{E}_{R\rightarrow A^{\prime}_1\cdots A^{\prime}_{M-1}}(\psi_{RA_1\cdots A_{M}})]\\
        &=\frac{1}{2}+\frac{1}{2}\operatorname{Tr}[F_{(A_1\cdots A_{M-1})^{\prime}(A_1\cdots A_{M-1})}\mathcal{E}_{R\rightarrow A^{\prime}_1\cdots A^{\prime}_{M-1}}(\psi_{RA^{M}})]\\
        &=\frac{1}{2}+\frac{1}{2}\operatorname{Tr}\!\left[F_{(A_1\cdots A_{M-1})^{\prime}(A_1\cdots A_{M-1})}\sum_{x}\operatorname{Tr}[\mu^{x}_{R}(\psi_{RA^{M}})]   \phi^{x,1}_{A^{\prime}_1}\otimes\cdots\otimes\phi^{x,M-1}_{A^{\prime}_{M-1}}\right],\\
        &=\frac{1}{2}+\frac{1}{2}\operatorname{Tr}\!\left[F_{(A_1\cdots A_{M-1})^{\prime}(A_1\cdots A_{M-1})}\sum_{x}p(x)(\psi^{x}_{A_1\cdots A_{M}})\phi^{x,1}_{A^{\prime}_1}\otimes\cdots\otimes\phi^{x,M-1}_{A^{\prime}_{M-1}}\right],\\
        &=\frac{1}{2}+\frac{1}{2}\sum_{x}p(x)\operatorname{Tr}\!\left[\left(\phi^{x,1}_{A_1}\otimes\cdots\otimes \phi^{x,M-1}_{A_{M-1}}\right)\psi^{x}_{A_1\cdots A_{M-1}}\right]\label{eqn:mult-fid-sep-intermediate},
    \end{align}
    where
    \begin{align}
        p(x) &  \coloneqq \operatorname{Tr}[\mu^{x}_{R} \psi_{RA^{M}} ],\\
        \psi^{x}_{A_1\cdots A_{M}} &  \coloneqq \frac{1}{p(x)}\operatorname{Tr}_R[\mu^{x}_{R}\psi_{RA^{M}}].
    \end{align}
    For a given $x$, let us  simplify $F_s(\varphi_{A_1\cdots A_M})$ as defined in~\eqref{eqn:multi_fid_sep}, 
   \begin{align} 
        F_s(\varphi_{A_1\cdots A_M})&=\max_{\left\{|\phi^{i}\rangle_{A_i}\right\}_{i=1}^M}\left\vert\langle\varphi|_{A_1\cdots A_M}|\phi^{1}\rangle_{A_1}\otimes\cdots\otimes |\phi^{M}\rangle_{A_M}\right\vert^2 \\
        &=\max_{\left\{|\phi^{i}\rangle_{A_i}\right\}_{i=1}^M}\left\vert\langle\phi^{1}|_{A_1}\otimes\cdots\otimes \langle\phi^{M}|_{A_M}|\varphi\rangle_{A_1\cdots A_M}\right\vert^2\\
        &=\max_{\left\{|\phi^{i}\rangle_{A_i}\right\}_{i=1}^{M-1}}\left\Vert \langle\phi^{1}|_{A_1}\otimes\cdots\otimes\langle\phi^{M-1}|_{A_{M-1}}\otimes I_{A_M}|\varphi\rangle_{A^M}\right\Vert _{2}^{2}\\
        &=\max_{\left\{|\phi^{i}\rangle_{A_i}\right\}_{i=1}^{M-1}}\operatorname{Tr}[(|\phi^{1}\rangle\!\langle\phi^{1}|_{A_1}\otimes\cdots \otimes|\phi^{M-1}\rangle\!\langle\phi^{M-1}|_{A_{M-1}}\otimes I_{A_M})\varphi_{A_1\cdots A_M}]\\
        &=\max_{\left\{|\phi^{i}\rangle_{A_i}\right\}_{i=1}^{M-1}}\operatorname{Tr}[(|\phi^{1}\rangle\!\langle\phi^{1}|_{A_1}\otimes\cdots \otimes|\phi^{M-1}\rangle\!\langle\phi^{M-1}|_{A_{M-1}})\varphi_{A_1\cdots A_{M-1}}]\label{eqn:mult-fid-sep-pure}.
    \end{align}
    The first two equalities are from the definition and a rewriting. The third equality follows from the variational characterization of the Euclidean norm of a vector. Noting the form in~\eqref{eqn:mult-fid-sep-pure} and applying the maximization over entanglement-breaking channels of the form described in~\eqref{eqn:ent-break-mult} to~\eqref{eqn:mult-fid-sep-intermediate}, we arrive at the desired claim in~\eqref{eq:mult-part-fid-sep-test-acc-prob}.
    \end{proof}

    \begin{figure}
        \centering
        \includegraphics[width=0.7\columnwidth]{figures/fig_VQA_new_multi3.pdf}
        \caption{VQSA to estimate the multipartite fidelity of separability $F_s(\rho_{A_1A_2A_3A_4})$. 
        The unitary circuit $U^\rho$ prepares the state $\psi_{RA_1A_2A_3A_4}$, which is a purification of $\rho_{A_1A_2A_3A_4}$. The parameterized circuit $W_R(\Theta)$ acts on $R$ to evolve the state to another purification of $\rho_{A_1A_2A_3A_4}$. The following measurement, labeled ``steering measurement,'' steers the remaining systems to be in a state $\psi_{A_1A_2A_3A_4}^{x}$ if the measurement outcome $x$ occurs. Conditioned on the outcome $x$, the final parameterized circuits $U^{x,1}_{A_1}(\Theta^x_1)$, $U^{x,2}_{A_2}(\Theta^x_2)$, and $U^{x,3}_{A_3}(\Theta^x_3)$ are applied and the subsequent measurement estimates the quantity~$\operatorname{Tr}\!\left[\left(\phi^{x,1}_{A_1}\otimes\phi^{x,2}_{A_2}\otimes \phi^{x,3}_{A_{3}}\right)\psi^{x}_{A_1 A_2 A_{3}}\right]$.} 
        \label{fig:Max_Sep_Fidelity_VQSA_Multi}
    \end{figure}
    
    \bigskip 
    We can then use the generalized test of separability of mixed states to develop a VQSA for the multipartite case. See Figure~\ref{fig:Max_Sep_Fidelity_VQSA_Multi}. This involves replacing the collective swap test in Figure~\ref{fig:Max_Sep_Fidelity_QIP_EB_Multi} with an overlap measurement, similar to how we got Figure~\ref{fig:Max_Sep_Fidelity_VQA} in the main text from Figure~\ref{fig:Max_Sep_Fidelity_QIP_EB} in the main text.
    
\section{Complexity Class \texorpdfstring{\qipeb}{Lg}} \label{appendix:qipeb}

    In this appendix, we establish a complete problem for \qipeb, and then we interpret this problem in Remark~\ref{rem:interp-QIP-EB2}. See~\cite{watrous2009complexity, VW15} for further background on quantum computational complexity theory. Let us first define the complexity class \qipeb. Let $A=\left(A_{\text{yes}},A_{\text{no}}\right)$ be a promise problem, and let $a,b:\mathbb{N}\to [0,1]$ and $p$ be polynomial functions. The verifier $V$ is described by a polynomial-time generated family of quantum circuits. The prover $P$ is a family of arbitrary entanglement-breaking channels that naturally interface with a given verifier. Then $A\in$ \qipeb$(a,b)$ if there exists a two-message verifier with the following properties:
    \begin{enumerate}
        \item Completeness: For all $x\in A_{\text{yes}}$, there exists a prover $P$ that causes the verifier $V$ to accept $x$ with probability at least $a(|x|)$.
        \item Soundness: For all $x\in A_{\text{no}}$, every prover $P$  causes the verifier $V$ to accept $x$ with probability at most $b(|x|)$.
    \end{enumerate}
    In the above, acceptance is defined as obtaining the outcome one upon measuring the decision-qubit register.

\begin{problem}\label{prob:eb-2}
Given are circuits to generate a channel $\mathcal{N}_{G\rightarrow S}$ and a state $\rho_{S}$. Fix $\alpha$ and $\beta$ such that $0 \leq \alpha < \beta \leq 1$. Decide which of the following holds:
\begin{align}
    \text{Yes:} \quad f(\mathcal{N}_{G\rightarrow S},\rho_{S}) & \geq \beta, \\
    \text{No:} \quad f(\mathcal{N}_{G\rightarrow S},\rho_{S}) & \leq \alpha,
\end{align}
where
        \begin{equation}
            f(\mathcal{N}_{G\rightarrow S},\rho_{S}) \coloneqq  \max_{\substack{\{  (p(x),\psi^{x})\}  _{x},
            \left\{  \varphi^{x}\right\}_{x}
            }} \left\{  \sum_{x}p(x)F(\psi_{S}^{x},\mathcal{N}_{G\rightarrow S}(\varphi_{G}^{x})) : \sum_{x}p(x)\psi_{S}^{x}=\rho_{S}  \right\}
            \label{eq:accept-prob-qip_eb-2}
        \end{equation}
        with the optimization being over every pure-state decomposition of $\rho_{S}$ as $\sum_{x}p(x)\psi_{S}^{x}=\rho_{S}$. Also, $\left\{  \varphi^{x}\right\}_{x}$ is a set of pure states.
\end{problem}
    
    \begin{theorem}
         Problem~\ref{prob:eb-2} is a complete problem for $\operatorname{QIP}_{\operatorname{EB}}(2)$.
    \end{theorem}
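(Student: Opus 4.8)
The plan is to establish the two containments that together give completeness: Problem~\ref{prob:eb-2} lies in \qipeb, and every problem in \qipeb\ reduces to it in polynomial time. The membership direction is essentially Theorem~\ref{theorem:qip-eb_msf} with the channel $\mathcal{N}$ spliced in, and it is the easy half; the hardness direction is where the real work lies.

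\textbf{Membership.} I would exhibit a two-message entanglement-breaking verifier whose maximum acceptance probability equals $\frac{1}{2}(1+f(\mathcal{N}_{G\to S},\rho_{S}))$. The verifier uses the circuit for $\rho_S$ to prepare a purification $\psi_{RS}$, sends $R$ to the prover, and keeps $S$. Writing the prover's channel in the measure-and-prepare form of~\eqref{eqn:ent_breaking}, the measurement on $R$ steers $S$ into a pure state $\psi^x_S$ with probability $p(x)$, inducing a pure-state decomposition $\rho_S=\sum_x p(x)\psi^x_S$, while the preparation returns a pure state $\varphi^x_G$. The verifier applies $\mathcal{N}_{G\to S}$ to the returned register and swap-tests against its copy of $S$. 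Since $\psi^x_S$ is pure, the swap test accepts with probability $\frac12(1+\langle\psi^x|\mathcal{N}(\varphi^x)|\psi^x\rangle)=\frac12(1+F(\psi^x_S,\mathcal{N}(\varphi^x_G)))$; averaging over $x$ and maximizing over the prover yields $\frac12(1+f)$ by the same optimization-over-decompositions argument used for Theorem~\ref{theorem:qip-eb_msf}. This places the problem in \qipeb\ with completeness $\frac12(1+\beta)$ and soundness $\frac12(1+\alpha)$.

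\textbf{Hardness.} Given $A\in$\,\qipeb\ whose verifier produces a pure state $\xi_{VM}$ ($M$ the message register, $V$ the private register), applies a final unitary $V_2$, and accepts according to $\Pi_{\operatorname{acc}}=V_2^{\dagger}(|1\rangle\!\langle1|\otimes I)V_2$ on $VM'$, I would first use the measure-and-prepare form of the prover's channel to write the maximum acceptance probability as $\max\sum_x p(x)\operatorname{Tr}[\Pi_{\operatorname{acc}}(\xi^x_V\otimes\varphi^x_{M'})]$, where $\{(p(x),\xi^x_V)\}$ ranges over pure-state decompositions of $\rho_V=\operatorname{Tr}_M[\xi_{VM}]$ and each $\varphi^x_{M'}$ is pure. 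The objective is to manufacture from the verifier's circuits a channel $\mathcal{N}$ and state $\rho_S$ turning this bilinear overlap into the fidelity objective defining $f$. I would take $S=V$, $R=M$, and $\rho_S=\rho_V$ (purified by $\xi_{VM}$ itself), identify $\psi^x_S$ with the steered state $\xi^x_V$, and define $\mathcal{N}_{M'\to V}$ as the map whose Choi operator is $\Pi_{\operatorname{acc}}$, so that (after a harmless transpose of the input absorbed into the reduction) $F(\xi^x_V,\mathcal{N}(\varphi^x))$ reproduces $\operatorname{Tr}[\Pi_{\operatorname{acc}}(\xi^x_V\otimes\varphi^x)]$. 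Because $\rho_S$ can be tagged with an always-$|0\rangle$ flag, its pure-state decompositions are forced to be exactly the steering ensembles $\{(p(x),\xi^x_V)\}$, so the maximization matches $f$ term by term, and the circuits for $\mathcal{N}$ and $\rho_S$ are constructed in polynomial time from the verifier's description.

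\textbf{Main obstacle.} The delicate point is that $\Pi_{\operatorname{acc}}$ is generally not a valid Choi operator: one has $\operatorname{Tr}_V[\Pi_{\operatorname{acc}}]\neq I_{M'}$, reflecting the fact that an arbitrary two-outcome measurement on the joint register $VM'$ cannot be \emph{decoupled} into a genuine channel on the message alone compared by fidelity to a reference (the swap test of the membership direction is precisely the special case where this decoupling is clean, since $\Pi^{\operatorname{sym}}=\frac12(I+F)$). I would resolve this by completing the subnormalized completely positive map to a channel $\mathcal{N}_{M'\to VF}$ with an extra flag register $F$, setting the target to $\psi^x_S=\xi^x_V\otimes|0\rangle\!\langle0|_F$ so the fidelity reads off only the valid-flag component; this introduces a dimension-dependent normalization relating the acceptance probability to $f$. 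The remaining and most technical step is to absorb this normalization into the choice of thresholds $\alpha,\beta$ and to restore an inverse-polynomial promise gap via parallel-repetition error reduction for the class, which I expect to be the crux of the argument.
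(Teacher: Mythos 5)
Your membership direction is correct, and it is a legitimate variant of the paper's: the paper realizes $\mathcal{N}$ inside the verifier's final unitary (via the dilation in~\eqref{eq:eb2-channel-def}), so that the acceptance probability equals $f$ exactly, whereas your swap-test verifier accepts with probability $\tfrac{1}{2}(1+f)$; either placement of the problem in \qipeb\ works.

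The hardness direction, however, has a genuine gap, located exactly at the obstacle you flag but whose severity you underestimate. Turning the map $X_{M'}\mapsto\operatorname{Tr}_{M'}[\Pi_{\operatorname{acc}}(I_V\otimes X_{M'}^{T})]$ into a channel costs more than a flag register: since $\Pi_{\operatorname{acc}}$ is a projection on $VM'$, one only has $\operatorname{Tr}_V[\Pi_{\operatorname{acc}}]\leq \left\vert V\right\vert I_{M'}$ (with $\left\vert V\right\vert$ the dimension of the verifier's private register), and this bound is saturated in general, so before completing with a flag you must rescale the map by a factor of order $1/\left\vert V\right\vert$, which is exponentially small in the number of qubits. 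Your reduction therefore maps maximum acceptance probability $p$ to $f = p/\left\vert V\right\vert$. The repair you propose cannot work: amplification is applied to the original protocol \emph{before} the rescaling, so even ideal error reduction (completeness $1-\epsilon$, soundness $\epsilon$) produces thresholds $(1-\epsilon)/\left\vert V\right\vert$ versus $\epsilon/\left\vert V\right\vert$, whose difference is still exponentially small. No instance of Problem~\ref{prob:eb-2} with constants $\alpha<\beta$ (or even an inverse-polynomial gap) is produced; and the variant of Problem~\ref{prob:eb-2} with exponentially small thresholds is not known to lie in \qipeb\ (your own membership protocol would then have an exponentially small completeness--soundness gap), so completeness would fail on the other side. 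Moreover, parallel repetition for \qipeb\ is itself unproven: the prover is restricted to entanglement-breaking channels, and nothing in the paper establishes error reduction for this class. The paper's proof avoids the Choi operator entirely by exploiting the unitary structure of the acceptance projection. For a fixed outcome $x$ it writes the acceptance probability as $\left\Vert (\langle 1|_{D}\otimes I_{G})V|\phi^{x}\rangle_{R'}|\psi^{x}\rangle_{S}\right\Vert_{2}^{2}$, introduces the encoding states of Problem~\ref{prob:eb-2} \emph{variationally} as pure states $\varphi^{x}_{G}$ on the garbage register (so they are not the prover's prepared states, contrary to your identification), and then absorbs the maximization over the prover's arbitrary prepared states $\phi^{x}_{R'}$ into a partial trace, yielding
\begin{equation}
    \mathcal{N}_{G\rightarrow S}(\cdot)=\operatorname{Tr}_{R'}\!\left[V^{\dagger}\left(|1\rangle\!\langle 1|_{D}\otimes(\cdot)_{G}\right)V\right],
\end{equation}
which is conjugation by the isometry $V^{\dagger}|1\rangle_{D}$ followed by a partial trace, hence automatically trace preserving. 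No dimension-dependent normalization ever appears, and the acceptance probability equals $f(\mathcal{N},\rho)$ exactly, so the reduction preserves the promise gap. If you want to salvage your write-up, replace the Choi-operator step with this reversal of the channel's direction.
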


    \begin{proof}
    The main idea behind the proof is to show that the acceptance probability of a general \qipeb\ problem can precisely be written as $f(\mathcal{N}_{G\rightarrow S},\rho_{S})$. This implies that an arbitrary \qipeb\ problem can be reduced to an instance of Problem~\ref{prob:eb-2}, and we argue at the end how this also implies that Problem~\ref{prob:eb-2} can be reduced to an instance of a problem in \qipeb.

    Consider a general interactive proof system in \qipeb\ that begins with the verifier preparing a bipartite pure state $\psi_{RS}$, followed by the system $R$ being sent to the prover, which subsequently performs an entanglement-breaking channel. The verifier then performs a unitary $V_{R^{\prime}S\rightarrow DG}$ and projects onto the $|1\rangle\!\langle 1|$ state of the decision qubit. Indeed, the acceptance probability is given by
    \begin{equation}
        \max_{\mathcal{E}\in\operatorname{EB}}\operatorname{Tr}[(|1\rangle\!\langle1|_{D}\otimes I_{G})\mathcal{V}_{R^{\prime}S\rightarrow DG}(\mathcal{E}_{R\rightarrow R^{\prime}}(\psi_{RS}))],
    \end{equation}
    where $\mathcal{V}_{R^{\prime}S\rightarrow DG}$ is the unitary channel corresponding to the unitary operator $V_{R^{\prime}S\rightarrow DG}$.
    By reasoning similar to that in~\eqref{eq:eb-proof-1},~\eqref{eq:eb-proof-2}, and~\eqref{eq:eb-proof-3}, we find that
    \begin{equation}
        \mathcal{E}_{R\rightarrow R^{\prime}}(\psi_{RS})=\sum_{x}p(x)\phi_{R^{\prime}}^{x}\otimes\psi_{S}^{x},
    \end{equation}
    so that the acceptance probability is equal to
    \begin{multline}
          \max_{\substack{\{  (p(x),\psi^{x})\}  _{x},\\
          \left\{  \phi^{x}\right\}_{x},\\
          \sum_{x}p(x)\psi_{S}^{x}=\psi_{S}}}
            \operatorname{Tr}\!\left[  (|1\rangle\!\langle1|_{D}\otimes I_{G})\mathcal{V}\left(  \sum_{x}p(x)\phi_{R^{\prime}}^{x}\otimes\psi_{S}^{x}\right)  \right]
          \\
          =\max_{\substack{\{  (p(x),\psi^{x})\}  _{x},\\
          \left\{  \phi^{x}\right\}_{x},\\
          \sum_{x}p(x)\psi_{S}^{x}=\psi_{S}}}
            \sum_{x}p(x)\operatorname{Tr}\!\left[  (|1\rangle\!\langle 1|_{D}\otimes I_{G})\mathcal{V}
            \left(\phi_{R^{\prime}}^{x}\otimes\psi_{S}^{x}\right)  \right] ,
    \end{multline}
    where we have used the shorthand $\mathcal{V} \equiv \mathcal{V}_{R^{\prime}S\rightarrow DG}$.
    Consider that, for all $x$,
    \begin{align}
        \operatorname{Tr}\!\left[  (|1\rangle\!\langle1|_{D}\otimes I_{G})\mathcal{V} \left(  \phi_{R^{\prime}}^{x}\otimes\psi_{S}^{x}\right)  \right] 
        &=\left\Vert \langle1|_{D}\otimes I_{G}) V|\phi^{x}\rangle_{R^{\prime}}\otimes|\psi^{x}\rangle_{S}\right\Vert _{2}^{2}\\
        &=\max_{|\varphi^{x}\rangle_{G}}\left\vert \langle1|_{D}\otimes\langle\varphi^{x}|_{G}) V |\phi^{x}\rangle_{R^{\prime}}\otimes|\psi^{x}\rangle_{S}\right\vert ^{2}\\
        &  =\max_{|\varphi^{x}\rangle_{G}}\operatorname{Tr}\!\left[  V^{\dag}(|1\rangle\!\langle1|_{D}\otimes|\varphi^{x}\rangle\!\langle\varphi^{x}|_{G})  V\phi^{x}_{R^{\prime}}\otimes|\psi^{x}\rangle\!\langle\psi^{x}|_{S}\right]  \\
        &  =\max_{|\varphi^{x}\rangle_{G}}\operatorname{Tr}\!\left[  \mathcal{W}_{G\rightarrow R^{\prime}S}(|\varphi^{x}\rangle\!\langle\varphi^{x}|_{G})\phi^{x}_{R^{\prime}}\otimes|\psi^{x}\rangle\!\langle\psi^{x}|_{S}\right],
    \end{align}
    where the isometric channel $\mathcal{W}_{G \to R'S}$ is defined as 
    \begin{equation}
        \mathcal{W}_{G\rightarrow R^{\prime}S}(\cdot)\coloneqq (V_{R^{\prime}S\rightarrow DG})^{\dag}(|1\rangle\!\langle1|_{D}\otimes(\cdot)_{G})V_{R^{\prime}S\rightarrow DG},
    \end{equation}
    and the corresponding isometry $W_{G \to R'S}$ as $W_{G \to R'S} \coloneqq (V_{R^{\prime}S\rightarrow DG})^{\dag}|1\rangle_{D}$.
    Then, the acceptance probability is given by
    \begin{equation}
    \max_{\substack{\{  (p(x),\psi^{x})\}  _{x},\\\left\{\phi^{x}\right\}  _{x},\left\{  \varphi^{x}\right\}  _{x}}}
    \left\{\begin{array}
            [c]{c}%
            \sum_{x}p(x)\operatorname{Tr}\!\left[\mathcal{W}_{G\rightarrow R^{\prime}S}(|\varphi^{x}\rangle\!\langle\varphi^{x}|_{G})\phi^{x}_{R^{\prime}}\otimes|\psi^{x}\rangle\!\langle\psi^{x}|_{S}\right]  :\\
            \sum_{x}p(x)\psi_{S}^{x}=\psi_{S}%
        \end{array}\right\}  .
    \end{equation}
    Since the optimization over $\phi^{x}_{R^{\prime}}$ is arbitrary, we can also write%
    \begin{align}
        &\max_{|\phi^{x}\rangle_{R^{\prime}}}\operatorname{Tr}\!\left[  \mathcal{W}_{G\rightarrow R^{\prime}S}(|\varphi^{x}\rangle\!\langle\varphi^{x}|_{G})\phi^{x}_{R^{\prime}}\otimes|\psi^{x}\rangle\!\langle\psi^{x}|_{S}\right] \notag \\
        &= \max_{|\phi^{x}\rangle_{R^{\prime}}}\left\vert \langle\phi^{x}|_{R^{\prime}}\otimes\langle\psi^{x}|_{S}W_{G\rightarrow R^{\prime}S}|\varphi^{x}\rangle_{G}\right\vert ^{2} \\
        &  =\left\Vert I_{R^{\prime}}\otimes\langle\psi^{x}|_{S}W_{G\rightarrow R^{\prime}S}|\varphi^{x}\rangle_{G}\right\Vert _{2}^{2} \\
        &  =\left(  \langle\varphi^{x}|_{G}\left(  W_{G\rightarrow R^{\prime}S}\right)  ^{\dag}I_{R^{\prime}}\otimes|\psi^{x}\rangle_{S}\right)   \left(I_{R^{\prime}}\otimes\langle\psi^{x}|_{S} W_{G\rightarrow R^{\prime}S}|\varphi^{x}\rangle_{G}\right)  \\
        &  =\langle\varphi^{x}|_{G}\left(W_{G\rightarrow R^{\prime}S}\right)^{\dag}\left(  I_{R^{\prime}}\otimes|\psi^{x}\rangle\!\langle\psi^{x}|_{S}\right)  W_{G\rightarrow R^{\prime}S}|\varphi^{x}\rangle_{G}\\
        &  =\operatorname{Tr}\!\left[  \left(  I_{R^{\prime}}\otimes|\psi^{x}\rangle\!\langle\psi^{x}|_{S}\right)  W_{G\rightarrow R^{\prime}S}|\varphi^{x}\rangle\!\langle\varphi^{x}|_{G}\left(  W_{G\rightarrow R^{\prime}S}\right)^{\dag}\right]  \\
        &  =\operatorname{Tr}[|\psi^{x}\rangle\!\langle\psi^{x}|_{S}\mathcal{N}_{G\rightarrow S}(|\varphi^{x}\rangle\!\langle\varphi^{x}|_{G})],
    \end{align}
    where we define the channel $\mathcal{N}_{G\rightarrow S}$ as
    \begin{equation}
        \mathcal{N}_{G\rightarrow S}(\cdot)\coloneqq \operatorname{Tr}_{R^{\prime}}[(V_{R^{\prime}S\rightarrow DG})^{\dag}(|1\rangle\!\langle1|_{D}\otimes(\cdot)_{G})V_{R^{\prime}S\rightarrow DG}].
        \label{eq:eb2-channel-def}
    \end{equation}
    Then, we find that the acceptance probability is given by
    \begin{equation}
          \max_{\substack{\{  (p(x),\psi^{x})\}  _{x},\\
          \left\{  \varphi^{x}\right\}_{x},\\
          \sum_{x}p(x)\psi_{S}^{x}=\psi_{S}}}  \sum_{x}p(x)\operatorname{Tr}[|\psi^{x}\rangle\!\langle\psi^{x}|_{S}\mathcal{N}_{G\rightarrow S}(|\varphi^{x}\rangle\!\langle\varphi^{x}|_{G})]
         =\max_{\substack{\{  (p(x),\psi^{x})\}  _{x},\left\{  \varphi^{x}\right\}_{x}\\
         \sum_{x}p(x)\psi_{S}^{x}=\psi_{S}
         }}  \sum_{x}p(x)F(\psi_{S}^{x},\mathcal{N}_{G\rightarrow S}(\varphi_{G}^{x}))  .
    \end{equation}
    This concludes the proof of the first part. 

    To see how this implies that Problem~\ref{prob:eb-2} can be realized in \qipeb, note that the circuit preparing the state $\rho_S$ prepares a purification and traces over the reference system, and the circuit to generate $\mathcal{N}_{G\to S}$ is realize by adjoining an environment system in the state $|0\rangle\!\langle 0|$, performing a unitary, and tracing over the environment. So we let the verifier prepare the purification of $\rho_S$ and this plays the role of $\psi_{RS}$ above, and the channel $\mathcal{N}_{G\to S}$ can be realized precisely as in~\eqref{eq:eb2-channel-def} with appropriate substitutions.
    \end{proof}

    \begin{remark}
    \label{rem:interp-QIP-EB2}
    The quantity in~\eqref{eq:accept-prob-qip_eb-2} can be interpreted as follows: Given a channel $\mathcal{N}$ and a source state~$\rho$, calculate the largest average ensemble fidelity attainable in reproducing the source at the output of the channel. This means it is necessary to find the ensemble decomposition $\{  (p(x),\psi^{x})\}  _{x}$ of $\rho$ as well as a set $\left\{  \varphi^{x}\right\}_{x}$ of encoding states 
    that lead to the largest ensemble fidelity (and this is what is left to the prover). This criterion is similar to one used in Schumacher data compression~\cite{PhysRevA.51.2738}, but this seems more similar to the setting of the source-channel separation theorem~\cite{6287590}, in which the goal is to transmit an information source over a quantum channel. The channel $\mathcal{N}$ here could consist of a fixed encoding $\mathcal{E}$, noisy channel $\mathcal{M}$, and fixed decoding $\mathcal{D}$, (i.e., $\mathcal{N} = \mathcal{D} \circ \mathcal{M} \circ \mathcal{E}$) and then the goal is to test how well a given fixed scheme $(\mathcal{E},\mathcal{D})$ can communicate a source $\rho$ over a channel $\mathcal{M}$, according to the ensemble fidelity criterion.
    \end{remark}

    \begin{remark}
    \label{rem:concave-closure-QIPEB2}
        We can write the expression in~\eqref{eq:accept-prob-qip_eb-2} alternatively as
        \begin{equation}
            \text{\rm Eq.~\eqref{eq:accept-prob-qip_eb-2}} = \max_{\substack{\{  (p(x),\psi^{x})\}  _{x},\\
            \sum_{x}p(x)\psi_{S}^{x}=\rho_{S}
            }}   \sum_{x}p(x)\left\|(\mathcal{N}_{G\rightarrow S})^{\dag}(\psi_{S}^{x})\right\|_\infty,
            \label{eq:concave-closure}
        \end{equation}
        where $(\mathcal{N}_{G\rightarrow S})^{\dag}$ is the Hilbert--Schmidt adjoint of the channel $(\mathcal{N}_{G\rightarrow S})^{\dag}$.
        Employing the abbreviations $\psi^{x}_{S}\equiv |\psi^{x}\rangle\!\langle\psi^{x}|_{S}$ and $\varphi^{x}_{G}\equiv |\varphi^{x}\rangle\!\langle\varphi^{x}|_{G}$, this follows because
        \begin{align}
        & \max_{\substack{\{  (p(x),\psi^{x})\}  _{x},\\
          \left\{  \varphi^{x}\right\}_{x},\\
          \sum_{x}p(x)\psi_{S}^{x}=\psi_{S}}}  \sum_{x}p(x)\operatorname{Tr}[\psi^{x}_{S} \mathcal{N}_{G\rightarrow S}(\varphi^{x}_{G})] \notag \\
          & = \max_{\substack{\{  (p(x),\psi^{x})\}  _{x},\\
          \left\{  \varphi^{x}\right\}_{x},\\
          \sum_{x}p(x)\psi_{S}^{x}=\psi_{S}}}  \sum_{x}p(x)\operatorname{Tr}[(\mathcal{N}_{G\rightarrow S})^{\dag}(\psi^{x}_{S})\varphi^{x}_{G}] \\
          & = \max_{\substack{\{  (p(x),\psi^{x})\}  _{x},
          \\
          \left\{  \varphi^{x}\right\}_{x},\\
          \sum_{x}p(x)\psi_{S}^{x}=\psi_{S}}}  \sum_{x}p(x)
          \max_{\left\{  \varphi^{x}\right\}_{x}}
          \operatorname{Tr}[(\mathcal{N}_{G\rightarrow S})^{\dag}(\psi^{x}_{S})\varphi^{x}_{G}] \\
          & = \max_{\substack{\{  (p(x),\psi^{x})\}  _{x},\\
            \sum_{x}p(x)\psi_{S}^{x}=\rho_{S}
            }}   \sum_{x}p(x)\left\|(\mathcal{N}_{G\rightarrow S})^{\dag}(\psi_{S}^{x})\right\|_\infty.
        \end{align}
        If we define the function
        \begin{equation}
            g_{\mathcal{N}}(\rho) \coloneqq \left\|(\mathcal{N}_{G\rightarrow S})^{\dag}(\rho_S)\right\|_\infty,
        \end{equation}
        then the function in~\eqref{eq:concave-closure} is known as the concave closure of $g_{\mathcal{N}}(\rho)$ and has been studied in other contexts in quantum information theory \cite[Section~2]{AFKW04}. It has an interesting dual formulation, as demonstrated in \cite[Eq.~(15)]{AFKW04}. Given the observation in~\eqref{eq:concave-closure}, we can thus conclude that, given circuits to realize the channel $\mathcal{N}$ and state $\rho$, estimating the concave closure of the function $g_{\mathcal{N}}(\rho) $ within additive error is a complete problem for {\rm \qipeb}.
    \end{remark}

    \begin{remark}
        Employing the reasoning from Remark~\ref{rem:concave-closure-QIPEB2}, we find that the acceptance probability in~\eqref{eqn:accept_prob_supp} is equal to the concave closure of the following function:
        \begin{equation}
            f(\rho_{AB}) \coloneqq \left \| \Pi^{\operatorname{sym}}_{AA'} (\rho_{AB} \otimes I_{A'}) \Pi^{\operatorname{sym}}_{AA'}\right \|_{\infty},
            \label{eq:concave-closure-sep-prob-accept-prob}
        \end{equation}
        where we used the fact that the state $\rho_S$ from Remark~\ref{rem:concave-closure-QIPEB2} is $\rho_{AB}$ and the map $\mathcal{N}_{G \to S}$ from Remark~\ref{rem:concave-closure-QIPEB2} is
        \begin{equation}
         \mathcal{N}(\sigma_{AA'B}) = \operatorname{Tr}_{A'}[\Pi^{\operatorname{sym}}_{AA'} \sigma_{AA'B}\Pi^{\operatorname{sym}}_{AA'}],   
        \end{equation}
          with adjoint
          \begin{equation}
          \mathcal{N}^{\dag}(\omega_{AB}) = \Pi^{\operatorname{sym}}_{AA'} (\omega_{AB} \otimes I_{A'}) \Pi^{\operatorname{sym}}_{AA'}.    
          \end{equation}
        Observe that the map $\rho_{AB} \mapsto \Pi^{\operatorname{sym}}_{AA'} (\rho_{AB} \otimes I_{A'}) \Pi^{\operatorname{sym}}_{AA'}$ is proportional to that used in a $1\to 2$ universal cloning machine \cite[Eq.~(17)]{SIGA05}.
        If $\rho_{AB}$ is pure, so that we write it as $\psi_{AB}$, then the following inequality holds:
        \begin{equation}
            \left \| \Pi^{\operatorname{sym}}_{AA'} (\psi_{AB} \otimes I_{A'}) \Pi^{\operatorname{sym}}_{AA'}\right \|_{\infty}
             \leq \left \| \Pi^{\operatorname{sym}}_{AA'} \right \|_{\infty}
            \left \|
            \psi_{AB} \otimes I_{A'} \right \|_{\infty}
            \left \|\Pi^{\operatorname{sym}}_{AA'}
            \right \|_{\infty}
             \leq 1,
        \end{equation}
        where we applied the multiplicativity of the spectral norm. 
        Thus, the concave closure of $f(\rho_{AB})$ satisfies $f(\rho_{AB}) \in [0,1]$. Furthermore, from Lemma~\ref{lem:alt-accept-prob-pure-states} below, we know that
        \begin{equation}
            \left \| \Pi^{\operatorname{sym}}_{AA'} (\psi_{AB} \otimes I_{A'}) \Pi^{\operatorname{sym}}_{AA'}\right \|_{\infty} 
        = 
            \frac{1}{2}\left(  1+\left\Vert \psi_{A} \right\Vert _{\infty}\right),
        \end{equation}
        showing the consistency of the claim just above~\eqref{eq:concave-closure-sep-prob-accept-prob} with Theorem~\ref{theorem:qip-eb_msf} and Eqs.~\eqref{eq:convex-decomp-max-sep-fid} and~\eqref{eq:max-sep-fid-inf-norm} in the main text.
        If $\rho_{AB}$ is a pure product state, so that we can write it as $\rho_{AB} = \phi_A \otimes \varphi_B$, then we have that
        \begin{equation}
            \left \| \Pi^{\operatorname{sym}}_{AA'} (\phi_A \otimes \varphi_B \otimes I_{A'}) \Pi^{\operatorname{sym}}_{AA'}\right \|_{\infty} = 
            \left \| \Pi^{\operatorname{sym}}_{AA'} (\phi_A  \otimes I_{A'}) \Pi^{\operatorname{sym}}_{AA'}\right \|_{\infty},
            \label{eq:spec-norm-analysis}
        \end{equation}
        and the spectral norm on the right-hand side of~\eqref{eq:spec-norm-analysis} is achieved by choosing the vector $|\phi\rangle_A \otimes |\phi\rangle_{A'}$, so that
        \begin{align}
            & (\langle\phi|_A \otimes \langle\phi|_{A'})\Pi^{\operatorname{sym}}_{AA'} (\phi_A  \otimes I_{A'}) \Pi^{\operatorname{sym}}_{AA'}(|\phi\rangle_A \otimes |\phi\rangle_{A'}) \notag 
            \\
            &  = (\langle\phi|_A \otimes \langle\phi|_{A'}) (\phi_A  \otimes I_{A'}) (|\phi\rangle_A \otimes |\phi\rangle_{A'})\\
            &  =1.
        \end{align}
    \end{remark}

    \begin{lemma}
    \label{lem:alt-accept-prob-pure-states}
        For a pure state $\psi_{AB}$, the following equality holds:
        \begin{equation}
        \left \| \Pi^{\operatorname{sym}}_{AA'} (\psi_{AB} \otimes I_{A'}) \Pi^{\operatorname{sym}}_{AA'}\right \|_{\infty} 
        = 
            \frac{1}{2}\left(  1+\left\Vert \psi_{A} \right\Vert _{\infty}\right),
        \end{equation}
        where $\psi_{A}\equiv \operatorname{Tr}_B[\psi_{AB}]$.
    \end{lemma}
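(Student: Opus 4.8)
The plan is to reduce the claimed spectral norm directly to the pure-state swap-test calculation already performed in~\eqref{eq:initial-swap-test}. Write $Y \coloneqq \psi_{AB}\otimes I_{A'}$ and let $\Pi \coloneqq \Pi^{\operatorname{sym}}_{AA'}$, understood to act as $\Pi^{\operatorname{sym}}_{AA'}\otimes I_B$ on $AA'B$, so that the operator of interest is $N \coloneqq \Pi Y \Pi$. The key observation I would make first is that, since $\psi_{AB} = |\psi\rangle\!\langle\psi|_{AB}$ is a rank-one projector, the operator $Y$ is itself an orthogonal projector: it is Hermitian and satisfies $Y^2 = \psi_{AB}^2 \otimes I_{A'} = \psi_{AB}\otimes I_{A'} = Y$. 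Consequently $N = \Pi Y Y^\dagger \Pi = (\Pi Y)(\Pi Y)^\dagger$, and the operator-norm identity $\|M M^\dagger\|_\infty = \|M\|_\infty^2$ gives $\|N\|_\infty = \|\Pi Y\|_\infty^2$.

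Next I would evaluate $\|\Pi Y\|_\infty$. Since the range of $Y$ is exactly the subspace $\{\,|\psi\rangle_{AB}\otimes|\phi\rangle_{A'} : |\phi\rangle_{A'}\in\mathcal{H}_{A'}\,\}$ and $\Pi$ is a projector, one has $\|\Pi Y\|_\infty = \max_{|\phi\rangle_{A'}}\|\Pi\,(|\psi\rangle_{AB}\otimes|\phi\rangle_{A'})\|_2$, the maximum being over unit vectors $|\phi\rangle_{A'}$; the maximization can be restricted to unit vectors in $\operatorname{range}(Y)$ because $\|Y u\|\le\|u\|$ and the bound is saturated by taking $u\in\operatorname{range}(Y)$. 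Squaring and using that $\Pi$ is a projector then yields $\|\Pi Y\|_\infty^2 = \max_{|\phi\rangle_{A'}}\langle\psi|_{AB}\langle\phi|_{A'}\,(\Pi^{\operatorname{sym}}_{AA'}\otimes I_B)\,|\psi\rangle_{AB}|\phi\rangle_{A'} = \max_{\phi}\operatorname{Tr}[(\Pi^{\operatorname{sym}}_{A'A}\otimes I_B)(\phi_{A'}\otimes\psi_{AB})]$. This last expression is precisely the maximum acceptance probability of the pure-state separability test, which was shown in~\eqref{eq:initial-swap-test} to equal $\tfrac12(1+\|\psi_A\|_\infty)$. Combining with the previous paragraph gives $\|N\|_\infty = \tfrac12(1+\|\psi_A\|_\infty)$, as claimed.

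I expect the only genuine subtlety to be the reduction in the second step, namely justifying that optimizing $\|\Pi Y u\|$ over all unit $u$ collapses to optimizing $\|\Pi(|\psi\rangle_{AB}\otimes|\phi\rangle_{A'})\|$ over unit $\phi$, which hinges on the clean description of $\operatorname{range}(Y)$ afforded by $\psi_{AB}$ being pure. The recognition that $Y$ is a projector (so that $\sqrt Y = Y$ and $N = (\Pi Y)(\Pi Y)^\dagger$) is the crux that makes the whole argument short; without it one is forced into a direct computation, expanding $|\Phi\rangle$ and $|\psi\rangle_{AB}$ in a Schmidt basis over the symmetric subspace $\operatorname{range}(\Pi)$ and applying a weighted Cauchy–Schwarz inequality column by column, which reproduces the same value $\tfrac12(1+\|\psi_A\|_\infty)$ but with considerably more bookkeeping. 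I would keep that Schmidt-basis computation in reserve as a self-contained fallback in case a reader prefers to avoid appealing to~\eqref{eq:initial-swap-test}.
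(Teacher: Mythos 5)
Your proof is correct, and although it opens with the same move as the paper's, it finishes along a genuinely different route. Both arguments exploit the fact that $Y \coloneqq \psi_{AB}\otimes I_{A'}$ is a projector together with the $C^*$-identity $\left\|MM^\dagger\right\|_\infty = \left\|M\right\|_\infty^2 = \left\|M^\dagger M\right\|_\infty$: you apply it with $M = \Pi Y$ to get $\left\|\Pi Y \Pi\right\|_\infty = \left\|\Pi Y\right\|_\infty^2$, whereas the paper uses it to pass to $\left\|\Pi Y \Pi\right\|_\infty = \left\|Y \Pi Y\right\|_\infty$ (its first displayed equality, stated without comment). From there the paper performs an explicit operator computation: expanding $\Pi^{\operatorname{sym}}_{AA'} = \frac{1}{2}\left( I_{AA'} + F_{AA'}\right)$ and writing $|\psi\rangle_{AB}$ in its Schmidt decomposition, it shows $Y\left(F_{AA'}\otimes I_B\right)Y = \psi_{AB}\otimes\psi_{A'}$, hence $Y\Pi Y = \psi_{AB}\otimes\frac{1}{2}\left(I_{A'}+\psi_{A'}\right)$, from which the norm is read off. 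You instead avoid all Schmidt-decomposition bookkeeping by observing that the range of $Y$ consists precisely of the product vectors $|\psi\rangle_{AB}\otimes|\phi\rangle_{A'}$, so that $\left\|\Pi Y\right\|_\infty^2$ collapses to the variational expression $\max_{\phi}\operatorname{Tr}[(\Pi_{A'A}^{\operatorname{sym}}\otimes I_B)(\phi_{A'} \otimes \psi_{AB})]$, which is exactly the pure-state swap-test acceptance probability already evaluated in~\eqref{eq:initial-swap-test}; your range-reduction step (restricting the supremum to unit vectors in $\operatorname{range}(Y)$, justified by $\|Yu\|\le\|u\|$ with saturation on the range) is sound. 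What each approach buys: yours is shorter, reuses existing material, and makes the conceptual content transparent---the lemma is literally the statement that the norm of the sandwiched operator equals the optimal acceptance probability of the pure-state separability test over prover inputs. The paper's route is self-contained within the lemma and yields a strictly stronger conclusion, since it identifies the full operator $Y\Pi Y = \psi_{AB}\otimes\frac{1}{2}\left(I_{A'}+\psi_{A'}\right)$, and hence its entire spectrum, not merely its largest eigenvalue. There is no gap in your argument.
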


    \begin{proof}
Consider that
\begin{multline}
\left\Vert \left(  \Pi_{AA^{\prime}}^{\text{sym}}\otimes I_{B}\right)  \left(
|\psi\rangle\!\langle\psi|_{AB}\otimes I_{A^{\prime}}\right)  \left(
\Pi_{AA^{\prime}}^{\text{sym}}\otimes I_{B}\right)  \right\Vert _{\infty}
\\
=\left\Vert \left(  |\psi\rangle\!\langle\psi|_{AB}\otimes I_{A^{\prime}%
}\right)  \left(  \Pi_{AA^{\prime}}^{\text{sym}}\otimes I_{B}\right)  \left(
|\psi\rangle\!\langle\psi|_{AB}\otimes I_{A^{\prime}}\right)  \right\Vert
_{\infty}.%
\end{multline}
Now consider that
\begin{align}
& \left(  |\psi\rangle\!\langle\psi|_{AB}\otimes I_{A^{\prime}}\right)  \left(
\Pi_{AA^{\prime}}^{\text{sym}}\otimes I_{B}\right)  \left(  |\psi
\rangle\!\langle\psi|_{AB}\otimes I_{A^{\prime}}\right)  \nonumber\\
& =\left(  |\psi\rangle\!\langle\psi|_{AB}\otimes I_{A^{\prime}}\right)  \left(
\frac{I_{AA^{\prime}}+F_{AA^{\prime}}}{2}\otimes I_{B}\right)  \left(
|\psi\rangle\!\langle\psi|_{AB}\otimes I_{A^{\prime}}\right)   \\
& =\frac{1}{2}\left(  |\psi\rangle\!\langle\psi|_{AB}\otimes I_{A^{\prime}%
}\right)+\frac{1}{2}\left(  |\psi\rangle\!\langle\psi|_{AB}\otimes
I_{A^{\prime}}\right)  \left(  F_{AA^{\prime}}\otimes I_{B}\right)  \left(
|\psi\rangle\!\langle\psi|_{AB}\otimes I_{A^{\prime}}\right)  .
\end{align}
Then writing the Schmidt decomposition of $|\psi\rangle_{AB}$ as $|\psi
\rangle_{AB}=\sum_{i}\sqrt{\lambda_{i}}|i\rangle_{A}|i\rangle_{B}$, we find that
\begin{align}
    & \left(  |\psi\rangle\!\langle\psi|_{AB}\otimes I_{A^{\prime}}\right)  \left(
    F_{AA^{\prime}}\otimes I_{B}\right)  \left(  |\psi\rangle\!\langle\psi
    |_{AB}\otimes I_{A^{\prime}}\right)  \nonumber\\
    &=\sum_{i,i^{\prime},j,j^{\prime}}\sqrt{\lambda_{i}\lambda_{i^{\prime}}%
    }\left(  |\psi\rangle\!\langle i|_{A}\langle i|_{B}\otimes|j\rangle\!\langle
    j|_{A^{\prime}}\right)  \left(  F_{AA^{\prime}}\otimes I_{B}\right) \left(
    |i^{\prime}\rangle_{A}|i^{\prime}\rangle_{B}\langle\psi|_{AB}\otimes
    |j^{\prime}\rangle\!\langle j^{\prime}|_{A^{\prime}}\right)  \\
    &=\sum_{i,i^{\prime},j,j^{\prime}}\sqrt{\lambda_{i}\lambda_{i^{\prime}}}\left(  |\psi\rangle\!\langle i|_{A}\langle i|_{B}\otimes|j\rangle\!\langle
    j|_{A^{\prime}}\right) \left(  |j^{\prime}\rangle_{A}|i^{\prime}\rangle_{B}\langle\psi|_{AB}\otimes|i^{\prime}\rangle\!\langle j^{\prime}|_{A^{\prime}%
    }\right)  \\
    &=\sum_{i,i^{\prime},j,j^{\prime}}\sqrt{\lambda_{i}\lambda_{i^{\prime}}}%
|\psi\rangle\!\langle i|j^{\prime}\rangle_{A}\langle i|i^{\prime}\rangle
_{B}\langle\psi|_{AB}\otimes|j\rangle\!\langle j|i^{\prime}\rangle\!\langle
j^{\prime}|_{A^{\prime}}\\
& =\sum_{i}\lambda_{i}|\psi\rangle\!\langle\psi|_{AB}\otimes|i\rangle\!\langle
i|_{A^{\prime}} \\
& =|\psi\rangle\!\langle\psi|_{AB}\otimes\sum_{i}\lambda_{i}|i\rangle\!\langle
i|_{A^{\prime}} \\ & =|\psi\rangle\!\langle\psi|_{AB}\otimes\psi_{A^{\prime}}.
\end{align}
Then%
\begin{align}
&\left(  |\psi\rangle\!\langle\psi|_{AB}\otimes I_{A^{\prime}}\right)  \left(
\Pi_{AA^{\prime}}^{\text{sym}}\otimes I_{B}\right)  \left(  |\psi
\rangle\!\langle\psi|_{AB}\otimes I_{A^{\prime}}\right)  \nonumber\\
& =\frac{1}{2}\left(  |\psi\rangle\!\langle\psi|_{AB}\otimes I_{A^{\prime}%
}\right)  +|\psi\rangle\!\langle\psi|_{AB}\otimes\frac{1}{2}\psi_{A^{\prime}}\\
& =|\psi\rangle\!\langle\psi|_{AB}\otimes\frac{1}{2}\left(  I_{A^{\prime}}%
+\psi_{A^{\prime}}\right)  ,
\end{align}
and we conclude that%
\begin{align}
& \left\Vert \left(  \Pi_{AA^{\prime}}^{\text{sym}}\otimes I_{B}\right)
\left(  |\psi\rangle\!\langle\psi|_{AB}\otimes I_{A^{\prime}}\right)  \left(
\Pi_{AA^{\prime}}^{\text{sym}}\otimes I_{B}\right)  \right\Vert _{\infty
}\nonumber\\
& =\left\Vert |\psi\rangle\!\langle\psi|_{AB}\otimes\frac{1}{2}\left(
I_{A^{\prime}}+\psi_{A^{\prime}}\right)  \right\Vert _{\infty}\\
& =\frac{1}{2}\left(  1+\left\Vert \psi_{A^{\prime}}\right\Vert _{\infty
}\right)  
 \\
 &=\frac{1}{2}\left(  1+\left\Vert \psi_{A}\right\Vert _{\infty}\right)  .
\end{align}
This concludes the proof.
    \end{proof}
    
\section{Placement of \texorpdfstring{\qipeb}{Lg}}
\label{appendix:complexity_placements}

    In this appendix, we establish the following containments:
    \begin{equation}
    \text{QAM}, \text{QSZK} \subseteq \text{\qipeb}.    
    \end{equation}
    See Figure~\ref{fig:placement} for a detailed diagram.

    \subsection{QAM \texorpdfstring{$\subseteq$}{Lg} \texorpdfstring{\qipeb}{Lg}}

    First, recall that QAM consists of the verifier selecting a classical letter $x$ uniformly at random, sending the choice to the prover, who then sends back a pure state $\psi_x$ to the verifier, who finally performs an efficient measurement to decide whether to accept the computation~\cite{marriott2004quantum}. Note that QAM contains QMA~\cite{marriott2004quantum}.
    
    To see the containment QAM $\subseteq$ \qipeb, consider that the verifier's first circuit in \qipeb\ can consist of preparing a random classical bitstring in a system $R$. The verifier sends system $R$ to the prover. Then, the prover's action amounts to preparing some state that gets returned to the verifier. The rest of the protocol then simulates a QAM protocol.

    \subsection{QSZK \texorpdfstring{$\subseteq$}{Lg} \texorpdfstring{\qipeb}{Lg}}

    Quantum statistical zero-knowledge (QSZK) consists of all problems that can be solved by the interaction between a quantum verifier and a quantum prover, such that the verifier accumulates statistical evidence about the answer to a decision, but does not learn anything other than the answer by interacting with the prover~\cite{W02, watrous2006zero}. A complete problem for this class is quantum state distinguishability, in which the goal is to decide whether two states $\rho_0$ and $\rho_1$, generated by quantum circuits, are far or close in trace distance~\cite{W02}. This is a nice problem for understanding the basics of the QSZK complexity class: the interaction begins with the verifier picking one of the states uniformly at random, recording the choice as a bit $x$, and then sending the chosen state~$\rho_x$ to the prover over a quantum channel. The prover can then perform the optimal Helstrom measurement~\cite{H67, H69} to distinguish the states, which has success probability equal to
    \begin{equation}
        p_{\text{succ}}\coloneqq \frac{1}{2}\left(1 + \frac{1}{2}\left \|  \rho_0 - \rho_1\right\|_1\right).
    \end{equation}
    The Helstrom measurement leads to a decision bit $y$, which the prover sends back to the verifier over a quantum channel (here, a single classical bit channel would suffice). The verifier then accepts if $x = y$, and the probability that this happens is equal to $p_{\text{succ}}$. By repeating this protocol a polynomial number of times and invoking the error-reduction protocol from~\cite{W02}, it follows that the verifier can make the completeness and soundness probabilities exponentially close to one and zero, respectively, to have essentially zero error probability in the final decision about whether the states are near or far in trace distance. Finally, the interaction has a ``zero knowledge'' aspect because the verifier only learns the bit of the prover and nothing about how to distinguish the states. 

    Since quantum state distinguishability is a complete problem for QSZK and the interaction described above can be performed in \qipeb, the containment QSZK $\subseteq$ \qipeb\ follows.

\section{Local Reward Function}
\label{appendix:local_cost}

    In this appendix, we develop a local reward function as an alternative to the global reward function considered in the main text, i.e., the acceptance probability in  Theorem~\ref{theorem:global_cost_func}. 
    The acceptance probability in Theorem~\ref{theorem:global_cost_func} can be considered a global reward function because it corresponds to the probability of measuring zero in every register. As indicated in~\cite{Cerezo2021a}, it is helpful to employ a local reward function to mitigate the barren plateau problem~\cite{McClean2018}, which plagues all variational quantum algorithms. 
    
    Let us define the local and global reward functions. Let $Z_i$ be the event of measuring zero in the $i$-th register. We then set the local reward function to be the probability of measuring zero in a register chosen uniformly at random; that is, it is given by the following:
    \begin{equation}
        L \equiv \frac{1}{n} \sum_i \Pr\!\left(Z_i\right).
    \end{equation}
    The event of measuring all zeros is given by $\bigcap_i Z_i$, and the probability that this event occurs is $G \equiv \Pr\!\left(\bigcap_i Z_i\right)$, which is what we used in the main text as the global reward function.
    
    We are interested in determining inequalities related to the global and local reward functions, and the following analysis employs the same ideas used in \cite[Appendix C]{Khatri2019quantumassisted}. Using DeMorgan's laws, we find that
    \begin{equation}
        \Pr\!\left(\bigcap_i Z_i\right) = \Pr\!\left(\left(\bigcup_i Z_i^c\right)^c\right) = 1- \Pr\!\left(\bigcup_i Z_i^c\right).
    \end{equation}
    We can then use the union bound to conclude that
    \begin{equation}
        \Pr\!\left(\bigcap_i Z_i\right) = 1- \Pr\!\left(\bigcup_i (Z_i)^c\right) \geq 1- \sum_i \Pr\!\left((Z_i)^c\right).
    \end{equation}
    Finally, consider that
    \begin{align}
    \label{eqn:lower_bnd_local_cost}
        G & = \Pr\!\left(\bigcap_i Z_i\right) \\
        & \geq 1- \sum_i \Pr\!\left(Z_i^c\right) \\
        & = \sum_i \Pr\!\left(Z_i\right) - (n-1)\\
        & = n L - (n-1)\\
        & = n (L-1) + 1.
    \end{align}

    We can also derive an upper bound on the global reward function in terms of the local reward function. Recall the following inequality, which holds for every set $\{A_1,A_2,\ldots,A_n\}$ of events:
    \begin{equation}
        \Pr\!\left(\bigcup_i A_i\right) \geq \frac{1}{n} \sum_i \Pr\!\left(A_i\right).
    \end{equation}
    Setting $A_i=Z_i^c$, we get
    \begin{equation}
        \Pr\!\left(\bigcup_i Z_i^c\right) \geq \frac{1}{n} \sum_i \Pr\!\left(Z_i^c\right).
    \end{equation}
    Using DeMorgan's laws, we obtain the desired upper bound as follows:
    \begin{align}
        G = \Pr\!\left(\bigcap_i Z_i\right) & \leq 1-\frac{1}{n} \sum_i \left(1-\Pr\!\left(Z_i\right)\right)\\
        &= \frac{1}{n} \sum_i \Pr\!\left(Z_i\right) = L.
    \end{align}

    In summary, we have established the following bounds:
    \begin{equation}
          n (L-1) + 1 \leq  G \leq L,
    \end{equation}
    so that $G = 1 $ if and only if $L=1$. Since we always have $G \in [ 0,1]$, the lower bound is only nontrivial if $L$ is sufficiently large, i.e., if $L \geq 1 - \frac{1}{n}$.    
\end{document}